\numberwithin{equation}{section}
\numberwithin{figure}{section}
\theoremstyle{plain}
\newtheorem*{thm*}{\protect\theoremname}
\theoremstyle{plain}
\newtheorem{thm}{\protect\theoremname}
\theoremstyle{plain}
\newtheorem{lem}{\protect\lemmaname}
\theoremstyle{remark}
\newtheorem{rem}[lem]{\protect\remarkname}
\theoremstyle{plain}
\newtheorem{prop}[lem]{\protect\propositionname}
\theoremstyle{plain}
\newtheorem{cor}[lem]{\protect\corollaryname}
\theoremstyle{plain}
\theoremstyle{plain}
\theoremstyle{plain}
\theoremstyle{plain}
\theoremstyle{plain}
\theoremstyle{plain}
\newcommand{\SLE}{\mathrm{SLE}}
\newcommand{\SLEk}{\mathrm{SLE}_{\kappa}}
\newcommand{\Primary}{\Psi}
\newcommand{\PrimaryBlock}[2]{\Psi_{#1}^{#2}}
\newcommand{\sigmaseq}{\bar{\sigma}}
\newcommand{\sF}{\mathcal{F}}
\newcommand{\sZ}{\mathcal{Z}}
\newcommand{\sS}{\mathcal{S}}
\newcommand{\bR}{\mathbb{R}}
\newcommand{\R}{\bR}
\newcommand{\bZ}{\mathbb{Z}}
\newcommand{\bN}{\mathbb{N}}
\newcommand{\bZpos}{\mathbb{Z}_{> 0}}
\newcommand{\bZnn}{\mathbb{Z}_{\geq 0}}
\newcommand{\Zpos}{\bZpos}
\newcommand{\Znn}{\bZnn}
\newcommand{\bQ}{\mathbb{Q}}
\newcommand{\bC}{\mathbb{C}}
\newcommand{\ii}{\mathfrak{i}}
\newcommand{\Mob}{\mu}
\newcommand{\confmap}{\phi}
\newcommand{\pder}[1]{\frac{\partial}{\partial#1}}
\newcommand{\pdder}[1]{\frac{\partial^{2}}{\partial#1^{2}}}
\newcommand{\set}[1]{\left\{  #1\right\}  }
\newcommand{\slLie}{\mathfrak{sl}}
\newcommand{\Uqsltwo}{{U}_{q}(\mathfrak{sl}_{2})}
\newcommand{\Hcp}{\Delta}
\newcommand{\qnum}[1]{\left[#1\right] }
\newcommand{\Wd}{\mathsf{M}}
\newcommand{\HWsp}{\mathsf{H}}
\newcommand{\Wbas}{\mathrm{e}}
\newcommand{\Sbas}{\mathrm{s}}
\newcommand{\TRbas}{\mathrm{t}}
\newcommand{\Qrep}{\mathcal{Q}}
\newcommand{\dualbrakets}[1]{\left\langle #1 \right\rangle}
\newcommand{\dualpairing}[2]{\dualbrakets{ #1 , \, #2 }}
\newcommand{\Coblobastwodim}{\mathrm{u}}
\newcommand{\Catalan}{\mathrm{C}}
\newcommand{\KWleq}{\stackrel{\scriptscriptstyle{()}}{\scriptstyle{\longleftarrow}}} 
\newcommand{\Mmat}{\mathscr{M}}
\newcommand{\genMmat}{\mathfrak{M}}
\newcommand{\genMinv}{\mathfrak{M}^{-1}}
\newcommand{\dmn}{\mathrm{dim}}
\newcommand{\tens}{\otimes}
\newcommand{\id}{\mathrm{id}}
\newcommand{\isom}{\cong}
\newcommand{\chamber}{\mathfrak{X}}
\newcommand{\PartF}{\sZ}
\newcommand{\Sol}{\sS}
\newcommand{\ConfBlockFun}{\mathcal{U}}
\newcommand{\hwvec}{\mathrm{w}}
\newcommand{\twoFone}{ {{}_2 F_1} }
\newcommand{\DP}{\mathrm{DP}}
\newcommand{\DPleq}{\preceq} 
\newcommand{\wedgeat}[1]{\lozenge_#1} 
\newcommand{\upwedgeat}[1]{\wedge^#1}
\newcommand{\downwedgeat}[1]{\vee_#1}
\newcommand{\upwedge}{\wedge}
\newcommand{\downwedge}{\vee}
\newcommand{\upslope}{\nearrow}
\newcommand{\downslope}{\searrow}
\newcommand{\removewedge}[1]{\setminus \wedgeat{#1}}
\newcommand{\removeupwedge}[1]{\setminus \upwedgeat{#1}}
\newcommand{\removedownwedge}[1]{\setminus \downwedgeat{#1}}
\newcommand{\slopeat}[1]{\times_#1} 
\newcommand{\nestedtilingof}{T_0}
\newcommand{\CItilingsof}{\mathcal{C}}
\newcommand{\walk}{\alpha}
\newcommand{\emptywalk}{{(0)}}
\newcommand{\nodef}{s}
\newcommand{\NormalizationConstant}{c}
\newcommand*{\centerfloat}{%
  \parindent \z@
  \leftskip \z@ \@plus 1fil \@minus \textwidth
  \rightskip\leftskip
  \parfillskip \z@skip}
\providecommand{\corollaryname}{Corollary}
\providecommand{\lemmaname}{Lemma}
\providecommand{\propositionname}{Proposition}
\providecommand{\remarkname}{Remark}
\providecommand{\theoremname}{Theorem}
\providecommand{\conjecturename}{Conjecture}
\definecolor{kallecol}{rgb}{.75,.0,.55}
\definecolor{allucol}{rgb}{.3,.6,.2}
\definecolor{blue}{rgb}{0,0,1}
\definecolor{red}{rgb}{1,0,0}
\definecolor{green}{rgb}{0,1,0}
\begin{document}


\author{A.~Karrila, K.~Kytölä, and E.~Peltola}

\

\vspace{2.5cm}

\begin{center}
\LARGE \bf \scshape {
Conformal blocks, $q$-combinatorics, \\ and quantum group symmetry
}
\end{center}

\vspace{0.75cm}

\begin{center}
{\large \scshape Alex Karrila}\\
{\footnotesize{\tt alex.karrila@aalto.fi}}\\
{\small{Department of Mathematics and Systems Analysis}}\\
{\small{P.O. Box 11100, FI-00076 Aalto University, Finland}}\bigskip{}
\\
{\large \scshape Kalle Kyt\"ol\"a}\\
{\footnotesize{\tt kalle.kytola@aalto.fi}}\\
{\small{Department of Mathematics and Systems Analysis}}\\
{\small{P.O. Box 11100, FI-00076 Aalto University, Finland}\\
\url{https://math.aalto.fi/~kkytola/}}\bigskip{}
\\
{\large \scshape Eveliina Peltola}\\
{\footnotesize{\tt eveliina.peltola@unige.ch}}\\
{\small{Section de Math\'{e}matiques, Universit\'{e} de Gen\`{e}ve,}}\\
{\small{2--4 rue du Li\`{e}vre, Case Postale 64, 1211 Gen\`{e}ve 4, Switzerland}}
\end{center}

\vspace{0.75cm}


\begin{center}
\begin{minipage}{0.85\textwidth} \footnotesize
{\scshape Abstract.}
In this article, we find a $q$-analogue for Fomin's formulas.
The original Fomin's formulas relate determinants of random walk
excursion kernels to loop-erased random walk partition functions,
and our formulas analogously relate conformal block functions of
conformal field theories to pure partition functions of multiple SLE
random curves.
We also provide a construction of the conformal block functions by a
method based on a quantum group, the $q$-deformation of $\slLie_2$.
The construction both highlights the representation theoretic origin of conformal
block functions and explains the appearance of $q$-combinatorial formulas.
\end{minipage}
\end{center}


\bigskip{}

%


\section{Introduction}

Conformal blocks are fundamental building blocks of correlation functions of conformal field theories.
In this article, we study the combinatorics of conformal block functions associated to 
the simplest non-trivial primary fields in conformal field theories (CFT).

Following the conventions in the literature about random conformally invariant curves of $\SLEk$ type, we parameterize
the central charge of the 
CFT via a parameter $\kappa > 0$, as
\begin{align} \label{eq: central charge parametrization}
c=\; & \frac{(3\kappa-8)(6-\kappa)}{2\kappa} . 
\end{align}
We assume $\kappa \in (0,8) \setminus \bQ$.
The primary fields whose conformal blocks we study are of conformal weight
\begin{align*}
h = \frac{6 - \kappa}{2 \kappa} .
\end{align*}
This is the first non-trivial conformal weight in the Kac table~\cite{Kac-ICM_proceedings_Helsinki}, and
fields of this type appear in particular as the boundary changing fields that
create the tip of an $\SLEk$ type curve~\cite{BB-SLE_martingales_and_Virasoro_algebra, BB-CFTs_of_SLEs,
FW-conformal_restriction_highest_weight_representations_and_SLE,
BBK-multiple_SLEs, Dubedat-commutation, Kytola-local_mgales, KM-GFF_and_CFT,
Dubedat-SLE_and_Virasoro_representations_localization}.

We cover some background on conformal blocks in CFT 
in Section~\ref{sec: conformal block functions background}.
For all other parts of the article, a few key properties
of conformal block functions can be taken as their definition.
Namely, the partial differential equations, M\"obius covariance, and
asymptotics given precisely in Section~\ref{sub: defining conformal block functions}
serve as their defining properties. 

\begin{figure}
\centerfloat
\includegraphics[width = 0.45 \textwidth]{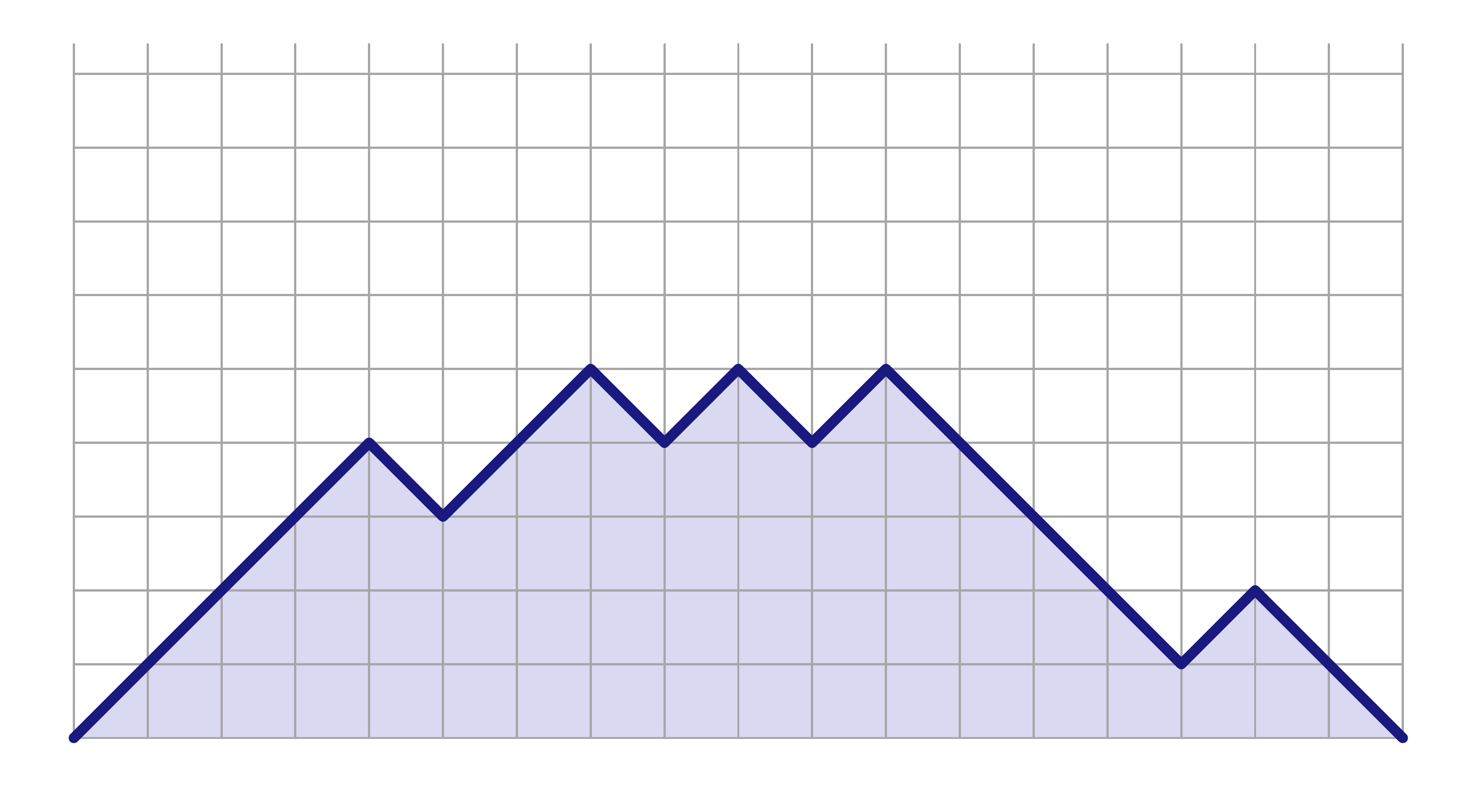} \qquad
\includegraphics[width = 0.45 \textwidth]{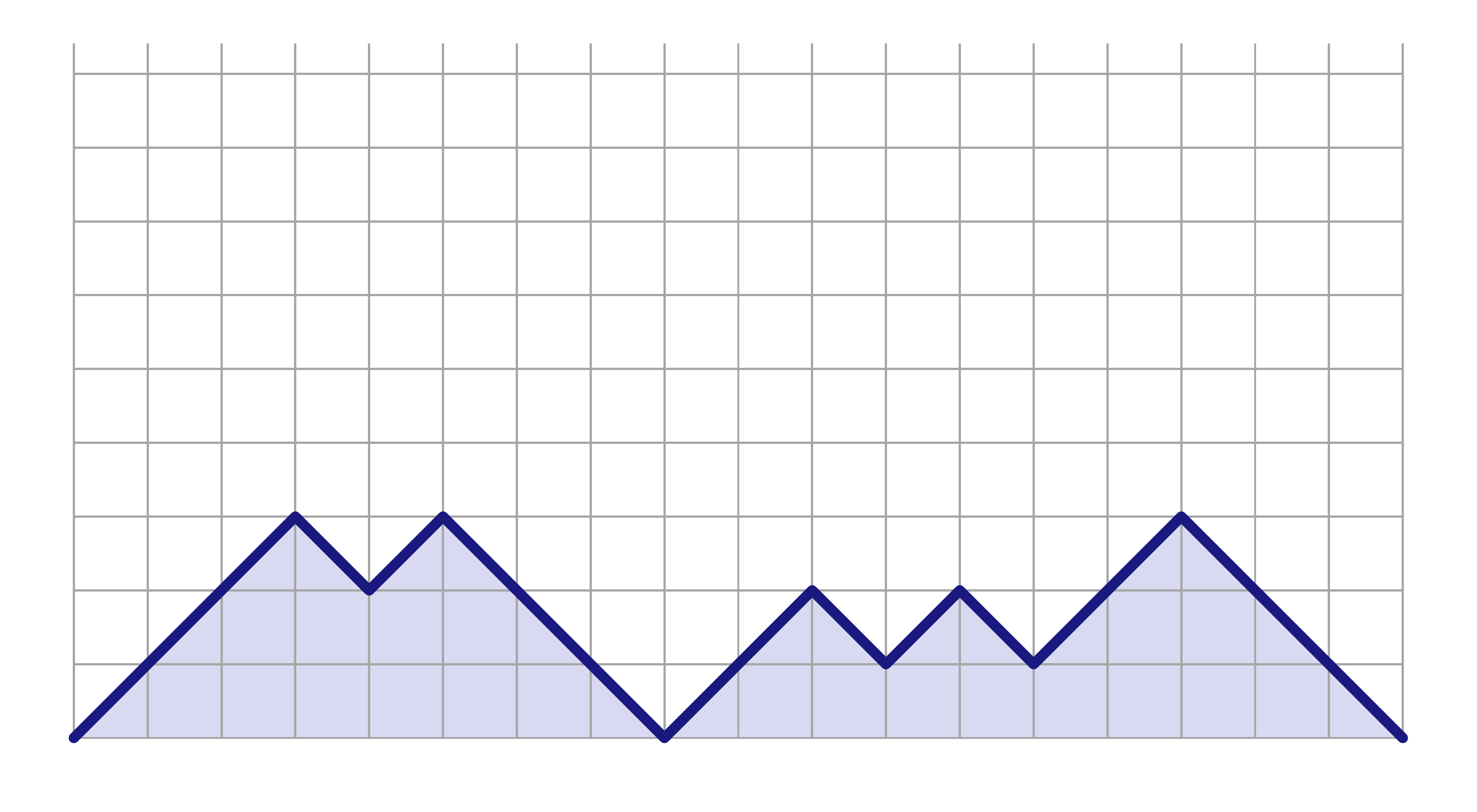} \\
\caption{\label{fig: Dyck paths}
Examples of Dyck paths.
}
\end{figure}
The starting point for the combinatorics
is the observation that the conformal block functions are functions
\begin{align*}
\ConfBlockFun_\alpha (x_1 , \ldots , x_{2N})
\end{align*}
of an even number $n = 2N$ of variables, which are
indexed by Dyck paths $\alpha$ of length $2N$, that is,
sequences $\alpha = (\alpha(0) , \alpha(1) , \ldots , \alpha(2N))$ of non-negative integers
with $|\alpha(j) - \alpha(j-1)| = 1$ for all $j$ and $\alpha(0) = \alpha(2N) = 0$.
Figure~\ref{fig: Dyck paths} depicts examples of Dyck paths.

Our first main result, Theorem~\ref{thm: change of basis theorem} given in Section~\ref{sec: change of basis}, 
relates the conformal block functions via explicit $q$-combinatorial formulas to
another family of functions: the pure partition functions of multiple $\SLE$s~\cite{KP-pure_partition_functions_of_multiple_SLEs},
whose precise definition we recall in Section~\ref{sub: pure partition functions}.
The pure partition functions are a key ingredient in the construction of joint laws of $N$
curves of $\SLEk$ type, with deterministic
connectivity~\cite{BBK-multiple_SLEs, KP-pure_partition_functions_of_multiple_SLEs,
KKP-boundary_correlations_in_planar_LERW_and_UST, PH-Global_multiple_SLEs_and_pure_partition_functions}.
They are indexed by the planar connectivities, or equivalently, 
by Dyck paths.
In the case $\kappa = 2$, 
a similar relation between the conformal block functions and the pure partition
functions arises as a consequence of Fomin's formulas~\cite{Fomin-LERW_and_total_positivity}
for loop-erased random walks, as explained in~\cite{KKP-boundary_correlations_in_planar_LERW_and_UST},
and our result can be seen as a $q$-analogue of Fomin's formulas.

Specifically, we show that for fixed $N$,
the conformal block functions and the multiple $\SLE$ pure partition functions 
form two bases of the same function space of dimension given by the $N$:th Catalan number 
$\Catalan_N = \frac{1}{N+1} \binom{2N}{N}$,
and we give an explicit combinatorial formula for the change of basis matrix $\genMmat$
from the latter basis to the former, as well as for the inverse $\genMinv$.
The rows and columns of both $\genMmat$ and $\genMinv$ are indexed by Dyck paths,
and the entries are rational functions of $q = e^{\ii 4 \pi / \kappa}$.
The non-zero entries of $\genMmat$ appear where a binary relation introduced in
\cite{KW-double_dimer_pairings_and_skew_Young_diagrams, SZ-path_representations_of_maximal_paraboloc_KL_polynomials}
holds between the two Dyck paths, whereas the
non-zero entries of $\genMinv$ appear where the two Dyck paths are in the
natural partial order. Combinatorial formulas for the matrices 
are
given in Section~\ref{sub: change of basis results}, but for small values of $N$
their forms are already illustrated in
Figures~\ref{fig: quantum KW matrices only} and~\ref{fig: quantum CIDT matrices only}.


\begin{figure}
\centerfloat
\includegraphics[width = 0.336 \textwidth]{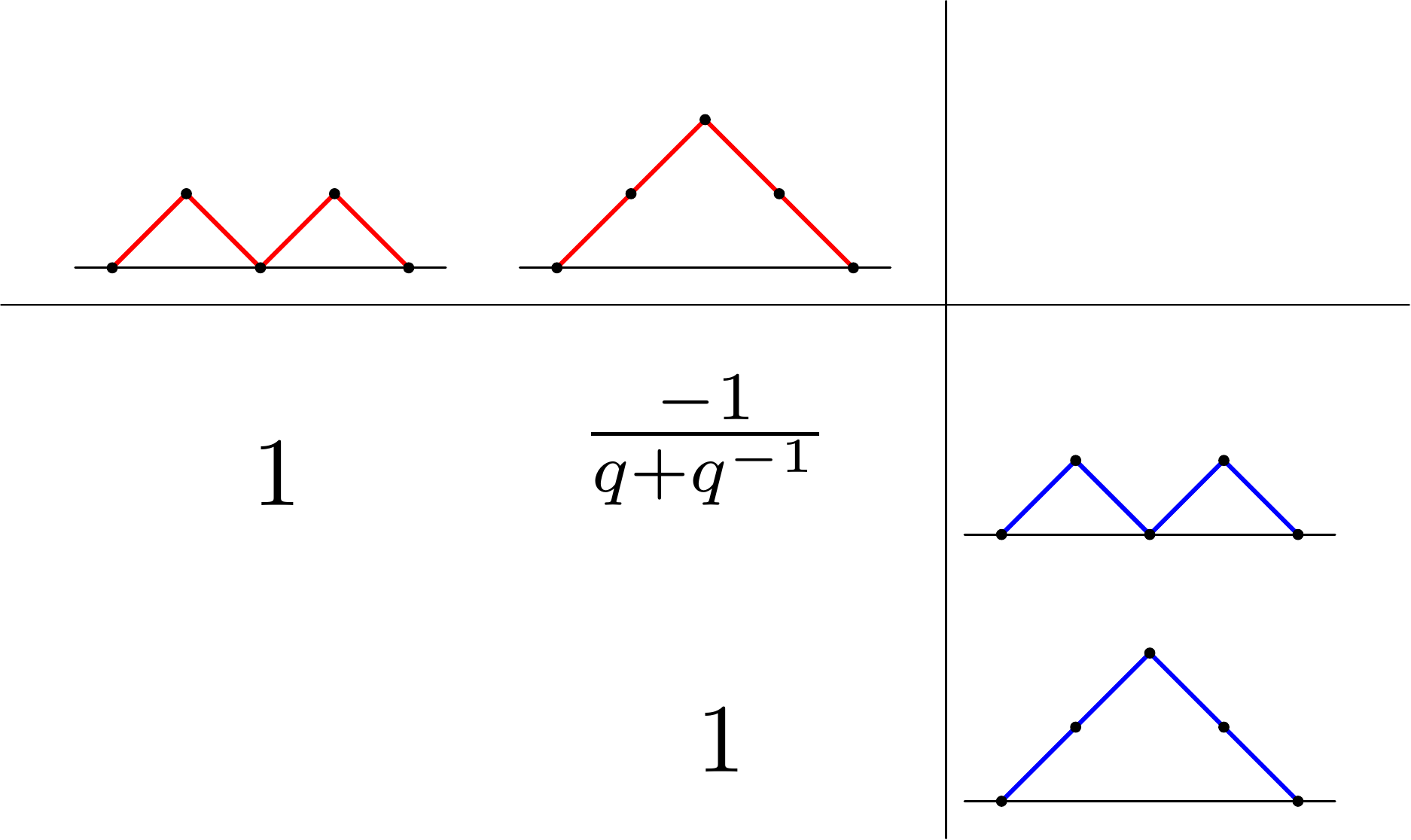} 
\hspace{0.5 cm}
\includegraphics[width = 0.54\textwidth]{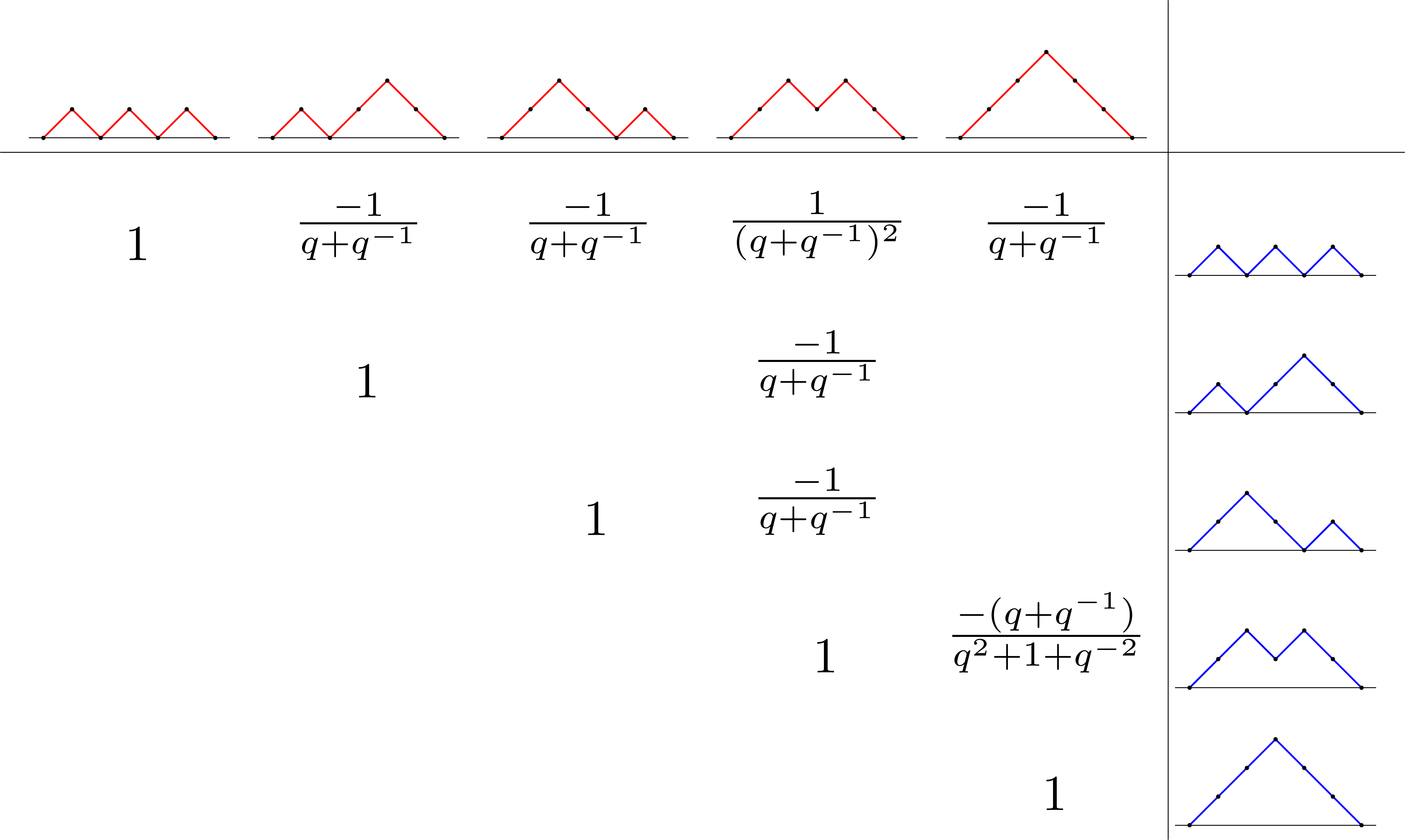} 
\caption{\label{fig: quantum KW matrices only}
The rows and columns of the matrix $\genMmat$ are indexed by Dyck
paths of $2N$ steps. The non-zero entries appear where a certain binary
relation --- the parenthesis reversal relation --- holds between the two Dyck paths.
This figure gives the explicit matrix elements
of $\genMmat$ in terms of $q = e^{\ii 4 \pi / \kappa}$
for 
$N=2$ and $N=3$.
}
\end{figure}
\begin{figure}
\centerfloat
\includegraphics[width = 0.336 \textwidth]{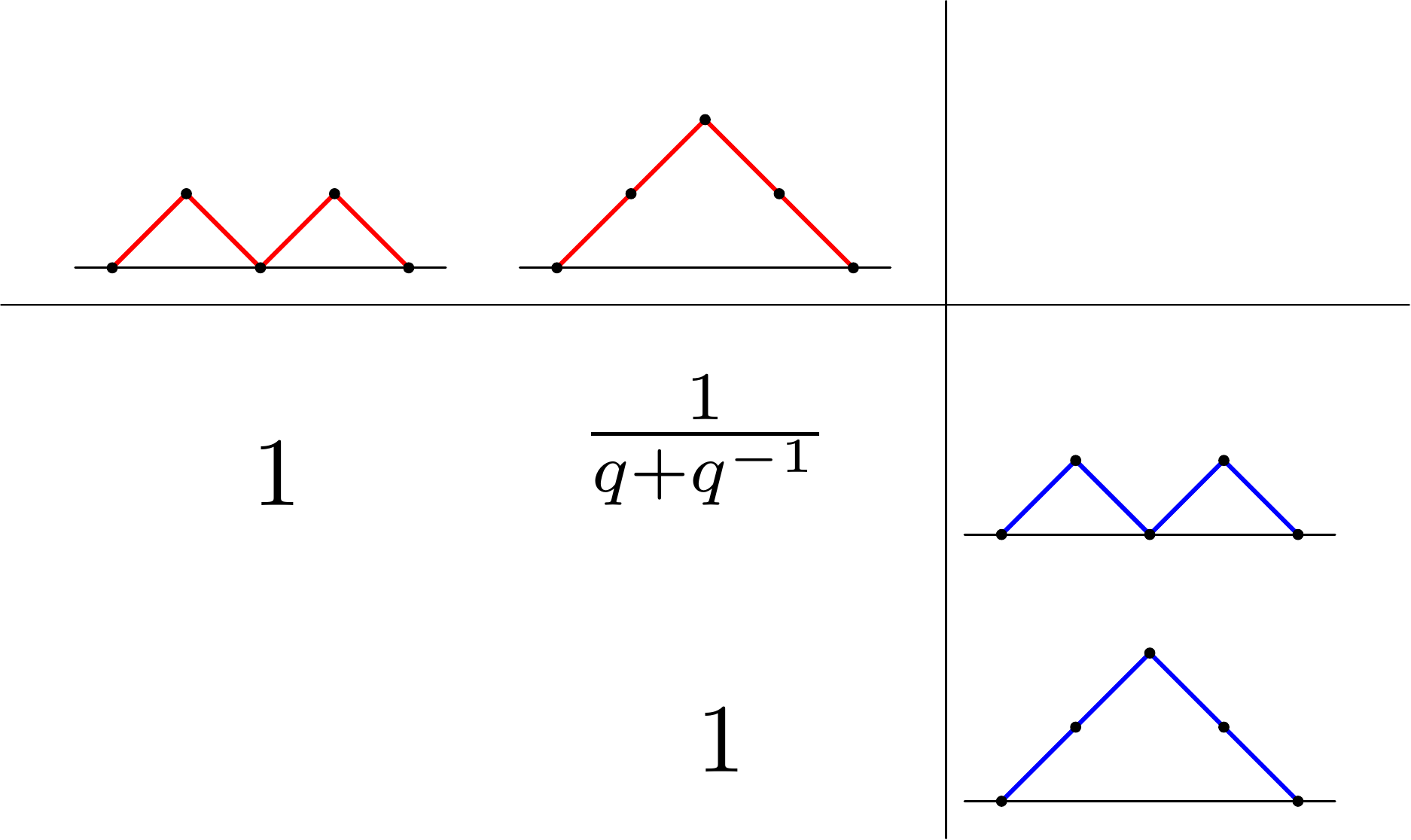} 
\hspace{0.5 cm}
\includegraphics[width = 0.54\textwidth]{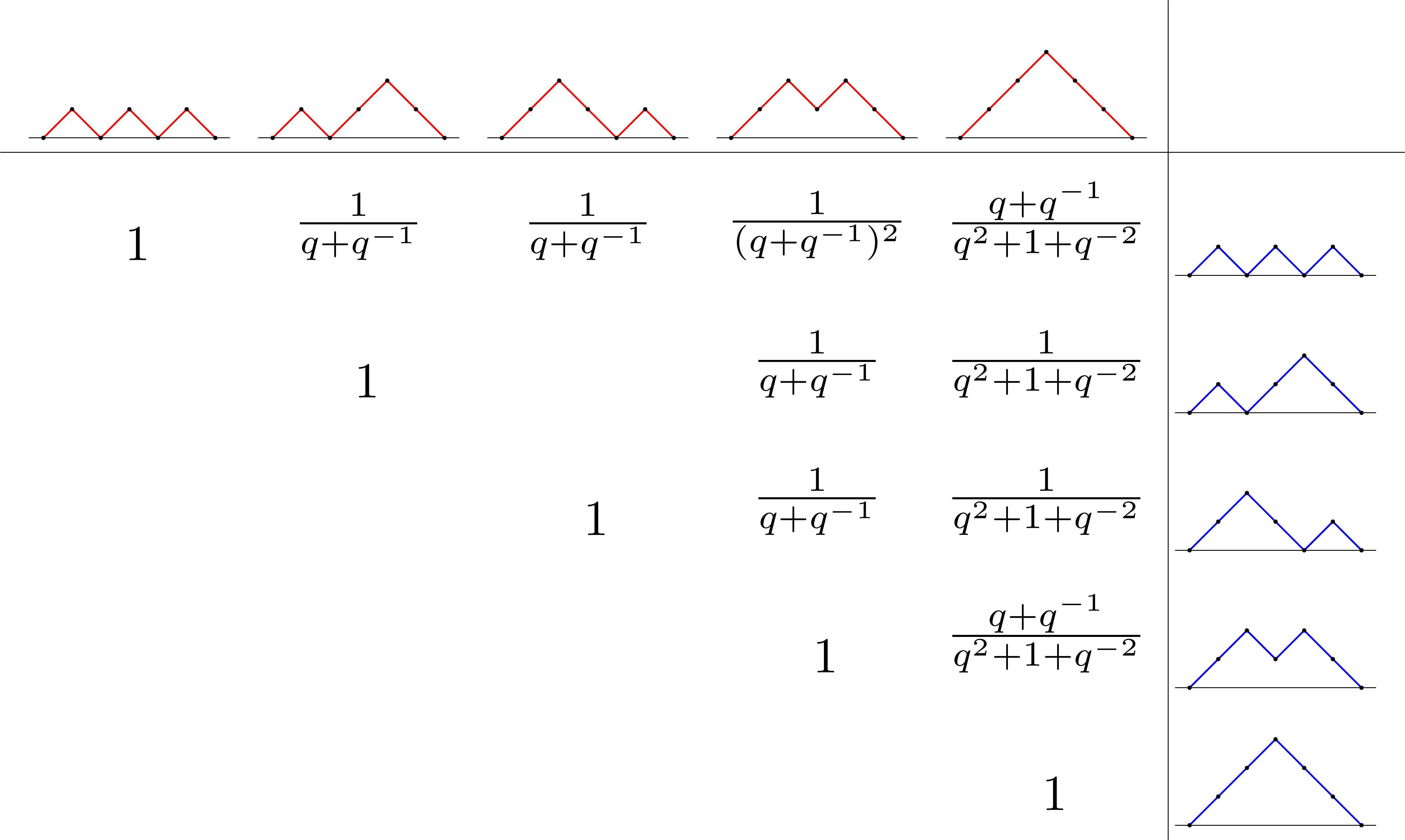} 
\caption{\label{fig: quantum CIDT matrices only}
The rows and columns of the matrix $\genMinv$ are indexed by Dyck
paths of $2N$ steps. The non-zero entries appear where the natural partial order
relation holds between the two Dyck paths: in particular, the matrix is upper triangular.
This figure gives the explicit matrix elements
of $\genMinv$ in terms of $q = e^{\ii 4 \pi / \kappa}$
for 
$N=2$ and $N=3$.
}
\end{figure}

The second main result of this article, Theorem~\ref{thm: conformal block functions}
given in Section~\ref{sec: construction of conformal block functions},
is a construction of the conformal block functions via the quantum group based method
of~\cite{KP-conformally_covariant_boundary_correlation_functions_with_a_quantum_group}.
Our construction expresses the conformal block functions as concrete linear combinations of
integrals of Coulomb gas type, similar to~\cite{DF-multipoint_correlation_functions}. 
It also reflects the underlying idea of conformal blocks, according to which
the Dyck path serves to label a sequence of intermediate representations.

\subsection*{Acknowledgments}
We thank Steven Flores and David Radnell for interesting discussions.

A.K. and K.K. are supported by the Academy of Finland project
``Algebraic structures and random geometry of stochastic lattice models''. A.K. is also supported by the Vilho, Yrj\"{o} and Kalle V\"{a}is\"{a}l\"{a} Foundation.
E.P. is supported by the ERC AG COMPASP, the NCCR SwissMAP, and the Swiss NSF.

\bigskip

\section{Combinatorial preliminaries}
\label{sec: combinatorial preliminaries}


In this section, we recall some combinatorial definitions and  results. A complete account can be found in 
our previous article~\cite[Section~2]{KKP-boundary_correlations_in_planar_LERW_and_UST}, whose notations and conventions we follow.

\subsection{Dyck paths, skew Young diagrams, and Dyck tiles}

We denote by $\DP_N$ the set of \emph{Dyck paths} of $2N$ steps, i.e., 
sequences $\alpha = (\alpha(0), \alpha(1), \ldots, \alpha(2N))$ such that $\alpha(j) \in \Znn$ and $|\alpha(j) - \alpha(j-1)| = 1$
for all 
$j \in \set{1,\ldots,2N}$, 
and $\alpha(0) = \alpha(2N)=0$. The number of such Dyck paths is a Catalan number,
\begin{align*}
\# \DP_N = \Catalan_N = \frac{1}{N+1} \binom{2N}{N} .
\end{align*}
We also denote by $\DP := \bigsqcup_{N \in \bZnn} \DP_N$ the set of
Dyck paths of arbitrary length.

For each $N$, the set of Dyck paths of $2N$ steps has a natural partial ordering:
for $\alpha, \beta \in \DP_N$ we denote $\alpha \DPleq \beta$
if and only if $\alpha(j) \leq \beta(j)$ for all 
$j \in \set{0,1,\ldots,2N}$.
When $\alpha \DPleq \beta$, the area between the Dyck paths $\alpha$ and $\beta$
forms a \emph{skew Young diagram}, denoted by $\alpha / \beta$.

The main combinatorial objects for the present article are certain tilings of
skew Young diagrams,
called Dyck tilings.
The tiles $t$ in these tilings are skew Young diagrams of a particular
type: namely $t = \alpha/\beta$ such that for some $0 < x_t \leq x'_t < 2N$
and $h_t \in \Zpos$ we have
\begin{align*}
& \begin{cases}
\alpha(j) = \beta(j) & \text{ for } 0 \leq j < x_t \\
\alpha(j) = \beta(j) -2 & \text{ for } x_t \leq j \leq x'_t \\
\alpha(j) = \beta(j) & \text{ for } x'_t < j \leq 2N 
\end{cases}
\end{align*}
and
\begin{align*}
& \alpha(x_t-1) = \beta(x_t-1) = \alpha(x_t+1) = \beta(x_t+1) = h_t .
\end{align*}
Such tiles $t=\alpha/\beta$ are called \emph{Dyck tiles}, the number $h_t$
is called the \emph{height} of $t$, and the intervals $[x_t , x_t']$
and $(x_t-1, x_t+1)$ are called the \emph{horizontal extent} and \emph{shadow} of $t$,
respectively. Figure~\ref{fig: extent shadow and height} illustrates these notions.
We say that a Dyck tile $t_2 = \alpha_2 / \beta_2$ \emph{covers} another Dyck
tile $t_1 = \alpha_1 / \beta_1$ if there exists a $j$ such that
$j \in [x_{t_1} , x'_{t_1}] \cap [x_{t_2} , x'_{t_2}]$ and $\alpha_1(j) < \alpha_2(j)$.
\begin{figure}[h!]
\includegraphics[width = 0.4\textwidth]{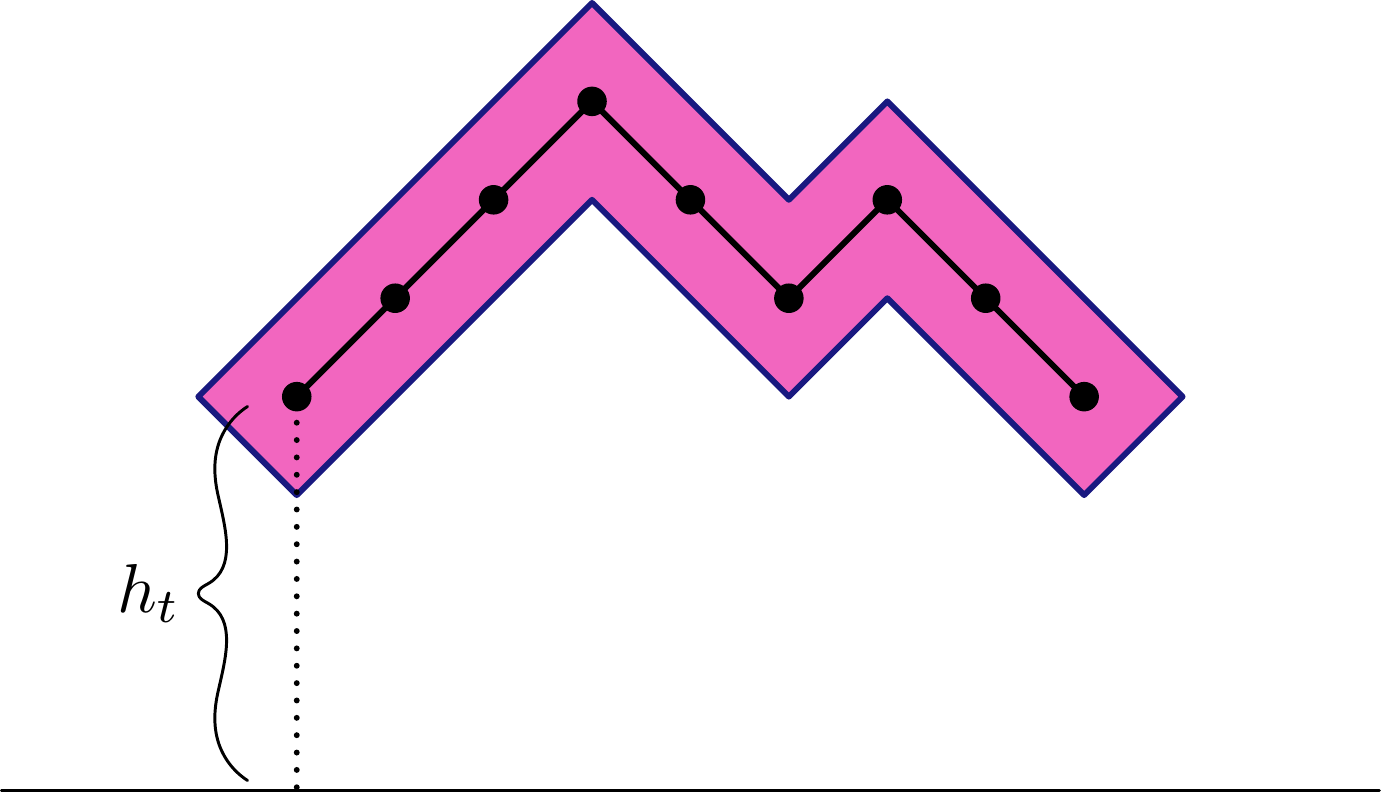} \hspace{1cm}
\includegraphics[width = 0.4\textwidth]{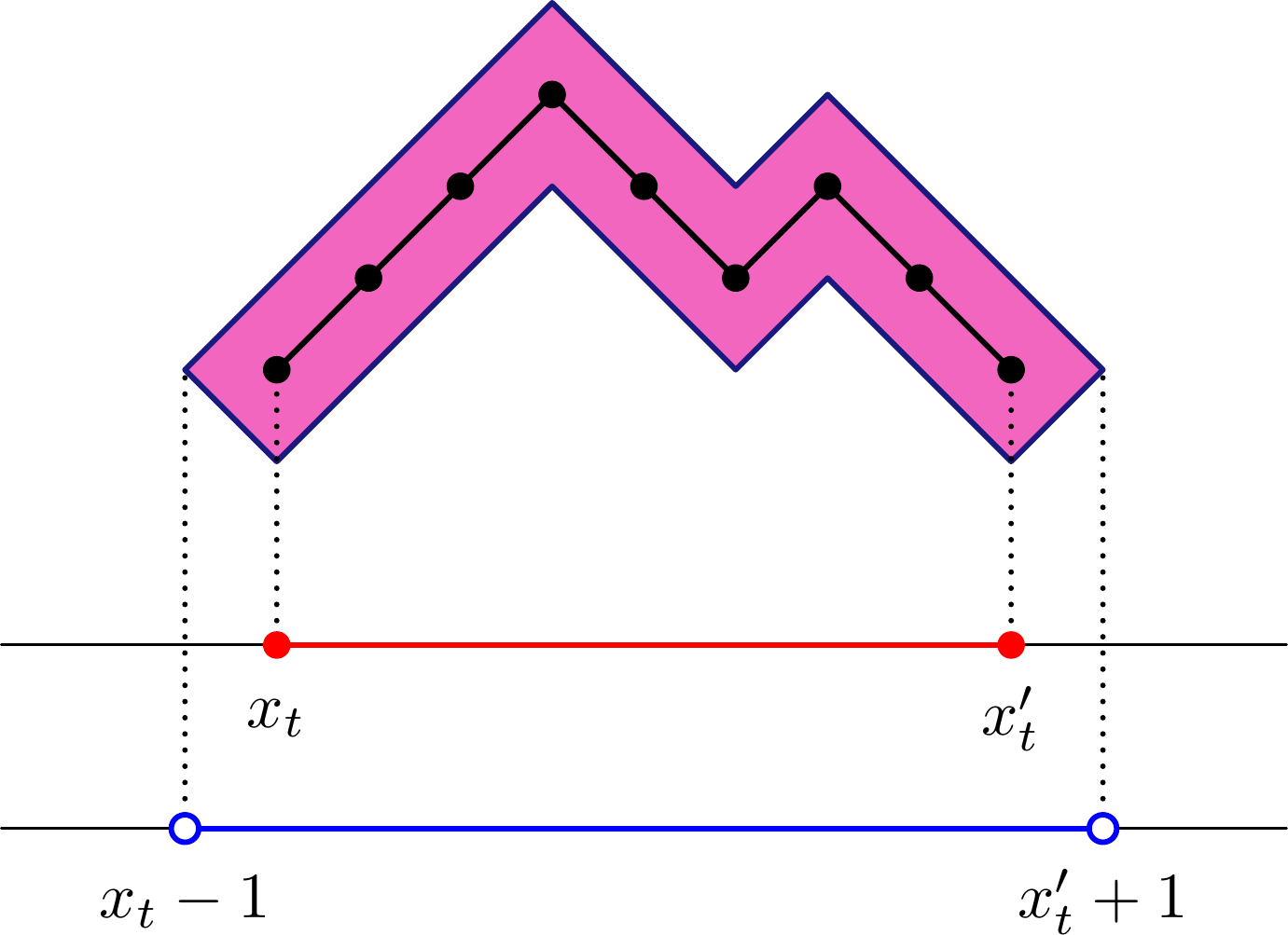}
\caption{\label{fig: extent shadow and height} The vertical position of a
Dyck tile $t$ is described by the integer height $h_t$.
The horizontal extent $[x_t , x'_t]$ (in red) and shadow $(x_t-1 , x'_t +1)$
(in blue) are intervals that describe the 
horizontal position.
The shape of a Dyck tile is essentially that of a 
Dyck path, as illustrated by the black path drawn inside the tile.
}
\end{figure}

In general, a \emph{Dyck tiling} $T$ of a skew Young diagram $\alpha/\beta$ is a collection
of Dyck tiles $t$ which cover the area of the skew Young diagram,
$\bigcup_{t \in T} t = \alpha/\beta$, and which have no overlap. 
Specifically, we consider
so called nested Dyck tilings and cover-inclusive Dyck tilings illustrated
in Figures~\ref{fig: NDT examples} and~\ref{fig: CIDT examples} 
and defined below 
in Sections~\ref{sub: nested  DT} and~\ref{sub: cover-inclusive DT}, respectively.

\subsection{Nested Dyck tilings and the parenthesis reversal relation}
\label{sub: nested  DT}

A Dyck tiling $T$ of a skew Young diagram $\alpha/\beta$ is said to be a \emph{nested
Dyck tiling} if the shadows of any two distinct tiles of $T$ are either disjoint
or one contained in the other, and in the latter case the tile with the larger
shadow covers the other. Figure~\ref{fig: NDT examples} exemplifies nested Dyck
tilings. It is not difficult to see that if a skew Young diagram $\alpha/\beta$ admits a
nested Dyck tiling, such a tiling is necessarily unique.
In this case, we write \[ \alpha \KWleq \beta , \] and 
we denote the nested Dyck tiling of $\alpha/\beta$ 
by $\nestedtilingof (\alpha / \beta)$.
This binary relation~$\KWleq$ on $\DP_N$ was first introduced
in~\cite{KW-double_dimer_pairings_and_skew_Young_diagrams, SZ-path_representations_of_maximal_paraboloc_KL_polynomials},
and we 
call it the \emph{parenthesis reversal relation}, because of a
convenient characterization it has in terms of balanced parenthesis expressions,
see~\cite[Lemma~2.7]{KKP-boundary_correlations_in_planar_LERW_and_UST}.
\begin{figure}
\includegraphics[width = 0.4\textwidth]{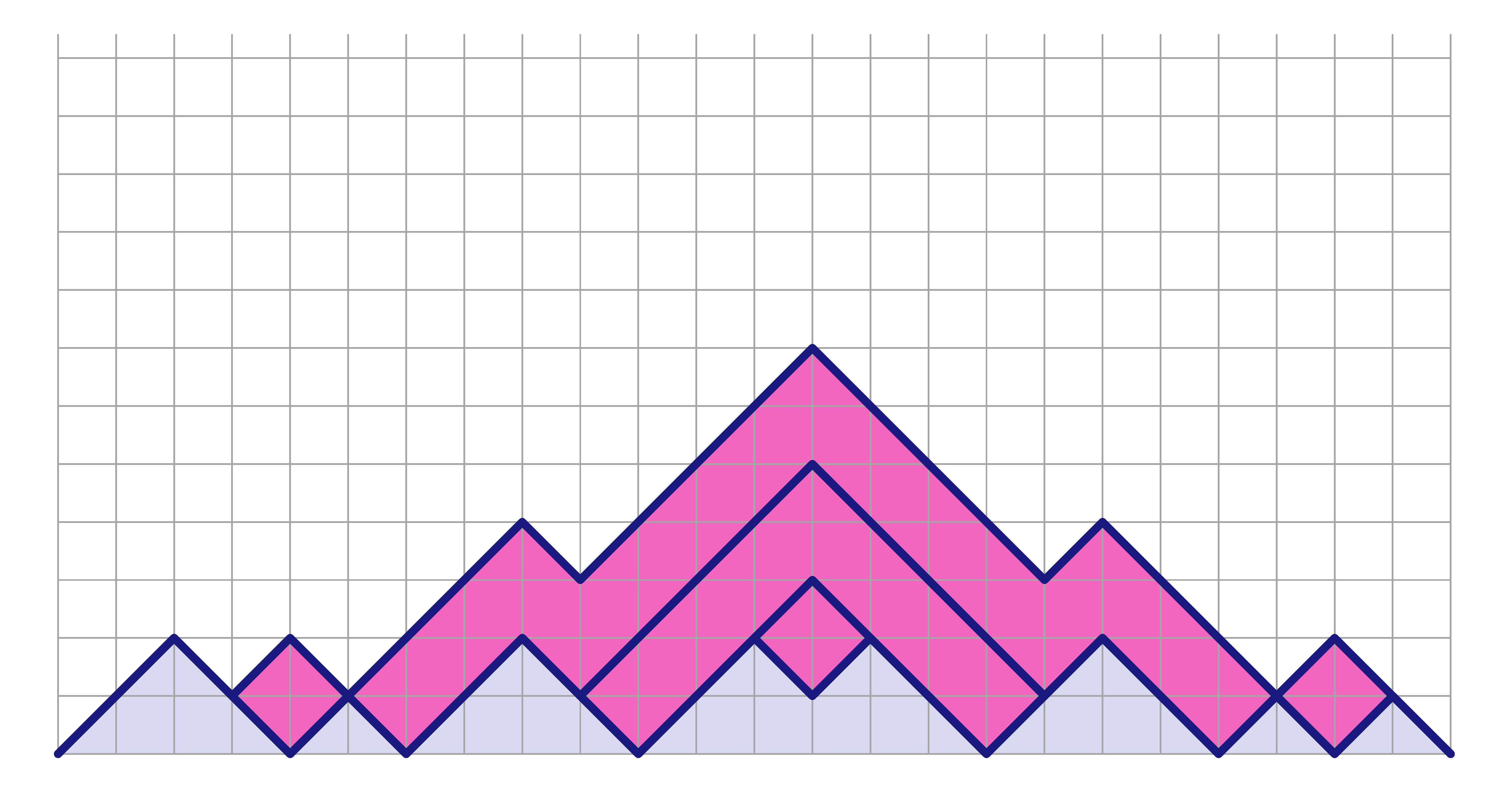} \quad
\includegraphics[width = 0.4\textwidth]{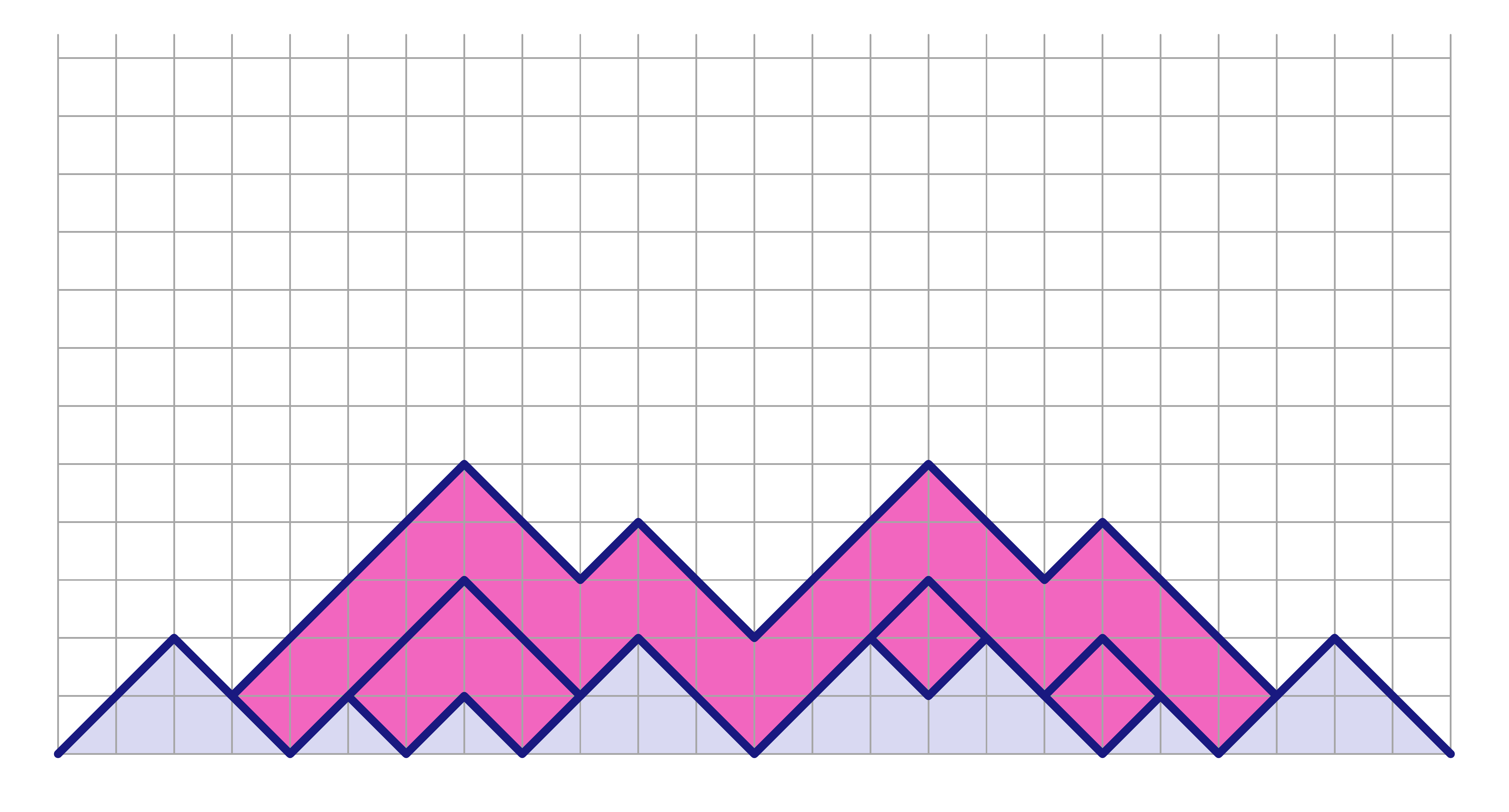}
\caption{\label{fig: NDT examples} Examples of nested Dyck tilings of skew Young diagrams.}
\end{figure}

\subsection{Cover-inclusive Dyck tilings}
\label{sub: cover-inclusive DT}

A Dyck tiling $T$ of a skew Young diagram $\alpha/\beta$ is said to be a \emph{cover-inclusive
Dyck tiling} if for any two distinct tiles of $T$, either the horizontal
extents are disjoint, or the tile that covers the other has horizontal extent contained in the horizontal
extent of the other. Figure~\ref{fig: CIDT examples} exemplifies cover-inclusive
Dyck tilings. In contrast with nested Dyck tilings, any skew Young diagram 
has cover-inclusive Dyck tilings.
For $\alpha \DPleq \beta$, the set of cover-inclusive Dyck tilings of $\alpha/\beta$ is
denoted by $\CItilingsof (\alpha/\beta)$.
\begin{figure}
\includegraphics[width = 0.4\textwidth]{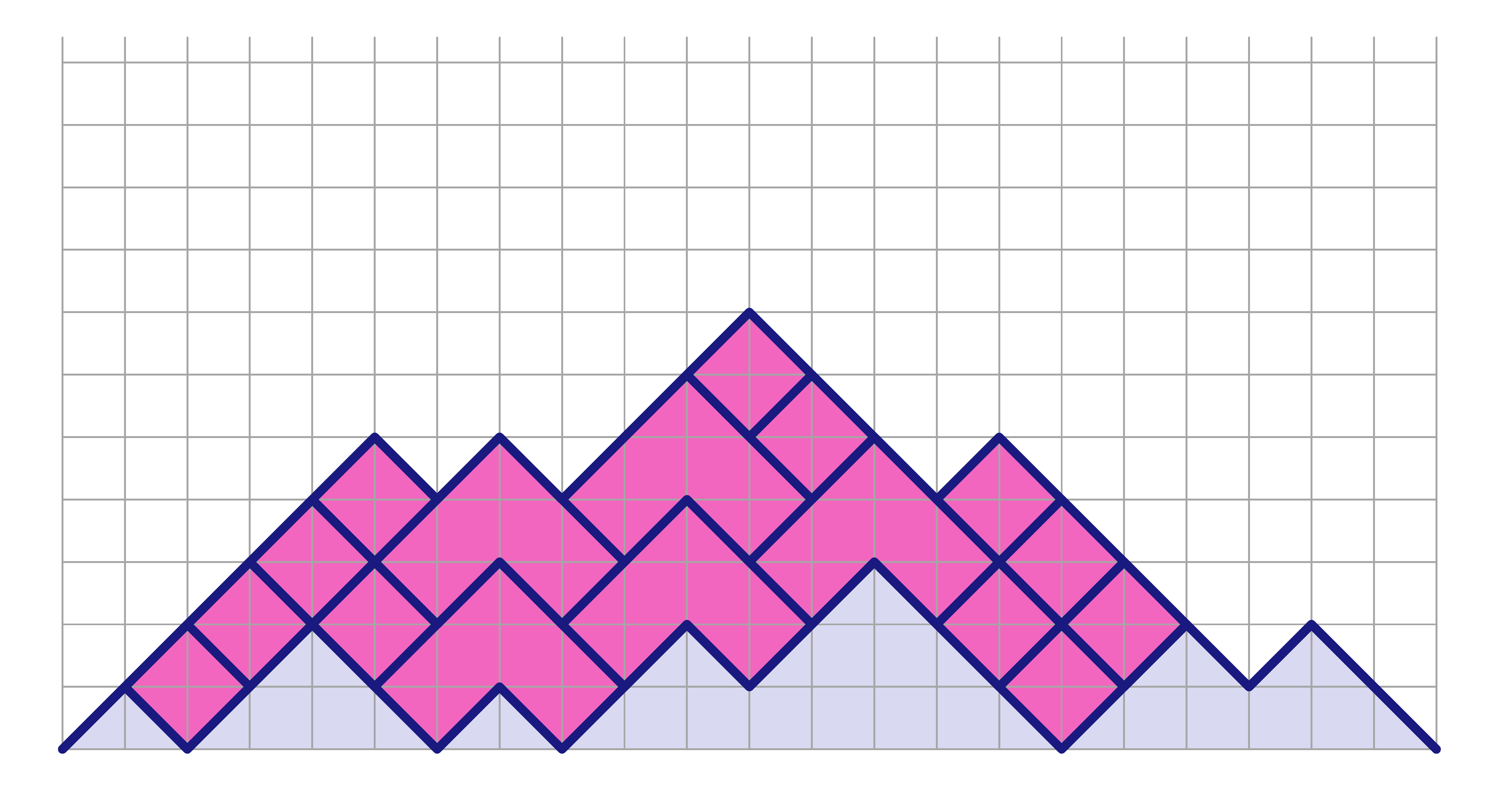} \quad
\includegraphics[width = 0.4\textwidth]{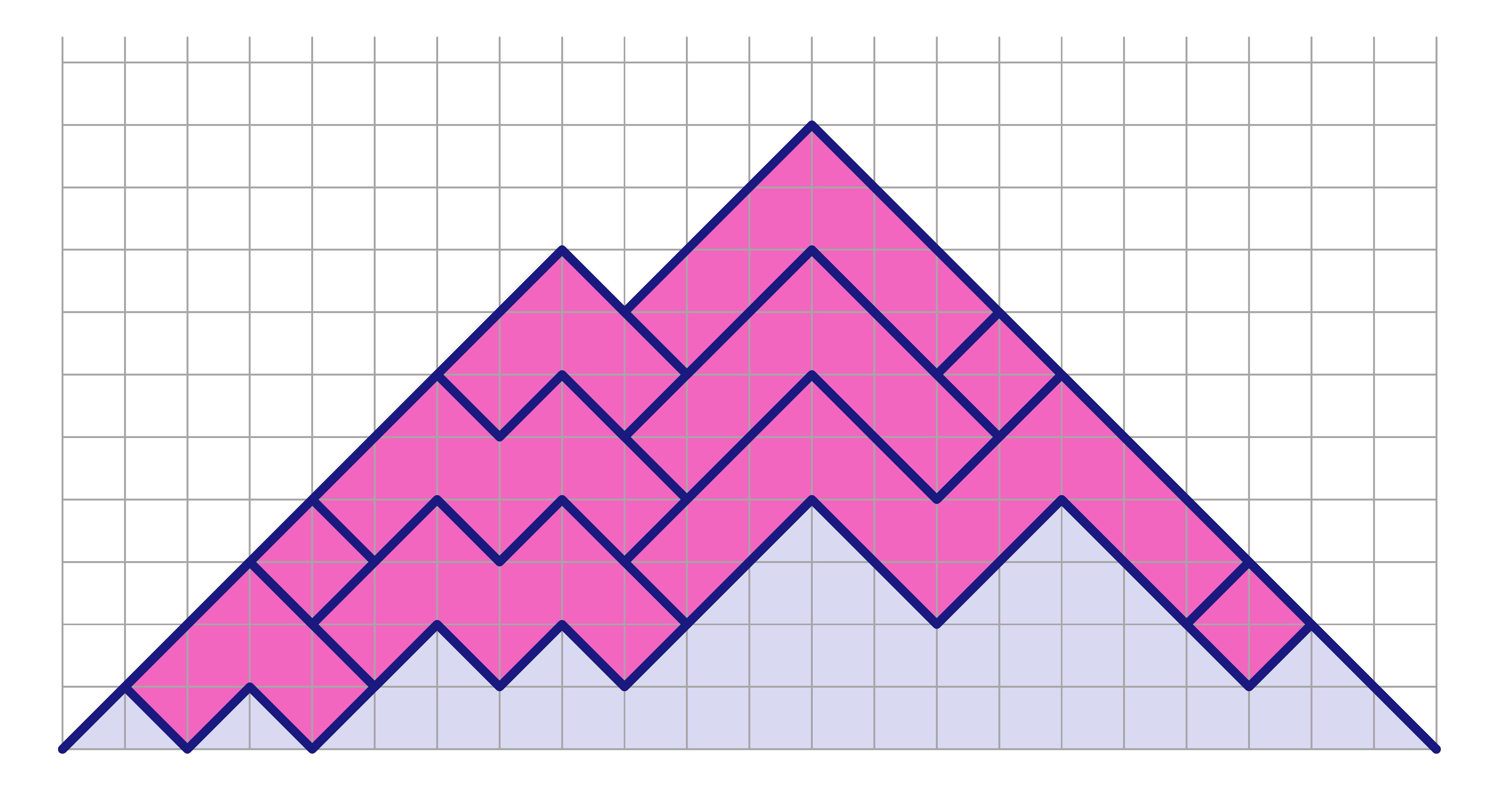} \\
\caption{\label{fig: CIDT examples} Examples of cover-inclusive Dyck tilings of skew Young diagrams.}
\end{figure}

\subsection{Weighted incidence matrices and their inversion}

The incidence matrix of the binary relation~$\KWleq$ on the set $\DP_N$ of Dyck paths
plays a role in the combinatorics of dimers and groves~\cite{KW-double_dimer_pairings_and_skew_Young_diagrams},
and of uniform spanning tree boundary branches~\cite{KKP-boundary_correlations_in_planar_LERW_and_UST}.
The rows and columns of this incidence matrix are indexed by Dyck paths, and its entries
are $1$ or $0$ 
according to whether 
or not
the relation $\KWleq$ holds between the two paths.
It turns out that an appropriately weighted incidence matrix is relevant for the combinatorics of conformal blocks. 

Suppose that a weight $w(t) \in \bC$ has been 
assigned
to each 
Dyck tile $t$.
We define the weighted incidence matrix by setting
for all $\alpha, \beta \in \DP_N$
\begin{align} \label{eq: def of weighted incidence matrix}
M_{\alpha, \beta} := \begin{cases}
\prod_{t \in \nestedtilingof (\alpha / \beta)} (-w(t)) & \text{if }\alpha \KWleq \beta\\
0 & \text{otherwise}, 
\end{cases} 
\end{align}
where $\nestedtilingof (\alpha / \beta)$ denotes the unique nested tiling of the skew Young diagram $\alpha / \beta$ when $\alpha \KWleq \beta$.


We rely on the following combinatorial result, which
gives a formula for the inverse of the weighted incidence 
matrix~\eqref{eq: def of weighted incidence matrix} in terms of cover-inclusive Dyck tilings.
Such formulas for the inverses appear
in~\cite{KW-double_dimer_pairings_and_skew_Young_diagrams,
SZ-path_representations_of_maximal_paraboloc_KL_polynomials, KKP-boundary_correlations_in_planar_LERW_and_UST}.
\begin{prop}
\label{prop: weighted KW incidence matrix inversion}
The weighted incidence matrix $M \in \bC^{\DP_N \times \DP_N}$ with entries~\eqref{eq: def of weighted incidence matrix}
is invertible, and the entries of the inverse matrix $M^{-1}$ are given by the weighted sums 
\begin{align*} 
M^{-1}_{\alpha, \beta} = \begin{cases}
\sum_{T \in \CItilingsof (\alpha/\beta)} \prod_{t \in T} w(t) & \text{if }\alpha \DPleq \beta \\
0 & \text{otherwise} 
\end{cases}
\end{align*}
over the sets 
$\CItilingsof (\alpha/\beta)$ 
of cover-inclusive Dyck tilings of the skew Young diagrams $\alpha/\beta$.
\end{prop}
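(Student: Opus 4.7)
The plan is to verify the proposed inverse formula directly, by checking that $MN = I$, where $N$ is the matrix defined by $N_{\alpha,\beta} := \sum_{T \in \CItilingsof(\alpha/\beta)} \prod_{t \in T} w(t)$ when $\alpha \DPleq \beta$ and $N_{\alpha,\beta} := 0$ otherwise. First, I would observe that both $M$ and $N$ are supported on the partial order $\DPleq$: for $M$, because $\alpha \KWleq \beta$ implies $\alpha \DPleq \beta$ (a nested tiling of $\alpha/\beta$ can only exist if $\alpha$ lies weakly below $\beta$); for $N$, by definition. Moreover, both matrices have diagonal entries equal to $1$, coming from the empty Dyck tiling as an empty product. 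Fixing any linear extension of $\DPleq$, both matrices are therefore unit upper triangular, so $M$ is automatically invertible, and it suffices to verify $MN = I$.

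The off-diagonal cancellation amounts to showing, for all $\alpha \DPleq \gamma$ with $\alpha \neq \gamma$, that
$$\sum_{\beta:\, \alpha \KWleq \beta \DPleq \gamma} \prod_{t \in \nestedtilingof(\alpha/\beta)} (-w(t)) \sum_{T \in \CItilingsof(\beta/\gamma)} \prod_{t' \in T} w(t') \;=\; 0.$$
Treating the $w(t)$ as formal indeterminates, the left-hand side is a signed sum over pairs $(\beta, T)$ consisting of an intermediate Dyck path $\beta$ with $\alpha \KWleq \beta \DPleq \gamma$, and a cover-inclusive Dyck tiling $T$ of $\beta/\gamma$. Each such pair contributes a sign $(-1)^{|\nestedtilingof(\alpha/\beta)|}$ times the product of $w(t)$ over the disjoint union of tiles $\nestedtilingof(\alpha/\beta) \sqcup T$ filling the skew diagram $\alpha/\gamma$.

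The core step is then to produce a sign-reversing, weight-preserving involution on this set of pairs. Morally, a pair is a two-layer tiling of $\alpha/\gamma$ (the nested layer below, the cover-inclusive layer above), and the involution locates the first local ``defect'' between the two layers by a canonical scan and performs a local move that either lifts a tile from the nested layer into the cover-inclusive layer or vice versa, adjusting $\beta$ accordingly. Since such a move preserves the multiset of tiles appearing in $\nestedtilingof(\alpha/\beta) \sqcup T$, it preserves the product of weights; since it changes $|\nestedtilingof(\alpha/\beta)|$ by exactly one, it reverses the sign.

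The main obstacle is the precise combinatorial definition of this involution, but I expect no genuinely new combinatorics here. The unweighted version of the identity is established via such an involution by Kenyon--Wilson~\cite{KW-double_dimer_pairings_and_skew_Young_diagrams} and Shigechi--Zinn-Justin~\cite{SZ-path_representations_of_maximal_paraboloc_KL_polynomials}, and a weighted version is carried out in our previous work~\cite{KKP-boundary_correlations_in_planar_LERW_and_UST}. Because the involution acts tile-locally and preserves the multiset of tiles used, it extends verbatim to the present setting of arbitrary tile weights $w(t)$, which delivers the desired cancellation and completes the verification of the inverse formula.
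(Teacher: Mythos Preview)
Your proposal is correct and aligns with the paper's approach: the paper simply cites \cite[Theorem~2.9]{KKP-boundary_correlations_in_planar_LERW_and_UST} for this result, and what you have sketched is precisely the bijective argument carried out there --- the unit-upper-triangularity observation (since $\alpha \KWleq \beta$ forces $\alpha \DPleq \beta$), followed by the sign-reversing, tile-multiset-preserving involution on pairs $(\beta,T)$ that establishes the off-diagonal cancellation in $MN$. There is no divergence in method; you have unpacked the citation rather than offered an alternative route.
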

\begin{proof}
In this form, the assertion is proved in~\cite[Theorem~2.9]{KKP-boundary_correlations_in_planar_LERW_and_UST}.
\end{proof}

\subsection{Slopes and wedges in Dyck paths and a recursion for incidence matrices}

Any two consecutive steps of a Dyck path $\alpha$ are said to form either a slope or a wedge,
according to the cases illustrated in Figure~\ref{fig: wedges and slopes}:
we say that $\alpha$ has a \emph{wedge} at $j$ if $\alpha(j-1) = \alpha(j+1)$,
and that $\alpha$ has a \emph{slope} at $j$ otherwise.

A slope at $j$ 
is called an \emph{up-slope} if $\alpha(j+1) = \alpha(j-1)+2$
and a \emph{down-slope} if $\alpha(j+1) = \alpha(j-1)-2$.
Without specifying the type of the slope, we denote
the presence of a slope at $j$ 
by $\slopeat{j} \in \alpha$.


A wedge at $j$ is called an \emph{up-wedge} if $\alpha(j) = \alpha(j \pm 1) + 1$, and 
a \emph{down-wedge} if $\alpha(j) = \alpha(j \pm 1) - 1$,
and in these two cases we respectively write $\upwedgeat{j} \in \alpha$ and $\downwedgeat{j} \in \alpha$.
Without specifying the type of the wedge, we denote
the presence of a wedge at $j$ 
by $\wedgeat{j} \in \alpha$.
By removing a wedge at $j$ from a Dyck path $\alpha \in \DP_N$ we obtain a shorter Dyck path
$\hat{\alpha} \in \DP_{N-1}$, namely
$\hat{\alpha} = (\alpha(0) , \alpha(1) , \ldots , \alpha(j-1) , \alpha(j+2) , \ldots , \alpha(2N))$.
According to whether the removed wedge is an up-wedge or a down-wedge, we 
write $\hat{\alpha} = \alpha \removeupwedge{j}$ or $\hat{\alpha} = \alpha \removedownwedge{j}$,
or without specifying the type of the removed wedge, we may
write $\hat{\alpha} = \alpha \removewedge{j}$.
\begin{figure}
\centering
\subfigure[up-wedge]
{\qquad \includegraphics[width = 0.12 \textwidth]{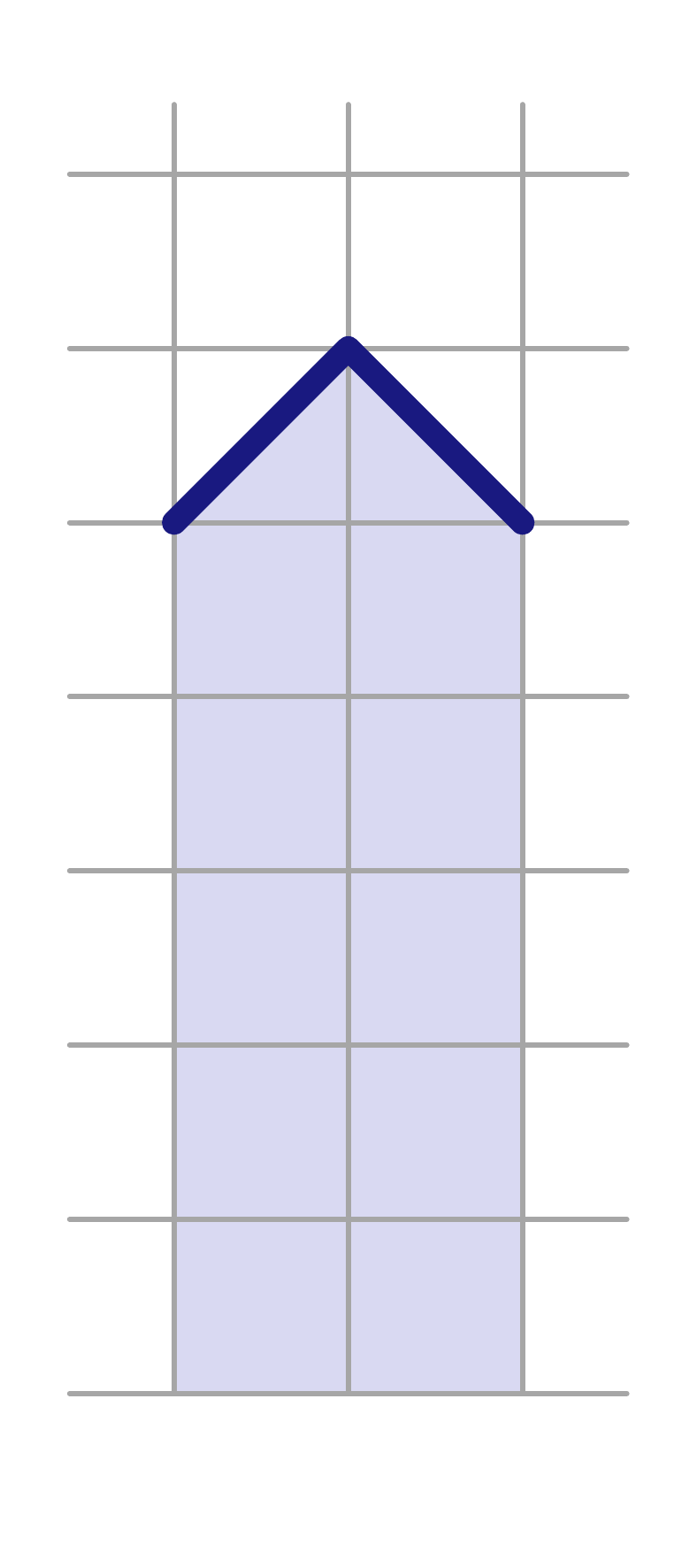} \qquad \label{fig: up-wedge}}
\subfigure[down-wedge]
{\qquad \includegraphics[width = 0.12 \textwidth]{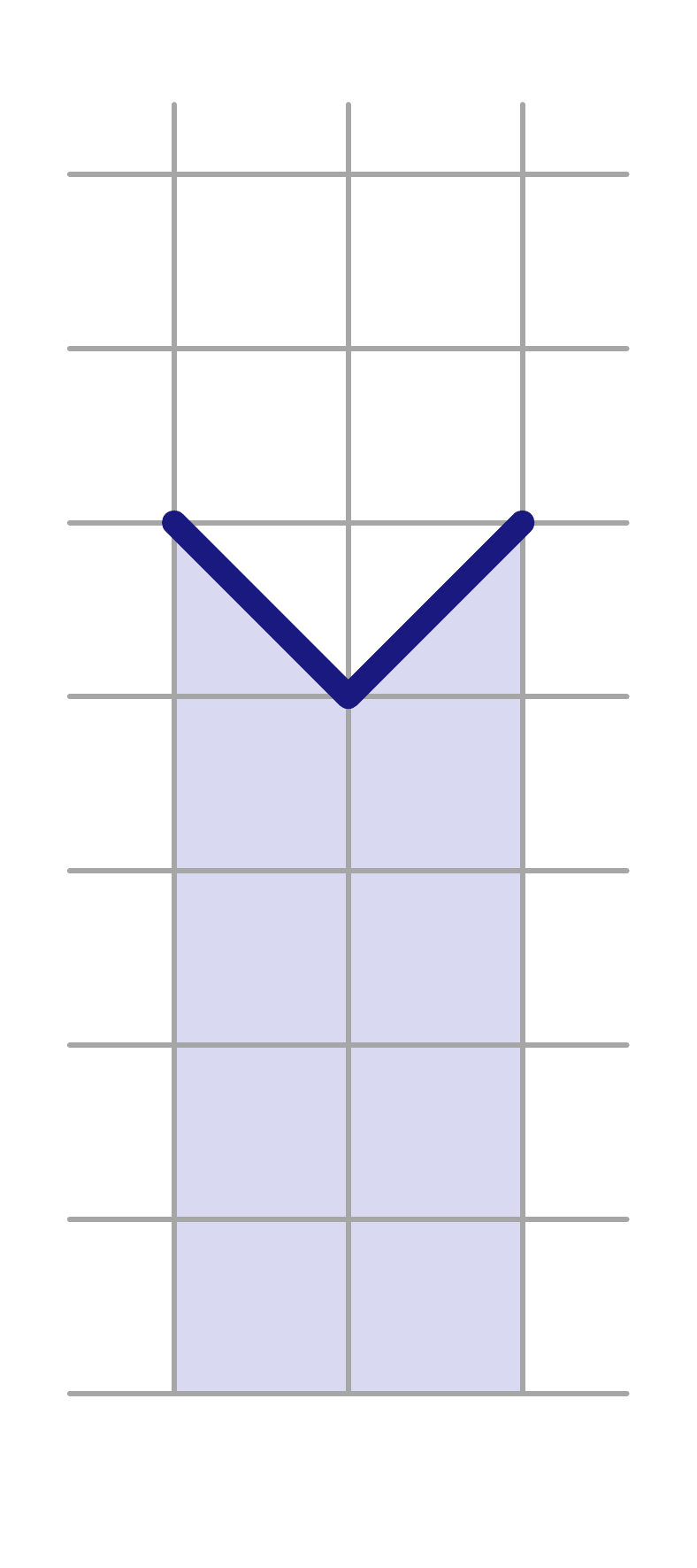} \qquad \label{fig: down-wedge}}
\subfigure[up-slope]
{\qquad \includegraphics[width = 0.12 \textwidth]{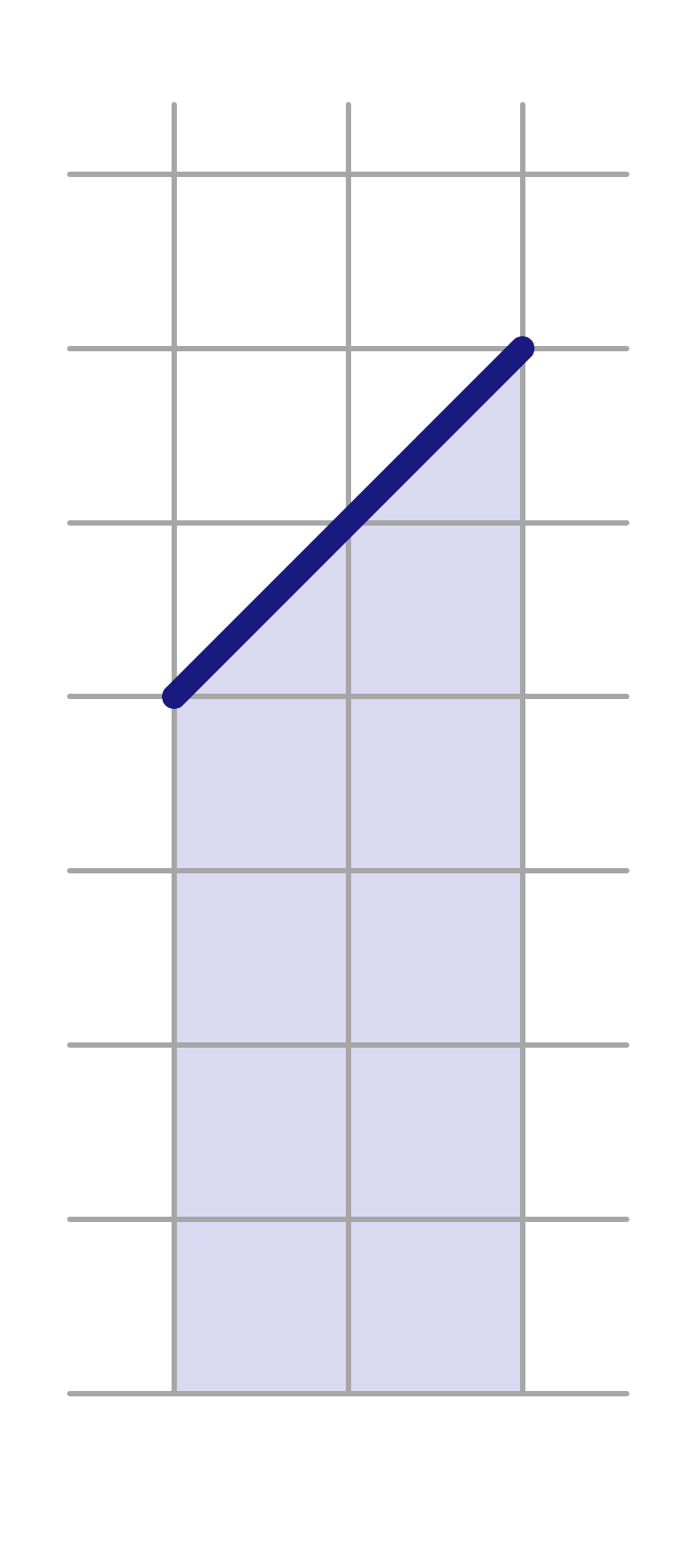} \qquad \label{fig: up-slope}}
\subfigure[down-slope]
{\qquad \includegraphics[width = 0.12 \textwidth]{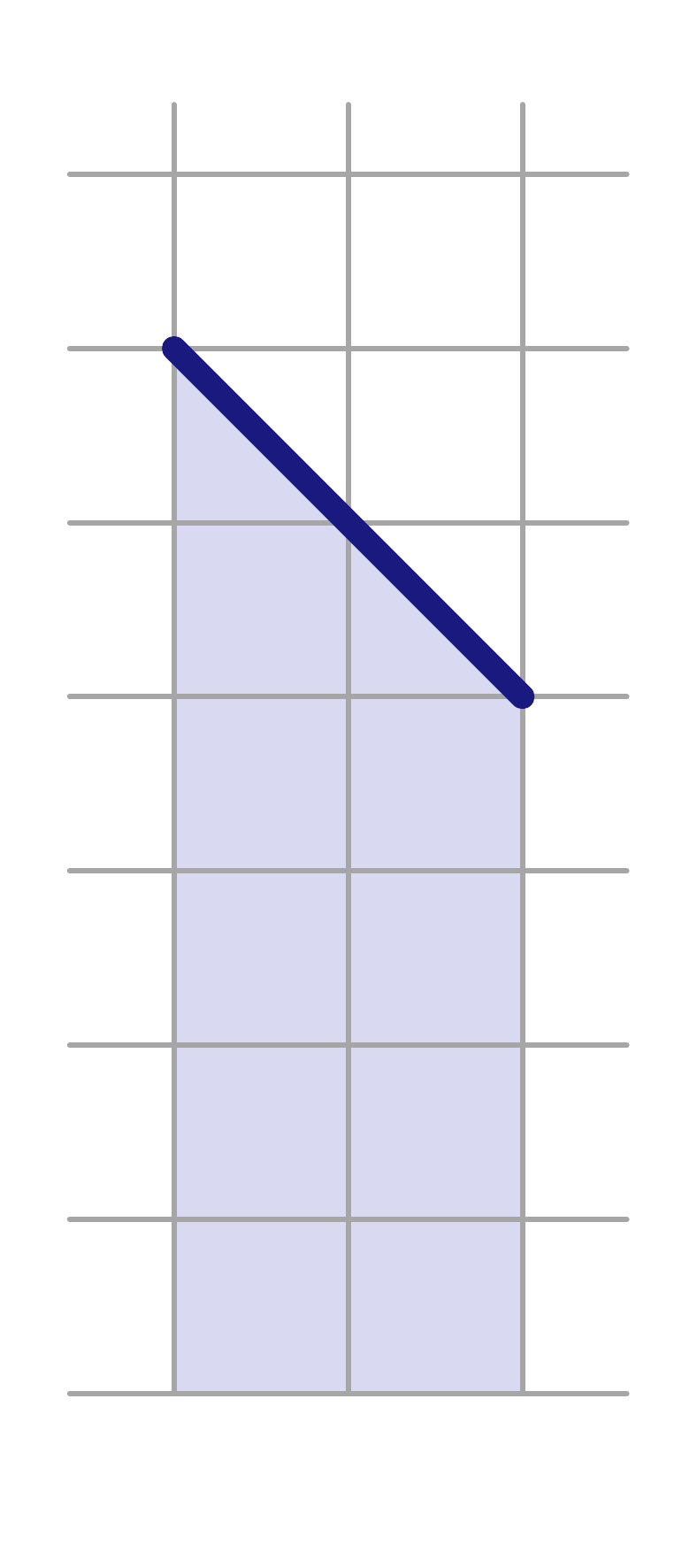} \qquad \label{fig: down-slope}}
\caption{\label{fig: wedges and slopes}
Wedges and slopes.
}
\end{figure}

Suppose that the weights $w(t)$ of Dyck tiles $t$ are chosen to only depend on
the height $h_t$ of the tile. 
Then, wedge removals 
allow for a characterization of weighted incidence matrices 
of the parenthesis reversal relation by the following recursion.
\begin{prop}\label{prop: recursion for matrix elements}
Let $f \colon \Zpos \to \bC$ be a given function. Then
the collection $(M^{(N)})_{N \in \bN}$ of weighted incidence
matrices~\eqref{eq: def of weighted incidence matrix} 
with weights of tiles determined by tile heights via $w(t) = f(h_t)$
is the unique collection of matrices $M^{(N)} \in \bC^{\DP_N \times \DP_N}$
satisfying the following recursion: we have $M^{(0)} = 1$,
and for any $N \in \bZpos$, and $\alpha, \beta \in \DP_N$, and $j \in \set{1,\ldots,2N-1}$ such that 
$\upwedgeat{j} \in \beta$, we have
\begin{align*} 
M_{\walk,\beta}^{(N)} = \begin{cases} 
0 & \text{if } \slopeat{j} \in \walk \\
M_{\hat{\walk},\hat{\beta}}^{(N-1)}
& \text{if } \upwedgeat{j} \in \walk \\
- f(\walk(j)+1) \times
M_{\hat{\walk},\hat{\beta}}^{(N-1)} 
& \text{if } \downwedgeat{j} \in \walk,
\end{cases} 
\end{align*}
where we denote by $\hat{\walk} = \walk \removewedge{j} \in \DP_{N-1}$ and 
$\hat{\beta} = \beta \removeupwedge{j} \in \DP_{N-1}$.
\end{prop}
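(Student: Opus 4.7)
The plan is to prove the proposition in two parts: first uniqueness of a collection satisfying the recursion, then verification that the weighted incidence matrices~\eqref{eq: def of weighted incidence matrix} form such a collection. Uniqueness is immediate once one observes that every Dyck path $\beta \in \DP_N$ with $N \geq 1$ contains at least one up-wedge, for instance at the first position where $\beta$ attains its maximum height. Applying the recursion at such a $j$ expresses each entry $M^{(N)}_{\alpha,\beta}$ in terms of entries of $M^{(N-1)}$, so an induction on $N$ from the base case $M^{(0)} = 1$ fixes the whole collection.

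For verification, I would fix $\beta \in \DP_N$ with $\upwedgeat{j} \in \beta$ and split into three cases according to the shape of $\alpha$ at $j$. When $\slopeat{j} \in \alpha$, the skew shape $\alpha/\beta$ admits no Dyck tiling at all in a neighborhood of $j$: the peak of $\beta$ at $j$ forces any covering tile whose horizontal extent contains $j$ to meet $\alpha = \beta$ at the same common height on both its left and right boundaries, and an $\alpha$-slope at $j$ destroys this left--right symmetry. In particular $\alpha \not\KWleq \beta$, so $M^{(N)}_{\alpha,\beta} = 0$. When $\upwedgeat{j} \in \alpha$, both paths share a peak at $j$, no tile of $\nestedtilingof(\alpha/\beta)$ has column $j$ in its horizontal extent, and wedge removal induces a bijection between nested tilings of $\alpha/\beta$ and of $\hat\alpha/\hat\beta$ which preserves each tile together with its height; hence $\alpha \KWleq \beta \iff \hat\alpha \KWleq \hat\beta$ and $M^{(N)}_{\alpha,\beta} = M^{(N-1)}_{\hat\alpha,\hat\beta}$. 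When $\downwedgeat{j} \in \alpha$, there is a valley/peak mismatch at $j$, and the key observation is that any nested tiling of $\alpha/\beta$ must contain, as its innermost tile at $j$, the single-column Dyck tile $t_0$ of extent $[j,j]$ and height $h_{t_0} = \alpha(j)+1$: its shadow $(j-1,j+1)$ is the minimal possible, so no other tile can sit strictly inside it, and the wedge mismatch forces some tile to live there. Removing $t_0$ and then collapsing the two wedges at $j$ yields a nested tiling of $\hat\alpha/\hat\beta$, while the inverse construction re-inserts $t_0$; thus $\alpha \KWleq \beta \iff \hat\alpha \KWleq \hat\beta$, and the product of tile weights acquires the extra factor $-w(t_0) = -f(\alpha(j)+1)$.

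The main obstacle will be the third case, where one must verify carefully that the innermost tile at column $j$ is indeed single-column of the predicted height and that its removal together with the two wedges produces a bona fide nested tiling of the reduced skew shape $\hat\alpha/\hat\beta$. Both are local combinatorial checks around column $j$ in terms of shadows, horizontal extents, and the nested condition, and they follow the pattern of the structural lemmas on nested Dyck tilings already established in~\cite[Section~2]{KKP-boundary_correlations_in_planar_LERW_and_UST}. Because the tile weights in~\eqref{eq: def of weighted incidence matrix} factorize as $\prod_t (-f(h_t))$ with each $w(t) = f(h_t)$ depending only on the height of $t$, the weighted assertion reduces to this purely structural analysis essentially verbatim.
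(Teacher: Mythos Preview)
The paper gives no self-contained proof here: it simply cites \cite[Lemmas~2.13~and~2.14]{KKP-boundary_correlations_in_planar_LERW_and_UST}. Your two-part strategy (uniqueness by induction on $N$, then a three-case verification at an up-wedge of $\beta$) is exactly the structure of the argument in that reference, so in spirit your proposal and the paper's deferred proof coincide.

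That said, one of your case arguments is actually wrong as written. In the up-wedge case you assert that when $\upwedgeat{j}\in\alpha$ and $\upwedgeat{j}\in\beta$, ``no tile of $\nestedtilingof(\alpha/\beta)$ has column $j$ in its horizontal extent''. This is false. Take $N=4$, $\alpha=(0,1,0,1,0,1,0,1,0)$ and $\beta=(0,1,2,3,2,3,2,1,0)$: both have an up-wedge at $j=3$, yet the unique nested tiling of $\alpha/\beta$ is a single tile of extent $[2,6]$ and height $1$, which certainly contains column $3$. The correct bijection between nested tilings of $\alpha/\beta$ and of $\hat\alpha/\hat\beta$ does not merely delete a column; it removes the up-wedge at $j$ \emph{from the shape of every tile whose extent contains $j$}, shrinking each such extent by $2$ while preserving the height. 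That this is well defined (each such tile really has an up-wedge in its own boundary at $j$) is the substance of the structural lemmas in \cite[Section~2]{KKP-boundary_correlations_in_planar_LERW_and_UST} and is not as immediate as you suggest.

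Your slope-case justification (``meet $\alpha=\beta$ at the same common height on both boundaries'') is also too vague to be a proof, and in the down-wedge case your reason for the innermost tile at $j$ being single-column is incomplete. A clean way to see the latter: if the bottommost tile at $j$ had $j+1$ in its extent, nestedness would force every tile at column $j$ to also occupy column $j+1$; but with $\downwedgeat{j}\in\alpha$ and $\upwedgeat{j}\in\beta$ one has $\beta(j)-\alpha(j)=\big(\beta(j\pm1)-\alpha(j\pm1)\big)+2$, so there is exactly one more tile at $j$ than at $j\pm1$, a contradiction. Once that single-column tile is removed you are back in the up-wedge case, and the factor $-f(\alpha(j)+1)$ drops out as you say.
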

\begin{proof}
See~\cite[Lemmas~2.13~and~2.14]{KKP-boundary_correlations_in_planar_LERW_and_UST}.
\end{proof}

\bigskip

\section{Conformal block functions}
\label{sec: conformal block functions background}


In the operator formalism of quantum field theories, fields correspond
to linear operators on the state space of the theory, and correlation
functions are written as ``vacuum expected values''. Somewhat more
concretely, $n$-point correlation functions are particular matrix
elements of a composition of $n$ linear operators on the state space.
Since the state space carries representations of the symmetries of the quantum field theory and
can be split into a direct sum of subrepresentations, 
it is natural to split these linear operators into corresponding blocks. 
In conformal field theory (CFT), 
the state space is a representation of the Virasoro algebra by virtue of conformal symmetry. 
The term conformal block 
refers to the idea of splitting the field operators into pieces
that go from one Virasoro subrepresentation of the state space to another, and compositions of field operators to
pieces that pass through a given sequence of 
subrepresentations, see~\cite{BPZ-infinite_conformal_symmetry_in_2D_QFT, Felder-BRST_approach,
DMS-CFT, Ribault-conformal_field_theory_on_the_plane}.

The main purpose of this section is to provide background for the definition of conformal block functions
that we use in the rest of the article. 
This definition is given in
Section~\ref{sub: defining conformal block functions}.
The background is included 
to provide sufficient context and main ideas, but its
presentation here is not intended to be fully rigorous.
We believe that a complete mathematical derivation of our defining properties 
is possible using the formalism of vertex operator
algebras~\cite{Lepowsky_Li-VOA}, but it is beyond the present work, and 
seems not to be readily contained in the existing literature.


\subsection{Highest weight representations of Virasoro algebra}


Usually, in
conformal field theory the state space is assumed to split into a
direct sum of highest weight representations of the Virasoro 
algebra~\cite{Kac-ICM_proceedings_Helsinki, FF-representations, IK-representation_theory_of_the_Virasoro_algebra},
with a common central charge $c\in\bR$ and various highest weights
$h\in\bR$. We parametrize the central charges $c\leq1$ 
by $\kappa>0$ via~\eqref{eq: central charge parametrization},
as is relevant to the theory of $\SLEk$ type random curves.
A special role is played by a primary field of conformal
weight $h=h_{1, 2} = \frac{6-\kappa}{2\kappa}$, 
which through fusion generates the so called first row of the Kac table with conformal weights 
\begin{align*} 
h(\lambda) = \; & h_{1,\lambda+1}=\frac{\lambda^{2}+2\lambda}{\kappa}-\frac{\lambda}{2} \qquad 
\text{for }
\lambda\in\Znn.
\end{align*}
In this article, we consider the generic case
$\kappa\notin\bQ$. 
Then, the irreducible highest weight representation with highest weight $h(\lambda)$ and
central charge $c$ is a quotient of the corresponding Verma module
by a submodule that itself is a Verma module\footnote{At rational values of $\kappa$, the structure of highest weight representations
can be more involved, see, e.g.,~\cite{IK-representation_theory_of_the_Virasoro_algebra}.}. 
We denote this irreducible
quotient 
by $\Qrep_{\lambda}$ and a highest weight vector in it by $\hwvec_{\lambda}$.
%
The contragredient (graded dual) representation $\Qrep_{\lambda}^{*}$
(see, e.g.,~\cite{IK-representation_theory_of_the_Virasoro_algebra}) is isomorphic
to $\Qrep_{\lambda}$, and we choose a highest weight vector $\hwvec_{\lambda}^{*}$
for it so that the normalization $\dualpairing{\hwvec_{\lambda}^{*}}{\hwvec_{\lambda}}=1$ holds.

\subsection{Conformal blocks}

\subsubsection{\textbf{Intertwining relation for primary field operators}}
A primary field $\psi$ of conformal weight $h$ is characterized by its
transformation property
\begin{align*}
\psi(x)\rightsquigarrow\;\confmap'(x)^{h}\;\psi(\confmap(x)) 
\end{align*}
under conformal transformations $\confmap$.
According to the seminal work~\cite{BPZ-infinite_conformal_symmetry_in_2D_QFT},
more general fields can be understood in terms of these primary fields.

In the operator formalism, a primary field $\psi(x)$ is realized by a primary field
operator $\Psi(x)$. We wish to split $\Psi(x)$
into conformal blocks between various highest weight representations $\Qrep_{\lambda}$ and
$\Qrep_{\mu}$, with $\lambda,\mu\in\Znn$.
The intertwining relation
\begin{align}
L_{n}\,\Primary(x)-\Primary(x)\,L_{n}=\; & x^{1+n}\pder x\Primary(x)+(1+n)x^{n}h\,\Primary(x)\label{eq: intertwining relation}
\end{align}
with the Virasoro generators $L_{n}$, $n \in \bZ$, is the infinitesimal form of
the primary field transformation property
under a conformal transformation $\confmap$ obtained
by varying the identity transformation to the direction of the holomorphic vector
field $\ell_{n}=-x^{1+n}\pder x$.

\subsubsection{\textbf{Matrix elements characterizing conformal blocks}}

The single matrix element between highest weight vectors,
\begin{align} \label{eq: 3pt function}
U_{\lambda}^{\mu}(x)=\; & \dualpairing{\hwvec_{\mu}^{*}}{\PrimaryBlock{\lambda}{\mu}(x)\,\hwvec_{\lambda}}, 
\end{align}
contains sufficient information to completely
determine $\PrimaryBlock{\lambda}{\mu}(x)$.
More generally, a composition of primary field operators
splits into conformal blocks indexed by a sequence $\sigmaseq=(\sigma_{0},\sigma_{1},\ldots,\sigma_{n})$
with $\sigma_{j}\in\Znn$ 
labeling the intermediate
representations $\Qrep_{\sigma_{0}},\Qrep_{\sigma_{1}},\ldots,\Qrep_{\sigma_{n}}$.
Now, the matrix element 
\begin{align}
U_{\sigmaseq}(x_{1},x_{2},\ldots,x_{n})=\; & \dualpairing{\hwvec_{\sigma_{n}}^{*}}{\PrimaryBlock{\sigma_{n-1}}{\sigma_{n}}(x_{n})\cdots\PrimaryBlock{\sigma_{1}}{\sigma_{2}}(x_{2})\,\PrimaryBlock{\sigma_{0}}{\sigma_{1}}(x_{1})\,\hwvec_{\sigma_{0}}}\label{eq: general conformal block function}
\end{align}
contains sufficient information to uniquely
determine the block of the composition.
Note that $U_{\lambda}^{\mu}(x)$ is a special case of $U_{\sigmaseq}(x_{1},x_{2},\ldots,x_{n})$
with $n=1$, $\sigma_{0}=\lambda$, and $\sigma_{1}=\mu$. 
We furthermore point out that $U_{\sigmaseq}(x_{1},x_{2},\ldots,x_{n})$ appears
in an actual vacuum expected value of $n$ fields if the highest weight
states on the right and left are the absolute vacua, i.e., if $\sigma_{0}=0$ and $\sigma_{n}=0$.

In the vertex operator algebra axiomatization of conformal field theory~\cite{Lepowsky_Li-VOA},
the block
$\PrimaryBlock{\lambda}{\mu}(x)$ is a formal power
series in $x$ with
coefficients that are linear operators, and 
the matrix elements
$U_{\lambda}^{\mu}(x)$ and $U_{\sigmaseq}(x_{1},x_{2},\ldots,x_{n})$ are formal power series with complex coefficients. For radially ordered variables
\begin{align*}
 & 0<|x_{1}|<|x_{2}|<\cdots<|x_{n}|,
\end{align*}
the series are in fact convergent, so we may
view~\eqref{eq: 3pt function} and~\eqref{eq: general conformal block function}
as actual functions.
The fact that they determine the operators and their compositions
justifies calling them conformal block functions.

\subsection{Properties of conformal block functions}

We now review properties of the conformal block functions $U_{\sigmaseq}(x_{1}, \ldots, x_{n})$,
which in particular completely fix the form of the matrix elements $U_{\lambda}^{\mu}(x)$,
and characterize for which $\lambda$ and $\mu$ the block $\PrimaryBlock{\lambda}{\mu}(x)$
can be non-vanishing in the first place.

\subsubsection{\textbf{Covariance properties}}

The intertwining relation~\eqref{eq: intertwining relation} for $L_0$
combined with the eigenvalues $L_{0} \, \hwvec_{\sigma_0} = h(\sigma_0)\,\hwvec_{\sigma_0}$
and $L_{0}^\top \, \hwvec^*_{\sigma_n} = h(\sigma_n)\,\hwvec_{\sigma_n}^*$
of the highest weight vectors gives
\begin{align*}
\big( h(\sigma_n) - h(\sigma_0) \big) \; U_{\sigmaseq}(x_{1},\ldots,x_{n})
    = \sum_{j=1}^n \Big( x_j \pder{x_j}  + h \Big) \; U_{\sigmaseq}(x_{1},\ldots,x_{n}) ,
\end{align*}
with $h=h(1)=\frac{6-\kappa}{2\kappa}$.
This infinitesimal relation can be integrated to 
obtain the homogeneity property
\begin{align} \label{eq: homogeneity of conformal block functions}
U_{\sigmaseq}(rx_{1},\ldots,rx_{n})=\; & r^{h(\sigma_{n})-h(\sigma_{0})-nh}\;U_{\sigmaseq}(x_{1},\ldots,x_{n})
    , \qquad\text{for } r>0 , 
\end{align}
of the conformal block functions. For the simplest conformal block
function $U_{\lambda}^{\mu}(x)$, this homogeneity fixes its
form
up to a multiplicative constant $C_{\lambda}^{\mu}$: 
\begin{align} \label{eq: the general form of 3pt function}
U_{\lambda}^{\mu}(x) =  \dualpairing{\hwvec_{\mu}^{*}}{\PrimaryBlock{\lambda}{\mu}(x)\,\hwvec_{\lambda}} =\; & C_{\lambda}^{\mu}\;x^{h(\mu)-h(\lambda)-h} .
\end{align}

Next, if we have $\sigma_0 = 0$, then
$L_{-1} \, \hwvec_{\sigma_0} = 0$ 
is a null vector in the quotient representation $\Qrep_0$.
Together with the intertwining relation~\eqref{eq: intertwining relation} for $L_{-1}$, and the property $L_{-1}^\top \, \hwvec^*_{\sigma_n} = 0$,
this gives the infinitesimal form of 
the translation invariance
\begin{align*} 
U_{\sigmaseq}(x_{1} + t, \ldots, x_{n} + t)
    = U_{\sigmaseq}(x_{1}, \ldots, x_{n})
    , \qquad\text{for } t \in \bR .
\end{align*}

Likewise, if $\sigma_n = 0$, then $L_1^\top \, \hwvec_{\sigma_n}^* = 0$
is a null vector in the contragredient representation $\Qrep_0^*$. Together with the intertwining relation~\eqref{eq: intertwining relation} for $L_1$, and the property $L_{1} \, \hwvec_{\sigma_0} = 0$,
this gives the infinitesimal form of the following covariance under special conformal transformations:
\begin{align*}
U_{\sigmaseq} \Big( \frac{x_{1}}{1 - s x_1}, \ldots, \frac{x_{n}}{1 - s x_n} \Big)
    = \prod_{j=1}^n (1 - s x_j)^{2h} \; \times  U_{\sigmaseq}(x_{1}, \ldots, x_{n})
    , \qquad\text{for } s \in \bR .
\end{align*}

For the conformal blocks that contribute to the vacuum expected value, we
have 
$\sigma_0 = 0$ and $\sigma_n = 0$. These conformal block functions satisfy the covariance
\begin{align*} 
U_{\sigmaseq}(x_{1}, \ldots, x_{n})
= \prod_{j=1}^n \Mob'(x_j)^{2h} \; \times U_{\sigmaseq} \big( \Mob(x_1), \ldots, \Mob(x_n) \big)
\end{align*}
under general M\"obius transformations $\Mob(x) = \frac{a x + b}{c x + d}$
with $a,b,c,d \in \bR$ and $ad-bc>0$.

\subsubsection{\textbf{Partial differential equations}}


Suppose now that the primary field $\Psi(x)$ of conformal weight $h=h(1)=\frac{6-\kappa}{2 \kappa}$ has the same 
degeneracy at grade 
two as the quotient representation $\Qrep_{1}$ of highest weight $h(1)$. 
Then the conformal block functions satisfy partial differential equations of second order.
These PDEs obtain a more symmetric expression in terms of the shifted
versions of the conformal block functions defined by
\begin{align*}
\widetilde{U}_{\sigmaseq}(x_{0},x_{1},\ldots,x_{n}) :=\; & U_{\sigmaseq}(x_{1}-x_{0},\ldots,x_{n}-x_{0}).
\end{align*}

The 
partial differential equation
arising from the degeneracy of $\Psi(x_j)$ at grade two takes the form
given in~\cite{BPZ-infinite_conformal_symmetry_in_2D_QFT},
\begin{align}\label{eq: PDEs for conformal block functions}
\begin{split}
\Bigg\{ & \; \frac{\kappa}{2}\pdder{x_{j}} + \frac{2}{x_{0}-x_{j}}\pder{x_{0}} - \frac{2\,h(\sigma_{0})}{(x_{0}-x_{j})^{2}} \quad 
	  \\
& \; \qquad + \sum_{\substack{i=1,\ldots,n \\ i\neq j}} \frac{2}{x_{i}-x_{j}}\pder{x_{i}}
- \sum_{\substack{i=1,\ldots,n\\i\neq j}} \frac{2\,h}{(x_{i}-x_{j})^{2}} \Bigg\} \; \widetilde{U}_{\sigmaseq}(x_{0},x_{1},\ldots,x_{n})
       = 0 . 
\end{split}
\end{align}

\subsubsection{\textbf{Selection rules}}

The PDEs above in particular imply selection rules for when non-vanishing conformal
blocks can exist. Namely, for $U_{\lambda}^{\mu}(x)=C_{\lambda}^{\mu}\,x^{\Delta}$,
with $\Delta=h(\mu)-h(\lambda)-h$ as in Equation~\eqref{eq: the general form of 3pt function},
the requirement of the PDE~\eqref{eq: PDEs for conformal block functions} amounts to
\begin{align*}
C_{\lambda}^{\mu}\;\left(\frac{\kappa}{2}\Delta(\Delta-1)+2\Delta -2h(\lambda)\right)\;x^{\Delta-2}=\; & 0,
\end{align*}
which implies either the vanishing of $C_{\lambda}^{\mu}$ and therefore
of the entire conformal block, or a quadratic equation relating the
conformal weights $h(\mu)$ and $h(\lambda)$. For fixed $\lambda$,
the two solutions of this quadratic equation are obtained at $\mu=\lambda\pm1$.
One can therefore conclude that the conformal blocks take the form
\begin{align}
U_{\lambda}^{\mu}(x)=\; & \begin{cases}
C_{\lambda}^{\pm}\;x^{h(\mu)-h(\lambda)-h} & \text{ if }\mu=\lambda\pm1\\
0 & \text{ if }|\mu-\lambda|\neq1.
\end{cases}\label{eq: form of 3pt function with selection}
\end{align}
The normalizations $C_{\lambda}^{\pm}$ are not canonically fixed;
in fact, the space of intertwining operators forms a vector space
(in the present cases always of dimension 
one or zero depending on whether 
the selection rules are fulfilled). 
One can make any convenient choice, and we will fix our choice later.

In the case $\lambda=0$, there is one further selection rule:
by translation invariance, 
$U_{0}^{\mu}(x)$ is constant. This further restricts the possibilities
in~\eqref{eq: the general form of 3pt function} to $h(\mu)=h=h(1)$, i.e., $\mu=1$.

In conclusion, a non-vanishing intertwining operator from $\Qrep_{\lambda}$
to $\Qrep_{\mu}$ can only exist if $|\mu-\lambda|=1$ and $\mu\geq0$.
Consequently, the composition of intertwining operators as in the
conformal block function~\eqref{eq: general conformal block function}
can only be non-trivial if the sequence $\sigmaseq=(\sigma_{0},\sigma_{1},\ldots,\sigma_{n})$
satisfies $\sigma_{j}\in\Znn$ and $|\sigma_{j}-\sigma_{j-1}|=1$
for all $j$. This means that non-trivial conformal block functions
are indexed by nearest neighbor walks on non-negative integers. 
The conformal block functions that contribute to the actual vacuum expected
value of $n$ fields must furthermore have $\sigma_{0}=0$ and $\sigma_{n}=0$,
so they are in fact indexed by
Dyck paths, and in particular, $n$ must be even.

\subsubsection{\textbf{Asymptotics}}

Above, in Equation~\eqref{eq: form of 3pt function with selection}, we completely
fixed the form of the simplest conformal block function $U_{\lambda}^{\mu}(x)$,
i.e., the case $n=1$. Now we consider the next case $n=2$ with $\sigmaseq=(\sigma_{0},\sigma_{1},\sigma_{2})$,
and the corresponding conformal block function
\begin{align*}
U_{\sigmaseq}(x_{1},x_{2})=\; & \dualpairing{\hwvec_{\sigma_{2}}^{*}}{\PrimaryBlock{\sigma_{1}}{\sigma_{2}}(x_{2})\,\PrimaryBlock{\sigma_{0}}{\sigma_{1}}(x_{1})\,\hwvec_{\sigma_{0}}}.
\end{align*}

If $x_{1}$ is kept fixed, then the vector $\PrimaryBlock{\sigma_{0}}{\sigma_{1}}(x_{1})\,\hwvec_{\sigma_{0}}$
can be expanded in the usual basis of $\Qrep_{\sigma_{1}}$ as
\begin{align*}
\PrimaryBlock{\sigma_{0}}{\sigma_{1}}(x_{1})\,\hwvec_{\sigma_{0}}
    = & \sum_{k \in \bN} \; \sum_{n_{1},\ldots,n_{k}>0} a_{n_{1},\ldots,n_{k}}(x_{1}) \; L_{-n_{k}}\cdots L_{-n_{1}}\,\hwvec_{\sigma_{1}},
\end{align*}
where in particular the coefficient of the highest weight vector $\hwvec_{\sigma_{1}}$
is picked by the projection to $\hwvec_{\sigma_{1}}^{*}$,
\begin{align*}
a_{\emptyset}(x_{1})=\; & \dualpairing{\hwvec_{\sigma_{1}}^{*}}{\PrimaryBlock{\sigma_{0}}{\sigma_{1}}(x_{1})\,\hwvec_{\sigma_{0}}}
    = U_{\sigma_{0}}^{\sigma_{1}}(x_{1}) = C_{\sigma_{0}}^{\sigma_{1}}\,x_{1}^{h(\sigma_{1})-h(\sigma_{0})-h}.
\end{align*}
Using this expansion, the conformal block function becomes
\begin{align*}
U_{\sigmaseq}(x_{1},x_{2})
    =\; & \sum_{k \in \bN} \; \sum_{n_{1},\ldots,n_{k}>0} a_{n_{1},\ldots,n_{k}}(x_{1})\;\dualpairing{\hwvec_{\sigma_{2}}^{*}}{\PrimaryBlock{\sigma_{1}}{\sigma_{2}}(x_{2})\,L_{-n_{k}}\cdots L_{-n_{1}}\,\hwvec_{\sigma_{1}}}.
\end{align*}
With the generic form~\eqref{eq: the general form of 3pt function} of the conformal blocks $U_{\sigma_{1}}^{\sigma_{2}}(x_{2})$,
the intertwining relation~\eqref{eq: intertwining relation} implies
\begin{align*}
\dualpairing{\hwvec_{\sigma_{2}}^{*}}{\PrimaryBlock{\sigma_{1}}{\sigma_{2}}(x_{2})\,L_{-n_{k}}\cdots L_{-n_{1}}\hwvec_{\sigma_{1}}} 
\propto x_{2}^{h(\sigma_{2}) - h(\sigma_{1}) - h - n_{1} - \, \cdots \, - n_k} .
\end{align*}
The leading contribution
in the limit $x_{2}\to\infty$ 
comes from the highest weight vector $\hwvec_{\sigma_1}$, since $n_j>0$.
Thus, for fixed $x_{1}$ and as $x_{2}\to\infty$, 
the leading asymptotics of the conformal block function is
\begin{align}
\begin{split}
U_{\sigmaseq}(x_{1},x_{2})
    \sim\; & a_{\emptyset}(x_{1})\;\dualpairing{\hwvec_{\sigma_{2}}^{*}}{\PrimaryBlock{\sigma_{1}}{\sigma_{2}}(x_{2})\,\hwvec_{\sigma_{1}}} \\
    =\; & a_{\emptyset}(x_{1})\times U_{\sigma_{1}}^{\sigma_{2}}(x_{2}) \\
    =\; & C_{\sigma_{0}}^{\sigma_{1}}\,x_{1}^{h(\sigma_{1})-h(\sigma_{0})-h}\times C_{\sigma_{1}}^{\sigma_{2}}\,x_{2}^{h(\sigma_{2})-h(\sigma_{1})-h}. \\
    =\; & C_{\sigma_{0}}^{\sigma_{1}}C_{\sigma_{1}}^{\sigma_{2}}\;x_{1}^{h(\sigma_{1})-h(\sigma_{0})-h}\,x_{2}^{h(\sigma_{2})-h(\sigma_{1})-h}.
    \label{eq: asymptotics for x2}
\end{split}
\end{align}
By a similar argument, for fixed $x_{2}$ and as $x_{1}\to0$,
the leading asymptotics of the conformal block function is
\begin{align} \label{eq: asymptotics for x1}
U_{\sigmaseq}(x_{1},x_{2})\sim\; & C_{\sigma_{0}}^{\sigma_{1}}C_{\sigma_{1}}^{\sigma_{2}}\;x_{2}^{h(\sigma_{2})-h(\sigma_{1})-h}\,x_{1}^{h(\sigma_{1})-h(\sigma_{0})-h}.
\end{align}
The remaining interesting asymptotics of $U_{\sigmaseq}(x_{1},x_{2})$
concerns the limit $|x_{1}-x_{2}|\to0$.
To analyze these, we resort to direct solutions of the PDEs~\eqref{eq: PDEs for conformal block functions} in the following.

The homogeneity~\eqref{eq: homogeneity of conformal block functions}
can be used to cast the $n=2$ conformal block function into the form
\begin{align*}
U_{\sigmaseq}(x_{1},x_{2})=\; & x_{2}^{h(\sigma_{2})-h(\sigma_{0})-2h}\;g_{\sigmaseq}\left(\frac{x_{1}}{x_{2}}\right),
\end{align*}
and the PDEs~\eqref{eq: PDEs for conformal block functions} for $\widetilde{U}_{\sigmaseq}(x_{0},x_{1},x_{2})=(x_{2}-x_{0})^{h(\sigma_{2})-h(\sigma_{0})-2h}\;g_{\sigmaseq}\big(\frac{x_{1}-x_{0}}{x_{2}-x_{0}}\big)$
then translate to the following
second order ODEs for $g_{\sigmaseq}(z)$:
\begin{align} \label{eq: ODE}
\begin{split}
0=\; & \kappa\;z^{2}(z-1)^{2}\;g_{\sigmaseq}''(z)+8\;z(z-1)(z-\frac{1}{2})\;g_{\sigmaseq}'(z)\\
 & +4\Big(z(z-2)h-z(z-1)h(\sigma_{2})+(z-1)h(\sigma_{0})\Big)\;g_{\sigmaseq}(z)
 \end{split} \\
\label{eq: ODE2}
 \begin{split}
 0 = \; &  \kappa\;z^{2}(z-1)^{2}\;g_{\sigmaseq}''(z) - 2 z (z-1) \Big( \kappa ( \tilde{\Delta} -1 ) (z-1)  + 2 (z-2) \Big) \; g'(z) \\
 & + \Big[  \Big(\kappa \tilde{\Delta}(\tilde{\Delta} - 1) +  4 \tilde{\Delta} -4 h(\sigma_0) \Big) (z-1)^2 - 4h \Big] \; g(z),
\end{split}
\end{align}
where we denoted the scaling exponent in $\widetilde{U}_{\sigmaseq}$ by $ \tilde{\Delta} = h(\sigma_{2})-h(\sigma_{0})-2h$. The two ODEs~\eqref{eq: ODE}~--~\eqref{eq: ODE2} coincide if $\sigma_0 = \sigma_2$. 
We analyze below separately the different $\sigmaseq$ allowed by the selection rules \eqref{eq: form of 3pt function with selection}, finding that the asymptotics
\begin{align} \label{eq: required asymptotics for the ODE for 4pt conformal block}
g_{\sigmaseq}(z)\sim\; & C_{\sigma_{0}}^{\sigma_{1}}C_{\sigma_{1}}^{\sigma_{2}}\;z^{h(\sigma_{1})-h(\sigma_{0})-h} \qquad \text{ as } z \to 0 
\end{align}
obtained from~\eqref{eq: asymptotics for x2}~--~\eqref{eq: asymptotics for x1} specify a unique solution to the first ODE~\eqref{eq: ODE} in all cases. These solutions are explicit, and can be verified to also satisfy the second ODE~\eqref{eq: ODE2}.


Denote $\sigma_{0}=\lambda$. 
By the selection rules, there are 
four possibilities when $n=2$, which we label as follows:~
\begin{center}
\begin{tabular}{c|cccc}
 &
 \textbf{up-wedge}  \qquad  &
 \textbf{down-wedge}  \qquad &
 \textbf{up-slope}  \qquad &
 \textbf{down-slope}  \qquad \\
\hline
$\sigmaseq$ &
 $(\lambda,\lambda+1,\lambda)$ &
 $(\lambda,\lambda-1,\lambda)$ &
 $(\lambda,\lambda+1,\lambda+2)$ &
 $(\lambda,\lambda-1,\lambda-2)$ \\
abbreviation &
 $\upwedge$ &
 $\downwedge$ &
 $\upslope$ &
 $\downslope$ \\
Figure &
~\ref{fig: up-wedge} &
~\ref{fig: down-wedge} &
~\ref{fig: up-slope} &
~\ref{fig: down-slope} \\
\end{tabular}
\smallskip
\end{center}
%
In the case of an up-slope, the only solution of the 
ODE~\eqref{eq: ODE} with the correct asymptotics~\eqref{eq: required asymptotics for the ODE for 4pt conformal block} is
\begin{align*}
g_{\upslope}(z)=\; & C_{\lambda}^{+}C_{\lambda+1}^{+}\times(1-z)^{\frac{2}{\kappa}}\times z^{h(\lambda+1)-h(\lambda)-h}.
\end{align*}
This also satisfies the ODE~\eqref{eq: ODE2}. A similar conclusion holds in the case of a down-slope:
\begin{align*}
g_{\downslope}(z)=\; & C_{\lambda}^{-}C_{\lambda-1}^{-}\times(1-z)^{\frac{2}{\kappa}}\times z^{h(\lambda-1)-h(\lambda)-h}.
\end{align*}
In the case of an up-wedge, the two ODEs coincide. The unique solution with the asymptotics~\eqref{eq: required asymptotics for the ODE for 4pt conformal block} is a slightly more complicated,
non-degenerate hypergeometric function
\begin{align*}
g_{\upwedge}(z)=\; & C_{\lambda}^{+}C_{\lambda+1}^{-}\,z^{\frac{2\lambda}{\kappa}}(1-z)^{\frac{\kappa-6}{\kappa}}\,\twoFone\Big(\frac{\kappa-4}{\kappa},\frac{4\lambda}{\kappa};\frac{4\lambda+4}{\kappa};z\Big).
\end{align*}
Similarly, in the case of a down-wedge, 
the solution is
\begin{align*}
g_{\downwedge}(z)=\; & C_{\lambda}^{-}C_{\lambda-1}^{+}\,z^{\frac{\kappa-2\lambda-4}{\kappa}}(1-z)^{\frac{\kappa-6}{\kappa}}\,\twoFone\Big(\frac{\kappa-4}{\kappa},\frac{2\kappa-8-4\lambda}{\kappa};\frac{2\kappa-4\lambda-4}{\kappa};z\Big).
\end{align*}

The asymptotics as $z\to1$ of such hypergeometric functions can be
obtained from the identities
$\twoFone(a,b;c;0)=1$ and~\cite[Equation~(15.3.6)]{AS-handbook}:
\begin{align*}
\twoFone(a,b;c;z)=\; & (1-z)^{c-a-b}\frac{\Gamma(c)\,\Gamma(a+b-c)}{\Gamma(a)\,\Gamma(b)}\twoFone(c-a,c-b;c-a-b+1;1-z)\\
 & +\frac{\Gamma(c)\,\Gamma(c-a-b)}{\Gamma(c-a)\,\Gamma(c-b)}\twoFone(a,b;a+b-c+1;1-z) .
\end{align*}
Because we assume $0<\kappa<8$, the parameters $a,b,c$
of the hypergeometric functions in both $g_{\upwedge}(z)$ and $g_{\downwedge}(z)$
satisfy $c-a-b=\frac{8-\kappa}{\kappa}>0$. Thus, in the limit $z\to1$
of the hypergeometric function, the first term above vanishes and the second
term tends to $\frac{\Gamma(c)\,\Gamma(c-a-b)}{\Gamma(c-a)\,\Gamma(c-b)}$.
This shows that we have
\begin{align*}
(1-z)^{\frac{6-\kappa}{\kappa}}\times g_{\upwedge}(z)\quad \longrightarrow \quad\; & C_{\lambda}^{+}C_{\lambda+1}^{-}\,\frac{\Gamma\big(\frac{4+4\lambda}{\kappa}\big)\,\Gamma\big(\frac{8-\kappa}{\kappa}\big)}{\Gamma\big(\frac{8-\kappa+4\lambda}{\kappa}\big)\,\Gamma\big(\frac{4}{\kappa}\big)}\\
(1-z)^{\frac{6-\kappa}{\kappa}}\times g_{\downwedge}(z)\quad \longrightarrow \quad\; & C_{\lambda}^{-}C_{\lambda-1}^{+}\,\frac{\Gamma\big(\frac{2\kappa-4-4\lambda}{\kappa}\big)\,\Gamma\big(\frac{8-\kappa}{\kappa}\big)}{\Gamma\big(\frac{4}{\kappa}\big)\,\Gamma\big(\frac{\kappa-4\lambda}{\kappa}\big)},
\end{align*}
and we can write down explicit asymptotics of the conformal
block functions $U_{\sigmaseq}(x_{1},x_{2})$ as 
$x_{1},x_{2} \to \xi$:
\begin{align}
\frac{U_{\upwedge}(x_{1},x_{2})}{(x_{2}-x_{1})^{-2h}}\quad \longrightarrow \quad\; & C_{\lambda}^{+}C_{\lambda+1}^{-}\,\frac{\Gamma\big(\frac{4+4\lambda}{\kappa}\big)\,\Gamma\big(\frac{8-\kappa}{\kappa}\big)}{\Gamma\big(\frac{8-\kappa+4\lambda}{\kappa}\big)\,\Gamma\big(\frac{4}{\kappa}\big)}\label{eq: non-normalized upwedge asymptotics}\\
\frac{U_{\downwedge}(x_{1},x_{2})}{(x_{2}-x_{1})^{-2h}}\quad \longrightarrow \quad\; & C_{\lambda}^{-}C_{\lambda-1}^{+}\,\frac{\Gamma\big(\frac{2\kappa-4-4\lambda}{\kappa}\big)\,\Gamma\big(\frac{8-\kappa}{\kappa}\big)}{\Gamma\big(\frac{4}{\kappa}\big)\,\Gamma\big(\frac{\kappa-4\lambda}{\kappa}\big)},\label{eq: non-normalized downwedge asymptotics}\\
\frac{U_{\upslope}(x_{1},x_{2})}{(x_{2}-x_{1})^{h(2)-2h}}\quad \longrightarrow \quad\; & C_{\lambda}^{+}C_{\lambda+1}^{+} \; 
\xi^{h(\lambda+2)-h(\lambda)-h(2)}\label{eq: non-normalized upslope asymptotics}\\
\frac{U_{\downslope}(x_{1},x_{2})}{(x_{2}-x_{1})^{h(2)-2h}}\quad \longrightarrow \quad\; & C_{\lambda}^{-}C_{\lambda-1}^{-} \; 
\xi^{h(\lambda-2)-h(\lambda)-h(2)}.\label{eq: non-normalized downslope asymptotics}
\end{align}
The last two expressions above,~\eqref{eq: non-normalized upslope asymptotics} and~\eqref{eq: non-normalized downslope asymptotics},
are proportional to one-point conformal block functions from $\Qrep_{\lambda}$
to $\Qrep_{\lambda\pm2}$ 
of a primary field operator of conformal
weight $h(2)=\frac{8-\kappa}{\kappa}$, whereas the first two,
\eqref{eq: non-normalized upwedge asymptotics} and~\eqref{eq: non-normalized downwedge asymptotics}, are
proportional to an one-point conformal block function from $\Qrep_{\lambda}$ to itself 
of
a primary field operator of conformal weight $h(0)=0$. 
The latter is
the identity operator, whose one-point conformal block function is just the constant $1$.
We now choose the
normalization constants $C_{\lambda}^{\pm}$ so that the coefficient of the identity
operator in~\eqref{eq: non-normalized upwedge asymptotics} 
equals one,
i.e., we set 
\begin{align*}
C_{\lambda}^{-} = \; & \frac{1}{C_{\lambda-1}^{+}}\times\frac{\Gamma\big(\frac{4-\kappa+4\lambda}{\kappa}\big)\,\Gamma\big(\frac{4}{\kappa}\big)}{\Gamma\big(\frac{4\lambda}{\kappa}\big)\,\Gamma\big(\frac{8-\kappa}{\kappa}\big)} \qquad
\text{for all $\lambda>0$.}
\end{align*}
Then, the coefficients $C_{\lambda}^{+}$ are the remaining free parameters.
The coefficient of the identity operator in~\eqref{eq: non-normalized downwedge asymptotics}
then becomes a ratio of gamma-functions, which can be further simplified to
\begin{align*}
C_{\lambda}^{-}C_{\lambda-1}^{+}\,\frac{\Gamma\big(\frac{2\kappa-4-4\lambda}{\kappa}\big)\,\Gamma\big(\frac{8-\kappa}{\kappa}\big)}{\Gamma\big(\frac{4}{\kappa}\big)\,\Gamma\big(\frac{\kappa-4\lambda}{\kappa}\big)}=\; & \frac{\Gamma\big(\frac{2\kappa-4-4\lambda}{\kappa}\big)\,\Gamma\big(\frac{4-\kappa+4\lambda}{\kappa}\big)}{\Gamma\big(\frac{\kappa-4\lambda}{\kappa}\big)\,\Gamma\big(\frac{4\lambda}{\kappa}\big)}\\
=\; & \frac{\sin\big(\pi\frac{4\lambda}{\kappa}\big)}{\sin\big(\pi(\frac{4\lambda+4}{\kappa}-1)\big)}=-\frac{\sin\big(\pi\frac{4\lambda}{\kappa}\big)}{\sin\big(\pi\frac{4\lambda+4}{\kappa}\big)} ,
\end{align*}
using the identity $\Gamma(w)\Gamma(1-w) = \pi / \sin(\pi w)$ twice.
Introducing the parameter $q=e^{\ii\pi4/\kappa}$ and $q$-integers
$\qnum n=\frac{q^{n}-q^{-n}}{q-q^{-1}}=\frac{\sin(4\pi n/\kappa)}{\sin(4\pi/\kappa)}$,
this takes the simple form
\begin{align*}
C_{\lambda}^{-}C_{\lambda-1}^{+}\,\frac{\Gamma\big(\frac{2\kappa-4-4\lambda}{\kappa}\big)\,\Gamma\big(\frac{8-\kappa}{\kappa}\big)}{\Gamma\big(\frac{4}{\kappa}\big)\,\Gamma\big(\frac{\kappa-4\lambda}{\kappa}\big)}=\; & -\frac{\qnum{\lambda}}{\qnum{\lambda+1}}.
\end{align*}
With the chosen normalization convention and when $0<\kappa<8$, the
leading asymptotics~\eqref{eq: non-normalized upwedge asymptotics}~--~\eqref{eq: non-normalized downslope asymptotics}
as $x_{1},x_{2}\to\xi\in(0,\infty)$ of the conformal block functions
can thus be summarized as
\begin{align*} 
\frac{U_{\sigmaseq}(x_{1},x_{2})}{(x_{2}-x_{1})^{-2h}}\quad \longrightarrow \quad\; & \begin{cases}
1 & \text{if }\sigmaseq=\upwedge\\
-\frac{\qnum{\lambda}}{\qnum{\lambda+1}} & \text{if }\sigmaseq=\downwedge\\
0 & \text{if }\sigmaseq=\upslope\text{ or }\sigmaseq=\downslope.
\end{cases}
\end{align*}

In the case of general $n$ and $\sigmaseq=(\sigma_{0},\sigma_{1},\ldots,\sigma_{n})$,
the leading asymptotics on pairwise diagonals can be inferred recursively from the above calculation. 
Specifically, we get 
\begin{align*}
\frac{U_{\sigmaseq}(x_{1},\ldots,x_{n})}{(x_{j+1}-x_{j})^{-2h}}\quad \longrightarrow \quad\; & \begin{cases}
U_{\hat{\sigmaseq}}(x_{1},\ldots,x_{j-1},x_{j+2},\ldots,x_{n}) & \text{if $\sigma_{j-1} = \sigma_{j+1} = \sigma_j -1$}\\
-\frac{\qnum{\sigma_{j}+1}}{\qnum{\sigma_{j}+2}}\times U_{\hat{\sigmaseq}}(x_{1},\ldots,x_{j-1},x_{j+2},\ldots,x_{n}) & \text{if $\sigma_{j-1} = \sigma_{j+1} = \sigma_j +1$}\\
0 & \text{if $\sigma_{j-1} \neq \sigma_{j+1}$} ,
\end{cases}
\end{align*}
as $x_{j},x_{j+1}\to\xi\in(x_{j-1},x_{j+2})$,
where we denote
$\hat{\sigmaseq} = (\sigma_0, \sigma_1, \ldots, \sigma_{j-1}, \sigma_{j+2}, \ldots, \sigma_{2N})$.

\subsection{Defining properties of conformal block functions}
\label{sub: defining conformal block functions}

So far in this section we have provided background on conformal block functions, so as to
have a self-contained justification of their properties that we use as their
definition 
in the rest of this article.
We 
only consider the conformal block functions that contribute to
vacuum expected values, in which case $\sigma_0 = 0$ and $\sigma_n = 0$,
and $n$ is necessarily even: $n=2N$. The sequence
$(\sigma_0 , \sigma_1 , \ldots, \sigma_n)$ then forms a Dyck path
of $n=2N$ steps, see Figure~\ref{fig: Dyck paths}.
Instead of the notation $\sigmaseq$, we use the notation $\alpha \in \DP_N$ for this Dyck path, and
\begin{align*}
\ConfBlockFun_\alpha (x_1 , \ldots , x_{2N})
\end{align*}
for the corresponding conformal block function.



We next list the defining properties of
$\big(\ConfBlockFun_\alpha\big)_{\alpha \in \DP}$ in the form that they will be used.
Denote $h = \frac{6-\kappa}{2\kappa}$ as before.
The required properties of $\ConfBlockFun_\alpha$ for $\alpha \in \DP_N$
are the partial differential equations
\begin{align}
\label{eq: PDE for conformal blocks} \tag{PDE} 
\left[ \frac{\kappa}{2} \pdder{x_j}
    + \sum_{\substack{i=1,\ldots,2N \\ i\neq j}} \Big( \frac{2}{x_i-x_j} \pder{x_i} - \frac{2 h}{(x_i-x_j)^2} \Big) \right] \; & \ConfBlockFun_\alpha (x_1 , \ldots, x_{2N}) = 0 \\
\nonumber
& \text{for all } j \in \set{1,\ldots,2N} ,
\end{align}
the M\"obius covariance
\begin{align}
\label{eq: COV for conformal blocks} \tag{COV} 
& \ConfBlockFun_\alpha(x_1 , \ldots, x_{2N}) = 
    \prod_{i=1}^{2 N} \Mob'(x_i)^{h} \times \ConfBlockFun_\alpha (\Mob(x_1) , \ldots, \Mob(x_{2N}))  \\
\nonumber
& \text{for all } \Mob(z) = \frac{a z + b}{c z + d}, \; \text{ with } a,b,c,d \in \bR, \; ad-bc > 0, 
 \text{ such that } \Mob(x_1) < \cdots < \Mob(x_{2N}) ,
\end{align}
and the 
recursive asymptotics properties
\begin{align}\label{eq: ASY for conformal blocks} \tag{$\ConfBlockFun$-ASY}
\lim_{x_j , x_{j+1} \to \xi} 
\frac{\ConfBlockFun_\alpha (x_1 , \ldots, x_{2N})}{(x_{j+1} - x_j)^{-2h}}
= \; & \begin{cases}
0 & \text{if } \slopeat{j} \in \alpha \\
   \ConfBlockFun_{\alpha \removeupwedge{j}} (x_1, \ldots, x_{j-1} , x_{j+2} , \ldots, x_{2N}) 
& \text{if } \upwedgeat{j} \in \alpha \\
-\frac{\qnum{\alpha(j)+1}}{\qnum{\alpha(j)+2}} \times
\ConfBlockFun_{\alpha \removedownwedge{j}} (x_1, \ldots, x_{j-1} , x_{j+2} , \ldots, x_{2N})  & \text{if } \downwedgeat{j} \in \alpha ,
    \end{cases}  \\
\nonumber 
& \text{for any $j \in \set{1, \ldots, 2N-1}$ and $\xi \in (x_{j-1}, x_{j+2})$,}
\end{align}
%
where 
the square bracket expressions are the $q$-integers
$\qnum{n} = \frac{q^n - q^{-n}}{q - q^{-1}}$
with the parameter $q = e^{\ii \pi 4 / \kappa}$ depending on $\kappa$.
Finally, the case $N=0$ fixes an overall normalization when
we require that $\ConfBlockFun_\emptywalk = 1$ for the Dyck path $\emptywalk \in \DP_0$.

In one of the main results of this article, Theorem~\ref{thm: change of basis theorem} in Section~\ref{sub: change of basis results}, 
we in particular prove that the conformal block functions $\ConfBlockFun_{\alpha}$ are uniquely determined by 
the properties above.

\bigskip

\section{Change of basis between conformal block functions and pure partition functions}
\label{sec: change of basis}



This section contains our first main result. It states first of all that
the conformal block functions and the multiple $\SLE$ pure partition functions, 
whose definition will be recalled below in Section~\ref{sub: pure partition functions}, both
form a basis of the same solution space of the system~\eqref{eq: PDE for conformal blocks} of partial differential equations.
Moreover, it gives an explicit $q$-combinatorial formula for the change of basis matrix.
%
%

We begin by discussing the space of functions.
Fix $N \in \bN$, and consider the system of $2N$ second order partial differential
equations 
and M\"obius covariance conditions 
as in Section~\ref{sub: defining conformal block functions},
\begin{align}
\label{eq: PDE for F} \tag{PDE} 
\left[ \frac{\kappa}{2} \pdder{x_j}
    + \sum_{\substack{i=1,\ldots,2N \\ i\neq j}} \left( \frac{2}{x_i-x_j} \pder{x_i} - \frac{2 h}{(x_i-x_j)^2} \right) \right] F (x_1 , \ldots, x_{2N}) = 0 & \\
\nonumber
\text{for all } j \in \set{1,\ldots,2N} , & \\
\label{eq: COV for F} \tag{COV} 
F (x_1 , \ldots, x_{2N}) = 
    \prod_{i=1}^{2 N} \Mob'(x_i)^{h} \times F (\Mob(x_1) , \ldots, \Mob(x_{2N})) & \\
\nonumber
\text{for all } \Mob(z) = \frac{a z + b}{c z + d}, \; \text{ with } a,b,c,d \in \bR, \; ad-bc > 0, 
    \text{ such that } \Mob(x_1) < \cdots < \Mob(x_{2N}) , &
\end{align}
for complex valued functions $F$ defined on the set
\begin{align*}
\chamber_{2N} := \; & \set{ (x_1, x_2, \ldots, x_{2N}) \in  \R^n \; \Big| \; x_1 < x_2 < \cdots < x_{2N} }
\end{align*}
of $2N$-tuples of real variables in increasing order. 
Require moreover
that $F$ has
at most polynomial growth on pairwise diagonals and at infinity in the sense that
\begin{align} 
\label{eq: power law growth bound} \tag{GROW}
\begin{split}
& \text{there exist positive constants $C,p > 0$ such that we have }  \\
& \big| F(x_1,\ldots,x_{2N}) \big| \leq C \times
\prod_{i < j} \max\left( (x_j - x_i)^p, (x_j - x_i)^{-p} \right)
\qquad \text{for all } (x_1,\ldots,x_{2N}) \in \chamber_{2N}.
\end{split}
\end{align}
We consider the following space of solutions:
\begin{align*} 
\Sol_N := \set{ F \colon \chamber_{2N} \to \bC \;\big|\; F \text{ satisfies 
\eqref{eq: PDE for F},~\eqref{eq: COV for F},
and~\eqref{eq: power law growth bound} } }.
\end{align*}
The dimension of this space is known to be the $N$:th Catalan number, $\dmn \Sol_N = \Catalan_N$,
and 
the multiple $\SLE$ pure partition functions
form a basis for $\Sol_N$,
as we 
recall precisely in Section~\ref{sub: pure partition functions}. 

\subsection{Multiple $\SLE$ pure partition functions}
\label{sub: pure partition functions}
In many situations in planar statistical physics, boundary conditions force the
existence of multiple macroscopic interfaces. In the scaling limit at criticality, such
interfaces are described by multiple $\SLEk$ curves 
with $\kappa$ depending on the model. Contrary to, e.g., a single chordal $\SLEk$ curve,
the law of a multiple $\SLEk$ with fixed 
number $N$ of curves
is not unique~--- instead, the possible laws form a non-trivial convex set.
It is thus natural to express any multiple $\SLEk$ as a convex combination of
the extremal points of this convex set: the pure geometries,
in which the curves connect the starting points of the interfaces  
pairwise in a deterministic planar pair partition. 
The pure geometries are thus
indexed by planar pair partitions, which in turn are in bijection with Dyck paths.
For background on multiple $\SLE$s, we refer 
to~\cite{BBK-multiple_SLEs,Dubedat-commutation,LK-configurational_measure,KP-pure_partition_functions_of_multiple_SLEs},
and for results on their role as scaling limits,
to~\cite{CS-universality_in_2d_Ising, Izyurov-critical_Ising_interfaces_in_multiply_connected_domains,
KKP-boundary_correlations_in_planar_LERW_and_UST,
PH-Global_multiple_SLEs_and_pure_partition_functions, Wu-convergence_of_Ising_interfaces_to_hypergeometric_SLE,
Beffara_Peltola_Wu-Uniqueness_of_global_multiple_SLEs, KS-configurations_of_FK_interfaces}.

For the purposes of this article, the important aspect of
multiple $\SLE$s is their partition functions $\PartF$, which essentially define the
multiple $\SLEk$ by giving the Radon-Nikodym density of its law with respect to independent chordal
$\SLEk$ laws, see \cite{Dubedat-commutation, KP-pure_partition_functions_of_multiple_SLEs}.
In particular, each pure geometry with connectivity encoded by a Dyck path $\alpha$ has a
partition function denoted by $\PartF_\alpha$.
%
These functions satisfy the partial differential equations~\eqref{eq: PDE for F} and
M\"obius covariance~\eqref{eq: COV for F} as before, and the following recursive asymptotics:
\begin{align}
\label{eq: ASY for multiple SLEs} \tag{$\PartF$-ASY} 
\lim_{x_j , x_{j+1} \to \xi} 
\frac{\PartF_\alpha(x_1 , \ldots, x_{2N})}{(x_{j+1} - x_j)^{-2h}}
= & \begin{cases}
    \PartF_{\alpha \removeupwedge{j} } (x_1, \ldots, x_{j-1} , x_{j+2} , \ldots, x_{2N}) & \text{ if } \upwedgeat{j} \in \alpha \\
    0 & \text{ if } \upwedgeat{j} \notin \alpha
    \end{cases} \\
\nonumber
& \text{for any $j \in \set{1, \ldots, 2N-1}$, and $\xi \in (x_{j-1}, x_{j+2})$.}
\end{align}
As stated in the following proposition, these requirements together with the
normalization condition $\PartF_\emptywalk = 1$ uniquely determine the functions $\PartF_\alpha$,
called the \emph{multiple SLE pure partition functions}.
%
%

\begin{prop}
\label{prop: solution space with power law bound}
Let $\kappa \in (0,8) \setminus \bQ$.
There exists a unique collection of functions $\big(\PartF_\alpha\big)_{\alpha \in \DP}$,
such that $\PartF_\alpha \in \Sol_N$ when $\alpha \in \DP_N$, $\PartF_\emptywalk = 1$,
and~\eqref{eq: ASY for multiple SLEs} holds for all $\alpha$.
Moreover, for any $N \in \bZnn$, the functions $\big( \PartF_\alpha \big)_{\alpha \in \DP_N}$
form a basis of the 
solution space $\Sol_N$.
\end{prop}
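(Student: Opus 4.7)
The proof will proceed by induction on $N$. The base case $N=0$ is immediate: $\DP_0 = \set{\emptywalk}$, the constant $\PartF_\emptywalk = 1$ trivially satisfies the required properties, and the Möbius covariance~\eqref{eq: COV for F} reduces every element of $\Sol_0$ to a constant, so $\dim \Sol_0 = 1 = \Catalan_0$.

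For the inductive step, the central tool is an asymptotic extraction map. For each $j \in \set{1,\ldots,2N-1}$ and $\xi \in (x_{j-1},x_{j+2})$, consider
\[ \Pi_j^\xi(F)(x_1,\ldots,x_{j-1},x_{j+2},\ldots,x_{2N}) := \lim_{x_j,x_{j+1}\to \xi}(x_{j+1}-x_j)^{2h} F(x_1,\ldots,x_{2N}). \]
A Frobenius analysis of~\eqref{eq: PDE for F} at the diagonal $x_j = x_{j+1}$ yields two indicial exponents, $-2h$ and $h(2)-2h=2/\kappa$; the assumption $\kappa \notin \bQ$ excludes logarithmic resonances between them, and the growth condition~\eqref{eq: power law growth bound} rules out subleading descendant contributions. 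A direct calculation then shows that the limit exists and $\Pi_j^\xi(F) \in \Sol_{N-1}$.

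The technical crux, which I expect to be the main obstacle, is the following \emph{removability statement}: if $F \in \Sol_N$ has $\Pi_j^\xi(F) \equiv 0$ for every $j$ and every $\xi$, then $F \equiv 0$. The idea is that the vanishing of all leading diagonal asymptotics forces $F$ to behave with the strictly larger subleading exponent $h(2)-2h > -2h$ on every pairwise diagonal, so that $F$ extends continuously to the closure of $\chamber_{2N}$; an integration-by-parts / energy argument based on~\eqref{eq: PDE for F}, together with~\eqref{eq: power law growth bound} and the Möbius covariance~\eqref{eq: COV for F} (which pins down decay at infinity), can then be made to force $F = 0$, in the spirit of the arguments in~\cite{PH-Global_multiple_SLEs_and_pure_partition_functions, KP-pure_partition_functions_of_multiple_SLEs}.

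Given removability, the three assertions of the proposition follow. For \emph{uniqueness}, the difference of two candidate collections lies in $\Sol_N$, and by the inductive hypothesis applied to~\eqref{eq: ASY for multiple SLEs}, all its $\Pi_j^\xi$ vanish, so the difference is zero. \emph{Existence} will be supplied by the quantum group construction of Section~\ref{sec: construction of conformal block functions}, or alternatively by the probabilistic constructions of pure multiple-$\SLE$ geometries in~\cite{KP-pure_partition_functions_of_multiple_SLEs, PH-Global_multiple_SLEs_and_pure_partition_functions, Beffara_Peltola_Wu-Uniqueness_of_global_multiple_SLEs}. For the \emph{basis property}, there are exactly $\Catalan_N$ functions $\PartF_\alpha$, and they are linearly independent because iterated asymptotic extraction along a sequence of up-wedge removals of $\alpha$ produces the constant $1$ for $\PartF_\alpha$ but $0$ for any $\PartF_\beta$ with $\beta\neq\alpha$, since at some step the iteration must extract at a non-up-wedge of the shortened $\beta$. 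For spanning, given any $F \in \Sol_N$, the inductive hypothesis lets us write each $\Pi_j^\xi(F)$ as a combination of pure partition functions of length $2N-2$; one can then determine coefficients $c_\alpha(F)$ such that $F - \sum_{\alpha \in \DP_N} c_\alpha(F) \PartF_\alpha$ has all $\Pi_j^\xi$ vanishing, which by removability concludes the proof.
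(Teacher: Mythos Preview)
Your outline is coherent and its inductive scaffolding (uniqueness via differences, linear independence via iterated up-wedge extraction) is correct and standard in this area. However, it differs substantially from the paper's proof, which is essentially a two-line citation: the dimension identity $\dim \Sol_N = \Catalan_N$ is taken from \cite[Theorem~8]{FK-solution_space_for_a_system_of_null_state_PDEs_3}, and existence together with linear independence of the $\PartF_\alpha$ is taken from \cite[Theorem~4.1]{KP-pure_partition_functions_of_multiple_SLEs}; the basis property then follows by counting. What your outline calls the ``removability statement'' is precisely (a reformulation of) the hard analytic content of the Flores--Kleban series \cite{FK-solution_space_for_a_system_of_null_state_PDEs_1, FK-solution_space_for_a_system_of_null_state_PDEs_2, FK-solution_space_for_a_system_of_null_state_PDEs_3}, which occupies three long papers; the one-sentence sketch (``integration-by-parts / energy argument \dots\ in the spirit of'') is not a proof, and the references you cite for it do not contain that argument in the form you need. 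So in effect you are citing the same input as the paper, just wrapped in more structure.

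There is also a genuine gap in your spanning argument. Given $F\in\Sol_N$, you propose to choose coefficients $c_\alpha$ so that $F-\sum_\alpha c_\alpha \PartF_\alpha$ has vanishing $\Pi_j^\xi$ for \emph{every} $j$. But the $\Pi_j$ for different $j$ impose different linear conditions on the $c_\alpha$, and you have not argued that these conditions are compatible; removability only tells you the combined map $F\mapsto(\Pi_j F)_j$ is injective, which does not by itself bound $\dim\Sol_N$ or guarantee that an arbitrary $F$ lies in the span. In fact, establishing this compatibility is equivalent to proving the upper bound $\dim\Sol_N\le\Catalan_N$, which is again the Flores--Kleban result. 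Once that dimension bound is granted, removability is a \emph{consequence} (not an input), and your argument collapses to the paper's: linear independence of $\Catalan_N$ vectors in a $\Catalan_N$-dimensional space.
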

\begin{proof}
By~\cite[Theorem~8]{FK-solution_space_for_a_system_of_null_state_PDEs_3}, 
we have
$\dmn \Sol_N = \Catalan_N$. 
On the other hand,~\cite[Theorem~4.1]{KP-pure_partition_functions_of_multiple_SLEs}
shows that the pure partition functions $\big( \PartF_\alpha \big)_{\alpha \in \DP_N}$
form a linearly independent set in this space
(the power-law bound~\eqref{eq: power law growth bound} can be verified from the explicit form of 
the functions as Coulomb gas integrals, 
see~\cite{KP-pure_partition_functions_of_multiple_SLEs, KP-conformally_covariant_boundary_correlation_functions_with_a_quantum_group}). 
The assertion follows, since $\# \DP_N = \Catalan_N$.
\end{proof}

\subsection{The change of basis result}
\label{sub: change of basis results}

We now show how to express the conformal block functions $\ConfBlockFun_\alpha$ in the basis of
the multiple $\SLE$ pure partition functions $\PartF_\alpha$ using weighted incidence
matrices of the parenthesis reversal relation. From this, it follows that the conformal
block functions are well-defined and also form a basis.

We take the weight of a Dyck tile $t$ at height $h_t$ to be
\begin{align}\label{eq: weights in terms of q-numbers}
w(t) := \frac{\qnum{h_t}}{\qnum{h_t+1}} ,
\end{align}
where $\qnum{n} = \frac{q^n - q^{-n}}{q - q^{-1}}$ and $q = e^{\ii \pi 4/\kappa}$ as before.
Denote by $\genMmat = (\genMmat_{\alpha, \beta})$ the correspondingly weighted incidence 
matrix~\eqref{eq: def of weighted incidence matrix}: its non-zero elements are
\begin{align} \label{eq: M}
\genMmat_{\alpha, \beta} = \prod_{t \in \nestedtilingof(\alpha / \beta)} (-w(t)) , \qquad \text{ for $\alpha \KWleq \beta$,}
\end{align}
where $\nestedtilingof(\alpha / \beta)$ is the nested Dyck tiling of the skew Young diagram 
$\alpha/\beta$.
Proposition~\ref{prop: weighted KW incidence matrix inversion} shows that the matrix 
$\genMmat$ is invertible and the non-zero matrix elements of its inverse are 
\begin{align}  \label{eq: Minv}
\genMinv_{\alpha, \beta} = \sum_{T \in \CItilingsof (\alpha/\beta)} \prod_{t \in T} w(t) , \qquad \text{ for $\alpha \DPleq \beta$,}
\end{align}
with $\CItilingsof (\alpha/\beta)$ the family of cover-inclusive Dyck tilings of 
the skew Young diagram $\alpha/\beta$.
Examples of these matrices are 
shown in Figures~\ref{fig: quantum KW matrices only}~and~\ref{fig: quantum CIDT matrices only}.

\begin{thm}\label{thm: change of basis theorem}
There exists a unique collection $\big(\ConfBlockFun_\alpha \big)_{\alpha \in \DP}$ 
such that $\ConfBlockFun_\alpha \in \Sol_N$ when $\alpha \in \DP_N$,
$\ConfBlockFun_\emptywalk = 1$, and 
the asymptotics~\eqref{eq: ASY for conformal blocks} hold. 
For any $\alpha \in \DP_N$, the function $\ConfBlockFun_\alpha$ 
of this collection can be written in the basis $\big( \PartF_\beta \big)_{\beta \in \DP_N}$ of
Proposition~\ref{prop: solution space with power law bound} as
\begin{align*}
\ConfBlockFun_\alpha =
\sum_{\beta \in \DP_N} \genMmat_{\alpha,\beta}\; \PartF_\beta,
\end{align*}
where $\genMmat$ is the weighted incidence matrix of the parenthesis reversal
relation with weights~\eqref{eq: weights in terms of q-numbers}.
Moreover, for any $N \in \bZnn$, the functions $\big( \ConfBlockFun_\alpha \big)_{\alpha \in \DP_N}$
form a basis of the 
solution space $\Sol_N$ and
\begin{align*}
\PartF_\alpha =
\sum_{\beta \in \DP_N} \genMinv_{\alpha,\beta}\;\ConfBlockFun_\beta .
\end{align*}
\end{thm}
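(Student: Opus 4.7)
My plan is to \emph{define} the collection by the change-of-basis formula itself---setting $\ConfBlockFun_\alpha := \sum_{\beta \in \DP_N} \genMmat_{\alpha,\beta}\, \PartF_\beta$ for $\alpha \in \DP_N$---and then verify that this definition produces functions with all the desired defining properties. Membership $\ConfBlockFun_\alpha \in \Sol_N$ is automatic since $\Sol_N$ is a vector space and each $\PartF_\beta \in \Sol_N$ by Proposition~\ref{prop: solution space with power law bound}; the normalization $\ConfBlockFun_\emptywalk = 1$ follows from $\genMmat^{(0)} = 1$. The only substantive step for existence is therefore checking the recursive asymptotics~\eqref{eq: ASY for conformal blocks}.

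For this main step, I would pass the limit through the finite sum and apply~\eqref{eq: ASY for multiple SLEs} to each $\PartF_\beta$, which keeps only the terms indexed by $\beta$ with $\upwedgeat{j} \in \beta$. Reindexing such terms by $\hat\beta := \beta \removeupwedge{j} \in \DP_{N-1}$ (so that $\beta = \hat\beta \wedgelift{j}$) gives
$$\lim_{x_j, x_{j+1} \to \xi} \frac{\ConfBlockFun_\alpha(x_1,\ldots,x_{2N})}{(x_{j+1}-x_j)^{-2h}} = \sum_{\hat\beta \in \DP_{N-1}} \genMmat_{\alpha,\, \hat\beta \wedgelift{j}}^{(N)}\, \PartF_{\hat\beta}(x_1,\ldots,x_{j-1},x_{j+2},\ldots,x_{2N}).$$
I then invoke Proposition~\ref{prop: recursion for matrix elements} with $f(h) = \qnum{h}/\qnum{h+1}$ to evaluate $\genMmat_{\alpha, \hat\beta \wedgelift{j}}^{(N)}$: it vanishes if $\slopeat{j} \in \alpha$, equals $\genMmat_{\hat\alpha, \hat\beta}^{(N-1)}$ if $\upwedgeat{j} \in \alpha$, and equals $-\frac{\qnum{\alpha(j)+1}}{\qnum{\alpha(j)+2}}\, \genMmat_{\hat\alpha, \hat\beta}^{(N-1)}$ if $\downwedgeat{j} \in \alpha$, where $\hat\alpha := \alpha \removewedge{j}$. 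Combining this case split with the inductive identity $\ConfBlockFun_{\hat\alpha} = \sum_{\hat\beta} \genMmat_{\hat\alpha,\hat\beta}^{(N-1)} \PartF_{\hat\beta}$ reproduces exactly the three branches of~\eqref{eq: ASY for conformal blocks}.

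Uniqueness then follows from running the same argument in reverse: given any candidate collection $(\ConfBlockFun'_\alpha)$, expand $\ConfBlockFun'_\alpha = \sum_\beta c_{\alpha,\beta}^{(N)} \PartF_\beta$ in the pure-partition basis; the properties~\eqref{eq: ASY for conformal blocks} force the coefficient matrices $c^{(N)}$ to satisfy the recursion of Proposition~\ref{prop: recursion for matrix elements} with weights $f(h) = \qnum{h}/\qnum{h+1}$, and the uniqueness clause of that proposition identifies $c^{(N)}$ with $\genMmat^{(N)}$. Finally, Proposition~\ref{prop: weighted KW incidence matrix inversion} yields the invertibility of $\genMmat$ and the explicit inverse $\genMinv$ via~\eqref{eq: Minv}, from which the basis property of $\big(\ConfBlockFun_\alpha\big)_{\alpha \in \DP_N}$ and the inverse relation $\PartF_\alpha = \sum_\beta \genMinv_{\alpha,\beta}\, \ConfBlockFun_\beta$ follow by formal linear algebra. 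The main care is in the case analysis of the key calculation: I must confirm that the choice of weight~\eqref{eq: weights in terms of q-numbers} as $\qnum{h}/\qnum{h+1}$---and not any other $q$-expression---is precisely what makes the down-wedge branch of Proposition~\ref{prop: recursion for matrix elements} match the $q$-prefactor in~\eqref{eq: ASY for conformal blocks}.
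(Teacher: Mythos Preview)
Your proposal is correct and follows essentially the same approach as the paper's own proof: both reduce the asymptotics~\eqref{eq: ASY for conformal blocks} to the recursion of Proposition~\ref{prop: recursion for matrix elements} for the coefficient matrices in the basis $(\PartF_\beta)$, identify the unique solution as $\genMmat$, and appeal to Proposition~\ref{prop: weighted KW incidence matrix inversion} for invertibility. The only difference is organizational---the paper argues existence and uniqueness simultaneously by starting from an arbitrary candidate collection, whereas you separate existence (define via $\genMmat$ and verify) from uniqueness (force $c^{(N)}=\genMmat^{(N)}$)---but the substance is identical.
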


%

\begin{rem}
The above change of basis formulas 
can be regarded as analogues of Fomin's formulas in the following sense.
In a special planar case, general Fomin's formulas~\cite{Fomin-LERW_and_total_positivity} yield
linear relationships between determinants of discrete Green's functions and 
probabilities of certain planar connectivity events for the uniform spanning tree~\cite[Section~3.4]{KKP-boundary_correlations_in_planar_LERW_and_UST}.
This linear system is encoded in an invertible matrix $(\Mmat_{\alpha, \beta})$ indexed by Dyck paths, 
whose non-zero entries are of the form~\eqref{eq: M} with unit tile weights $w=1$, 
instead of the $q$-dependent weights~\eqref{eq: weights in terms of q-numbers}.
Thus, the structure of the non-zero entries is encoded in the same combinatorial relation $\alpha \KWleq \beta$. 
In the scaling limit as the mesh of the graph tends to zero, these planar connectivity probabilities tend to
the pure partition functions of multiple $\SLE_\kappa$ for $\kappa = 2$,
and the determinants tend to a distinguished basis of the solution space $\Sol_N$ with $\kappa = 2$,
closely related to the conformal blocks
--- 
see~\cite[Section~5]{KW-boundary_partitions_in_trees_and_dimers}
and~\cite[Theorems~3.12~and~4.1~and~Proposition~4.6]{KKP-boundary_correlations_in_planar_LERW_and_UST}. 
\end{rem}

\begin{proof}[Proof of Theorem~\ref{thm: change of basis theorem}]
Let $\big( \ConfBlockFun_\alpha \big)_{\alpha \in \DP}$ be any collection of functions such that
$\ConfBlockFun_\alpha \in \Sol_N$ for $\alpha \in \DP_N$
and the asymptotics~\eqref{eq: ASY for conformal blocks} hold with $\ConfBlockFun_\emptywalk = 1$.
Write $\ConfBlockFun_\alpha$ in the basis $\big( \PartF_\beta \big)_{\beta \in \DP_N}$ of $\Sol_N$ as
\begin{align*}
\ConfBlockFun_\alpha = \sum_{\beta \in \DP_N} M^{(N)}_{\alpha, \beta} \; \PartF_\beta ,
\end{align*}
which, for each $N \in \bN$, defines a matrix $M^{(N)} \in \bC^{\DP_N \times \DP_N}$.
The 
recursive asymptotics~\eqref{eq: ASY for conformal blocks} then
equivalently require that the matrices $M^{(N)}_{\alpha, \beta}$ satisfy the initial condition $M^{(0)}=1$ and the recursion
\begin{align}
\begin{split}
& \lim_{x_j , x_{j+1} \to \xi} \frac{1}{(x_{j+1} - x_j)^{-2h}}
    \sum_{\beta \in \DP_N} M^{(N)}_{\alpha, \beta} \; \PartF_\beta (x_1 , \ldots, x_{2N}) \\
= \; & \begin{cases} 
0 & \text{if } \slopeat{j} \in \alpha \\
\label{eq: coblo walk projection conditions 2}
 \sum_{\hat{\beta} \in \DP_{N-1}} M^{(N-1)}_{\hat{\alpha}, \hat{\beta} } \; \PartF_{ \hat{\beta} } (x_1 , \ldots, x_{j-1} , x_{j+2}, \ldots, x_{2N})
& \text{if } \upwedgeat{j} \in \alpha \\
- \frac{\qnum{\alpha(j)+1}}{\qnum{\alpha(j)+2}} \times
 \sum_{\hat{\beta} \in \DP_{N-1}} M^{(N-1)}_{\hat{\alpha}, \hat{\beta} } \; \PartF_{ \hat{\beta} } (x_1 , \ldots, x_{j-1} , x_{j+2}, \ldots, x_{2N})
& \text{if } \downwedgeat{j} \in \alpha
\end{cases}
\end{split}
\end{align}
where we denote $\alpha \removewedge{j} = \hat{\alpha}$.
Now recall the asymptotics properties~\eqref{eq: ASY for multiple SLEs} of pure partition functions,
and note that each $\hat{\beta} \in \DP_{N-1}$ determines a unique 
$\beta \in \DP_N$ with $\upwedgeat{j} \in \beta$ such that $\beta \removeupwedge{j} = \hat{\beta}$.
The left-hand side of~\eqref{eq: coblo walk projection conditions 2} then becomes
\begin{align*}
& \lim_{x_j , x_{j+1} \to \xi} \frac{1}{(x_{j+1} - x_j)^{-2h}}
    \sum_{\beta \in \DP_N} M^{(N)}_{\alpha, \beta} \; \PartF_\beta (x_1 , \ldots, x_{2N}) \\
= \; & \sum_{\hat{\beta} \in \DP_{N-1}} M^{(N)}_{\alpha, \beta} \; \PartF_{ \hat{\beta} } (x_1 , \ldots, x_{j-1} , x_{j+2}, \ldots, x_{2N}) .
\end{align*}
Since $\big( \PartF_{ \hat{\beta} } \big)_{\hat{\beta}  \in \DP_{N - 1}}$ is a basis,
the recursion~\eqref{eq: coblo walk projection conditions 2} is equivalent to
the following: 
for any $j \in \set{1,\ldots,2N-1}$ and any $\beta \in \DP_N$ such that $\upwedgeat{j} \in \beta$, we have
\begin{align}\label{eq: recursion for the change of basis matrix alex}
M_{\alpha,\beta}^{(N)} = \begin{cases} 
   0 & \text{if } \slopeat{j} \in \alpha \\
   M_{\hat{\alpha},\hat{\beta}}^{(N-1)}
   & \text{if } \upwedgeat{j} \in \alpha \\
   -\frac{\qnum{\alpha(j)+1}}{\qnum{\alpha(j)+2}} \times
   M_{\hat{\alpha},\hat{\beta}}^{(N-1)} 
   & \text{if } \downwedgeat{j} \in \alpha,
\end{cases} 
\end{align}
where we denote by $\hat{\alpha} = \alpha \removewedge{j} \in \DP_{N-1}$ and 
$\hat{\beta} = \beta \removeupwedge{j} \in \DP_{N-1}$. Finally,
Proposition~\ref{prop: recursion for matrix elements} states that the
recursion~\eqref{eq: recursion for the change of basis matrix alex} holds if
and only if the matrices  $M^{(N)}$ are, for any $N$, 
the weighted incidence matrices of the parenthesis reversal relation, $M=\genMmat$. The rest follows since
the matrix $\genMmat$ is, for any $N$, invertible by Proposition~\ref{prop: weighted KW incidence matrix inversion}.
\end{proof}

\bigskip

\section{Direct construction of conformal block functions by a quantum group method}
\label{sec: construction of conformal block functions}
In the preceding section, we expressed the conformal block function~$\ConfBlockFun_\alpha$
as linear combinations of multiple $\SLE$ pure partition functions $\PartF_\alpha$ and vice versa, 
generalizing Fomin's formula~\cite{Fomin-LERW_and_total_positivity, KKP-boundary_correlations_in_planar_LERW_and_UST}.
These expressions can also be viewed as a construction of the conformal block functions, which
however relies on an earlier construction of the multiple SLE pure partition
functions and detailed information about the solution
space~\cite{FK-solution_space_for_a_system_of_null_state_PDEs_1,
FK-solution_space_for_a_system_of_null_state_PDEs_2, FK-solution_space_for_a_system_of_null_state_PDEs_3,
KP-pure_partition_functions_of_multiple_SLEs}.
In this section, we provide an alternative, more direct construction of the
conformal block functions $\ConfBlockFun_\alpha$ based on a quantum group method developed
in~\cite{KP-conformally_covariant_boundary_correlation_functions_with_a_quantum_group}.
In analogy with the core underlying idea of
conformal blocks as discussed in Section~\ref{sec: conformal block functions background},
the present construction employs the Dyck path $\alpha$
as labeling a sequence of representations of the quantum group $\Uqsltwo$.
This quantum group construction furthermore sheds some light on why $q$-combinatorial
formulas for conformal blocks appear in the first place.

Generalizations of this construction for
conformal blocks in representations of the quantum group $\Uqsltwo$ and in
relation to CFT correlation functions are used and studied in~\cite{Flores-Peltola:WJTL_algebra, Flores-Peltola:Colored_braid_representations_and_QSW,
Flores-Peltola:Monodromy_invariant_correlation_function}.

\subsection{\label{sub: quantum group}The quantum group and its representations}

We begin by introducing the needed definitions and notation
about the quantum group $\Uqsltwo$ and its representations.
For more background, see, e.g.,~\cite{Kassel-Quantum_groups} and references therein.
Let $q = e^{\ii \pi 4 / \kappa}$ as before.
As a $\bC$-algebra, $\Uqsltwo$ is generated by the elements $K,K^{-1},E,F$ subject to 
the relations
\begin{align} \label{eq: quantum group relations}
\begin{split}
 KK^{-1}=&\; 1=K^{-1}K,\qquad KE=q^{2}EK,\qquad KF=q^{-2}FK,\\
 EF-FE=&\; \frac{1}{q-q^{-1}}\left(K-K^{-1}\right) . 
\end{split}
\end{align}
It has a Hopf algebra structure, with coproduct $\Delta \colon \Uqsltwo \rightarrow \Uqsltwo\tens\Uqsltwo$ given on its generators~by
\begin{align}\label{eq: coproduct} 
\Hcp(E) = \; & E\tens K + 1\tens E,\qquad
\Hcp(K) = \; K \tens K,\qquad
\Hcp(F) = \; F \tens 1 + K^{-1} \tens F . 
\end{align}
The coproduct is used to define the action of 
the Hopf algebra $\Uqsltwo$
on tensor products of representations as follows. If the coproduct of an element $X \in \Uqsltwo$ reads
\[ \Hcp(X) = \sum_i X_i' \tens X_i'' , \]
and if $V'$ and $V''$ are two representations, then
$X$ acts on a tensor $v' \tens v'' \in V' \tens V''$ by the formula
\begin{align*}
X.(v' \tens v'') = \; & \sum_i X_i'.v' \tens X_i''.v'' .
\end{align*}
Tensor product representations with 
$n$ tensor components are defined using the $(n-1)$-fold coproduct
\begin{align*}
\Hcp^{(n)} = \; & (\Hcp\tens\id^{\tens(n-2)})\circ(\Hcp\tens\id^{\tens(n-3)})\circ\cdots
\circ(\Hcp\tens\id)\circ\Hcp ,
\qquad \Hcp^{(n)} \colon\Uqsltwo\rightarrow\Big(\Uqsltwo\Big)^{\tens n} .
\end{align*}
By the coassociativity property 
$(\id\tens\Hcp)\circ\Hcp = (\Hcp\tens\id)\circ\Hcp$ 
the tensor products of representations thus defined are associative, i.e.,
there is no need to specify the order in which the tensor products are formed.

For each $d \in \bN$, the quantum group
$\Uqsltwo$ has an irreducible representation $\Wd_d$ of dimension $d$,
obtained by suitably $q$-deforming the $d$-dimensional irreducible representation
of the simple Lie algebra $\mathfrak{sl}_2$. Of primary importance to
us is the two-dimensional irreducible representation $\Wd_2$: it has a basis $\set{\Wbas_0 , \Wbas_1}$
on which the generators act by
\begin{align*} 
K. \Wbas_0 = q \, \Wbas_0 , \qquad
K. \Wbas_1 = q^{-1} \, \Wbas_1 , \qquad
E. \Wbas_0 = 0 , \qquad
E. \Wbas_1 = \Wbas_0 , \qquad
F. \Wbas_0 = \Wbas_1 , \qquad
F. \Wbas_1 = 0 .
\end{align*}
A similar explicit definition of the $d$-dimensional irreducible
$\Wd_d$ can be found in, e.g.,~\cite{KP-conformally_covariant_boundary_correlation_functions_with_a_quantum_group}.
The tensor product of two two-dimensional irreducibles
decomposes as a direct sum of subrepresentations,
\[ \Wd_2 \tens \Wd_2 \isom \Wd_1 \oplus \Wd_3 , \]
where $\Wd_1$ is a one-dimensional
subrepresentation spanned by the vector
\begin{align} \label{eq: singlet basis vector}
\Sbas 
= \frac{1}{q-q^{-1}}\left(\Wbas_{1}\tens\Wbas_{0}-q\,\Wbas_{0}\tens\Wbas_{1}\right),
\end{align}
and $\Wd_3$ is a three-dimensional irreducible subrepresentation with basis
\begin{align*}
\TRbas_+ 
= \Wbas_{0}\tens\Wbas_{0},\qquad 
\TRbas_0 
= q^{-1}\,\Wbas_{0}\tens\Wbas_{1} + \Wbas_{1}\tens\Wbas_{0},\qquad 
\TRbas_- 
= [2]\,\Wbas_{1}\tens\Wbas_{1} .
\end{align*}
We denote the projection onto the 
one-dimensional
subrepresentation
$\Wd_1 \subset \Wd_2 \tens \Wd_2$ 
by
\begin{align*}
\pi\colon\; &\Wd_{2}\tens\Wd_{2}\to\Wd_{2}\tens\Wd_{2} , \qquad 
\pi(\Sbas)=\Sbas \qquad \text{ and } \qquad
\pi(\TRbas_{+}) = \pi(\TRbas_{0}) = \pi(\TRbas_{-}) = 0 .
\end{align*}
The one-dimensional representation $\Wd_1$ is trivial in the sense that
it is the neutral element for tensor products of representations: for any representation $V$,
we have $\Wd_1 \tens V \isom V \isom V \tens \Wd_1$, and
$\Wd_1$ can thus simply be identified with 
the scalars $\bC$.
Using the identification $\Sbas \mapsto 1 \in \bC$, we denote the
projection from $\Wd_2 \tens \Wd_2$ to $\Wd_1 \isom \bC$ by
\begin{align}\label{eq: projection to singlet}
\hat{\pi}\colon\; &\Wd_{2}\tens\Wd_{2}\to \bC , \qquad
\hat{\pi}(\Sbas) = 1 \qquad \text{ and } \qquad
\hat{\pi}(\TRbas_{+}) = \hat{\pi}(\TRbas_{0}) = \hat{\pi}(\TRbas_{-}) = 0 .
\end{align}

More generally, we have the $q$-Clebsch-Gordan formula
\begin{align}\label{eq: decomposition of tensor product}
\Wd_{d_{2}}\tens\Wd_{d_{1}}\isom\; &
    \Wd_{d_{1}+d_{2}-1}\oplus\Wd_{d_{1}+d_{2}-3}\oplus\cdots\oplus\Wd_{|d_{1}-d_{2}|+3}\oplus\Wd_{|d_{1}-d_{2}|+1} 
\end{align}
for the direct sum decomposition of the tensor product of 
the irreducible representations of dimensions $d_1$ and $d_2$,
see, e.g.~\cite[Lemma~2.4]{KP-conformally_covariant_boundary_correlation_functions_with_a_quantum_group}.
Repeated application of the 
decomposition~\eqref{eq: decomposition of tensor product} gives
\begin{align}\label{eq: tensor power of two dimensionals}
\Wd_2^{\tens n} \; \isom \; \bigoplus_d m_d^{(n)} \, \Wd_d,
\end{align}
where the irreducible $\Wd_d$ of dimension $d$ appears with multiplicity
$m_d^{(n)}$, see, e.g.~\cite[Lemma~2.2]{KP-pure_partition_functions_of_multiple_SLEs}.
When $n = 2N$,
the trivial subrepresentation 
\begin{align} \label{eq: trivial submodule}
\HWsp_{2N}^{(0)}
:= \set{v \in \Wd_{2}^{\tens 2N}\;\Big|\;E.v=0 ,\;K.v=v}
\end{align}
coincides with the sum of all 
copies of $\Wd_1$, and has dimension equal to a Catalan number
\[ \dmn \HWsp_{2N}^{(0)} = m_1^{(2N)} = \Catalan_N . \]

Finally,
in the tensor product $\Wd_2^{\tens n}$, we denote by $\pi_j$ and $\hat{\pi}_j$
the projections $\pi$ and $\hat{\pi}$ 
acting on the $j$:th and $(j+1)$:st tensor components counting from the right, i.e.,
\begin{align*}
\pi_j := \; & \id^{\tens (n-1-j)} \tens \pi \tens \id^{\tens (j-1)} \colon\; \Wd_{2}^{\tens n} \to \; \Wd_{2}^{\tens n} \\
\hat{\pi}_j := \; & \id^{\tens (n-1-j)} \tens \hat{\pi} \tens \id^{\tens (j-1)} \colon\; \Wd_{2}^{\tens n} \to \; \Wd_{2}^{\tens (n-2)} .
\end{align*}

\subsection{Constructing conformal blocks}
The purpose of this section is to give a construction of the conformal block functions.
Our construction relies on the method 
introduced in~\cite{KP-conformally_covariant_boundary_correlation_functions_with_a_quantum_group},
called ``spin chain~-~Coulomb gas correspondence'', 
which is allows to solve conformal field theory PDEs with given boundary conditions
by quantum group calculations. We 
use the correspondence in the following form, which combines a special case of a more general theorem
in~\cite{KP-conformally_covariant_boundary_correlation_functions_with_a_quantum_group}
with additional information available in that special case~\cite{KP-pure_partition_functions_of_multiple_SLEs}.
\begin{prop} 
\label{prop: SCCG correspondence map}
Let $\kappa \in (0,8) \setminus \bQ$ and $q = e^{\ii \pi 4 / \kappa}$.
There exist explicit linear isomorphisms
\[ \sF \colon \HWsp_{2N}^{(0)} \to \Sol_N ,\] 
for all $N \in \bZnn$, with the following property.
Let $v \in \HWsp_{2N}^{(0)}$, 
and $j \in \set{1,2,\ldots,2N-1}$, and denote
$\hat{v} = \hat{\pi}_{j}(v) \in \HWsp_{2(N-1)}^{(0)}$.
Then, for any  $\xi \in (x_{j-1},x_{j+2})$, the function $\sF[v] \colon \chamber_{2N} \to \bC$ has the asymptotics 
\begin{align*}
\lim_{x_{j},x_{j+1}\to\xi}
\frac{\sF[v](x_{1},\ldots,x_{2N})}{(x_{j+1}-x_{j})^{-2 h}}
\; = \; \; & 
B \times 
\sF[\hat{v}](x_{1},\ldots,x_{j-1},x_{j+2},\ldots,x_{2N}) ,
\end{align*}
where $B = \frac{\Gamma(1-4/\kappa)^2}{\Gamma(2-8/\kappa)}$.
\end{prop}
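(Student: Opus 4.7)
The plan is to invoke the ``spin chain~--~Coulomb gas correspondence'' from~\cite{KP-conformally_covariant_boundary_correlation_functions_with_a_quantum_group}, which provides an explicit formula for the map $\sF$, and then extract the asymptotic behavior by a contour-integral analysis as two arguments collide. First, I would expand a generic $v \in \HWsp_{2N}^{(0)}$ in the tensor-product basis $\set{\Wbas_{s_1}\tens\cdots\tens\Wbas_{s_{2N}} : s_i \in \set{0,1}}$ of $\Wd_2^{\tens 2N}$, noting that the conditions $E.v = 0$ and $K.v = v$ constrain the expansion coefficients via explicit $q$-numerical relations. To each basis tensor, the construction of~\cite{KP-conformally_covariant_boundary_correlation_functions_with_a_quantum_group} assigns a multiple Coulomb gas integral with integrand
\begin{align*}
\prod_{i<j}(x_j - x_i)^{2/\kappa}\,\prod_{k,i}(u_k - x_i)^{-4/\kappa}\,\prod_{k<l}(u_k - u_l)^{8/\kappa} ,
\end{align*}
and with a contour of integration dictated by the labels $(s_1,\ldots,s_{2N})$; define $\sF[v]$ as the corresponding linear combination. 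The system~\eqref{eq: PDE for F} follows from standard screening-charge cancellation, the integrand being engineered so that the BPZ operator produces a total derivative in each screening variable $u_k$; M\"obius covariance~\eqref{eq: COV for F} follows from the explicit conformal weight exponents in the integrand; and the power-law bound~\eqref{eq: power law growth bound} follows from crude estimates of the integrals.

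Second, I would verify that $\sF$ is a linear isomorphism by a dimension count. Both spaces have dimension $\Catalan_N$: for $\HWsp_{2N}^{(0)}$ this follows from~\eqref{eq: trivial submodule} together with the iterated Clebsch--Gordan decomposition~\eqref{eq: tensor power of two dimensionals}, and for $\Sol_N$ from~\cite[Theorem~8]{FK-solution_space_for_a_system_of_null_state_PDEs_3}. Injectivity can be imported from~\cite{KP-pure_partition_functions_of_multiple_SLEs}: the singlet-channel vectors in $\Wd_2^{\tens 2N}$ indexed by planar link patterns map under $\sF$ exactly to the pure partition functions $\PartF_\alpha$, which form a basis of $\Sol_N$ by Proposition~\ref{prop: solution space with power law bound}.

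The main work, and the substantive content of the proposition, lies in the asymptotic statement. The projection $\hat{\pi}_j$ picks out the unique singlet component in the $j$-th and $(j{+}1)$-st tensor slots, as in~\eqref{eq: singlet basis vector} and~\eqref{eq: projection to singlet}. As $x_j, x_{j+1} \to \xi$, only contour configurations in which exactly one screening variable is trapped between $x_j$ and $x_{j+1}$ contribute at the leading order. For such a trapped screening $u_k$, the substitution $u_k = x_j + (x_{j+1} - x_j)\, t$ extracts a prefactor $(x_{j+1} - x_j)^{-2h}$, followed by the beta integral
\begin{align*}
\int_0^1 t^{-4/\kappa}(1-t)^{-4/\kappa}\,\ud t \; = \; \frac{\Gamma(1-4/\kappa)^2}{\Gamma(2-8/\kappa)} \; = \; B ,
\end{align*}
while the remaining integrals collapse onto the contour system defining $\sF[\hat{v}]$ on $2(N{-}1)$ points. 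The hard part is the combinatorial bookkeeping of how $\hat{\pi}_j$ acts on the tensor basis versus how the contour configurations contract, as well as verifying that the non-singlet (triplet) channel contributes only at the subleading order $(x_{j+1}-x_j)^{h(2)-2h}$ predicted by the selection rules of Section~\ref{sec: conformal block functions background}. This contour-theoretic matching between quantum-group projectors and Coulomb gas integrals is precisely the refined content of the correspondence established in~\cite{KP-conformally_covariant_boundary_correlation_functions_with_a_quantum_group, KP-pure_partition_functions_of_multiple_SLEs}, and assembling these ingredients gives the stated asymptotics with the advertised constant $B$.
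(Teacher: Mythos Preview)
Your proposal is correct and follows essentially the same route as the paper: the paper's proof is purely by citation, deferring the construction of $\sF$ and the verification of \eqref{eq: PDE for F}, \eqref{eq: COV for F}, \eqref{eq: power law growth bound} to~\cite{KP-conformally_covariant_boundary_correlation_functions_with_a_quantum_group}, injectivity to~\cite{KP-pure_partition_functions_of_multiple_SLEs}, isomorphism to a dimension count, and the asymptotics to~\cite[Theorem~4.17(ASY)]{KP-conformally_covariant_boundary_correlation_functions_with_a_quantum_group}. Your sketch simply unpacks what those references actually do --- the Coulomb gas integrand, the screening-charge mechanism for the PDEs, and the trapped-contour / beta-integral extraction of the constant $B$ --- so the two arguments coincide in substance, with yours being more explicit.
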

\begin{proof}
Such a map $\sF$
was constructed in~\cite{KP-conformally_covariant_boundary_correlation_functions_with_a_quantum_group} 
and it follows from the explicit expressions of the functions $\sF[v]$ as Coulomb gas integrals
that $\sF[v] \in \Sol_N$ for all $v \in \HWsp_{2N}^{(0)}$.
That $\sF$ is injective is proven in~\cite{KP-pure_partition_functions_of_multiple_SLEs},
and comparison of dimensions then shows that $\sF$ is a linear isomorphism.
Finally, the asymptotics property follows immediately 
from~\cite[Theorem~4.17(ASY)]{KP-conformally_covariant_boundary_correlation_functions_with_a_quantum_group}.
\end{proof}

With the help of the correspondence $\sF$ of Proposition~\ref{prop: SCCG correspondence map},
the task of constructing the conformal block functions is reduced to the task of
constructing suitable vectors in the trivial subrepresentation $\HWsp_{2N}^{(0)}$ 
of a tensor product $\Wd_2^{\tens 2N}$ of
two-dimensional irreducible representations of the quantum group. This is achieved in the following 
proposition, which we prove in the end of this section.

\begin{prop}\label{prop: conformal block vectors}
There exists a unique collection of vectors $( \Coblobastwodim_\alpha )_{\alpha \in \DP}$, 
with $\Coblobastwodim_\alpha \in \HWsp_{2N}^{(0)}$ 
when $\alpha \in \DP_N$,
such that $\Coblobastwodim_\emptywalk = 1$ and the following projection properties hold:
\begin{align}
& \hat{\pi}_j(\Coblobastwodim_\alpha) =
\label{eq: coblo walk projection conditions}
\begin{cases} 
0 & \text{if } \slopeat{j} \in \alpha \\
\Coblobastwodim_{\alpha\removeupwedge{j}} 
& \text{if } \upwedgeat{j} \in \alpha \\
- \frac{\qnum{\alpha(j)+1}}{\qnum{\alpha(j)+2}} \times
\Coblobastwodim_{\alpha \removedownwedge{j}} 
& \text{if } \downwedgeat{j} \in \alpha
\end{cases}
\qquad\text{for all } j \in \set{1,\ldots,2N-1}.
\end{align}
Moreover, for any $N \in \bZnn$, the collection $( \Coblobastwodim_\alpha )_{\alpha \in \DP_N}$ is a basis of $\HWsp_{2N}^{(0)}$.
\end{prop}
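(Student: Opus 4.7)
The plan is to construct the vectors $\Coblobastwodim_\alpha \in \HWsp_{2N}^{(0)}$ explicitly via iterated $q$-Clebsch-Gordan embeddings, following the interpretation of a Dyck path $\alpha$ as specifying a sequence of intermediate irreducible subrepresentations $\Wd_{\alpha(0)+1}, \Wd_{\alpha(1)+1}, \ldots, \Wd_{\alpha(2N)+1}$ inside $\Wd_2^{\tens 2N}$. Concretely, at each step $i$ the decomposition~\eqref{eq: decomposition of tensor product} singles out an embedding $\Wd_{\alpha(i)+1} \hookrightarrow \Wd_{\alpha(i-1)+1} \tens \Wd_2$, either as the higher-dimensional or the lower-dimensional summand depending on whether the step is up or down. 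Composing these embeddings and using the identifications $\Wd_{\alpha(0)+1} = \Wd_{\alpha(2N)+1} = \Wd_1 \isom \bC$, I obtain a map $\iota_\alpha \colon \bC \to \Wd_2^{\tens 2N}$ whose image lies in $\HWsp_{2N}^{(0)}$, and I would set $\Coblobastwodim_\alpha := \iota_\alpha(1)$ once a normalization convention for the up- and down-embeddings is fixed.

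I would first check that the collection $(\Coblobastwodim_\alpha)_{\alpha \in \DP_N}$ is a basis of $\HWsp_{2N}^{(0)}$. The iteratively refined direct sum decomposition~\eqref{eq: tensor power of two dimensionals} separates the copies of $\Wd_1$ in $\Wd_2^{\tens 2N}$ according to the intermediate representations appearing after each successive $\Wd_2$-factor, so different Dyck paths produce vectors lying in distinct summands and are in particular linearly independent. Since $\dmn \HWsp_{2N}^{(0)} = \Catalan_N = \# \DP_N$, this linearly independent set is a basis.

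Next, I would verify the projection properties~\eqref{eq: coblo walk projection conditions} by cases on the local shape of $\alpha$ at position $j$. If $\slopeat{j} \in \alpha$, then the two tensor factors at positions $j,j+1$ of $\Coblobastwodim_\alpha$ are constrained to lie in the $\Wd_3$-summand of the local $\Wd_2 \tens \Wd_2$ (because $|\alpha(j+1) - \alpha(j-1)| = 2$ forces the intermediate component), so $\hat{\pi}_j$ annihilates $\Coblobastwodim_\alpha$ by~\eqref{eq: projection to singlet}. If $\wedgeat{j} \in \alpha$, these two tensor factors lie in the $\Wd_1$-summand, and the remaining $2N-2$ factors assemble into $\Coblobastwodim_{\alpha \removewedge{j}}$, so $\hat{\pi}_j(\Coblobastwodim_\alpha)$ is a scalar multiple of this shorter vector. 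Computing that scalar from the explicit singlet~\eqref{eq: singlet basis vector} and the $q$-Clebsch-Gordan coefficients for the up- and down-embeddings at heights $\alpha(j)+1$ and $\alpha(j)$ produces the factor $1$ in the up-wedge case and $-\frac{\qnum{\alpha(j)+1}}{\qnum{\alpha(j)+2}}$ in the down-wedge case.

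Uniqueness would follow by induction on $N$: expand any other such collection as $\Coblobastwodim'_\alpha = \sum_{\beta \in \DP_N} c_{\alpha,\beta} \, \Coblobastwodim_\beta$, apply $\hat{\pi}_j$ at a wedge position $j$ of $\alpha$ (which exists for $N \geq 1$), and compare with~\eqref{eq: coblo walk projection conditions}; since the shorter-path basis $(\Coblobastwodim_{\hat\beta})_{\hat\beta \in \DP_{N-1}}$ is linearly independent by induction, this forces $c_{\alpha,\beta} = \delta_{\alpha,\beta}$. The main obstacle I anticipate is the careful $q$-Clebsch-Gordan bookkeeping needed to identify the scalar in the down-wedge case as exactly $-\frac{\qnum{\alpha(j)+1}}{\qnum{\alpha(j)+2}}$: this hinges on the asymmetry in~\eqref{eq: singlet basis vector} together with a consistent normalization of the up-embeddings at all heights, and on the fact that repeatedly applying Clebsch-Gordan in different orders gives the same vector (coassociativity of $\Hcp$), so the construction depends only on $\alpha$ and not on the bracketing used to iterate the embeddings.
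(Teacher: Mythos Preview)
Your proposal follows essentially the same route as the paper: the paper implements your iterated $q$-Clebsch--Gordan embeddings by giving explicit recursive formulas for highest-weight vectors $u_\alpha^{(k)} \in \HWsp_k^{(\alpha(k))}$ (tensoring with $\Wbas_0$ for an up-step, and with a specific combination of $\Wbas_1 \tens (\cdot)$ and $\Wbas_0 \tens F.(\cdot)$ for a down-step), then sets $\Coblobastwodim_\alpha = \qnum{2}^N c_\alpha\, u_\alpha^{(2N)}$ with an explicit normalization $c_\alpha$. Linear independence and the projection properties are then checked by direct computation in this basis, case by case on the local shape at $j$, exactly as you outline; the down-slope and down-wedge cases are indeed where all the $q$-integer bookkeeping sits.

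There is one genuine gap, in your uniqueness argument. Applying $\hat\pi_j$ at a \emph{single} wedge position of $\alpha$ and comparing coefficients in the basis $(\Coblobastwodim_{\hat\beta})_{\hat\beta\in\DP_{N-1}}$ cannot force $c_{\alpha,\beta}=\delta_{\alpha,\beta}$: for each $\hat\beta\in\DP_{N-1}$ there can be two $\beta\in\DP_N$ (one with an up-wedge and one with a down-wedge at $j$) mapping to it, and every $\beta$ with a slope at $j$ disappears from the equation altogether, so you obtain only $\Catalan_{N-1}$ linear relations among $\Catalan_N$ unknowns. The paper's argument is cleaner and avoids this: it sets $v=\Coblobastwodim'_\alpha-\Coblobastwodim_\alpha$, notes that the projection properties together with the inductive hypothesis give $\hat\pi_j(v)=0$ for \emph{every} $j\in\{1,\ldots,2N-1\}$, and then invokes the lemma (quoted from~\cite{KP-pure_partition_functions_of_multiple_SLEs}) that any vector in $\HWsp_{2N}^{(0)}$ annihilated by all $\hat\pi_j$ must vanish. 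That lemma is the piece your sketch is missing.
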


Once Proposition~\ref{prop: conformal block vectors} is established, the
construction 
is immediate:
\begin{thm}\label{thm: conformal block functions}
Let $( \Coblobastwodim_\alpha )_{\alpha \in \DP}$ be the collection
of vectors in Proposition~\ref{prop: conformal block vectors}, and let 
$\sF \colon \HWsp_{2N}^{(0)} \to \Sol_N$ 
be the linear isomorphisms of Proposition~\ref{prop: SCCG correspondence map}.
Then the functions
\[ \ConfBlockFun_\alpha := \frac{1}{B^N} \times \sF[\Coblobastwodim_\alpha]  , \qquad \text{for $\alpha \in \DP_N$,}\]
satisfy the defining 
properties~\eqref{eq: PDE for conformal blocks}, \eqref{eq: COV for conformal blocks}, and \eqref{eq: ASY for conformal blocks}
of conformal block functions.
\end{thm}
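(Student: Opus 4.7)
The plan is to verify the three defining properties \eqref{eq: PDE for conformal blocks}, \eqref{eq: COV for conformal blocks}, and \eqref{eq: ASY for conformal blocks} in turn. The argument is essentially a bookkeeping combination of Propositions~\ref{prop: SCCG correspondence map} and~\ref{prop: conformal block vectors}; the normalization factor $1/B^N$ is designed precisely so that the $B$-factor appearing at each asymptotic step telescopes with the drop in $N$.

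First, \eqref{eq: PDE for conformal blocks} and \eqref{eq: COV for conformal blocks} are immediate. By Proposition~\ref{prop: SCCG correspondence map}, $\sF[\Coblobastwodim_\alpha] \in \Sol_N$, so it satisfies both the PDE system and the M\"obius covariance (and the power-law growth bound) by definition of $\Sol_N$. Multiplying by the nonzero constant $B^{-N}$ preserves all three properties, so $\ConfBlockFun_\alpha = B^{-N}\sF[\Coblobastwodim_\alpha] \in \Sol_N$ likewise.

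For \eqref{eq: ASY for conformal blocks}, fix $\alpha \in \DP_N$, $j \in \set{1,\ldots,2N-1}$, and $\xi \in (x_{j-1},x_{j+2})$. Applying the asymptotics part of Proposition~\ref{prop: SCCG correspondence map} to $v = \Coblobastwodim_\alpha$, and dividing by $B^N$, I obtain
\begin{align*}
\lim_{x_j,x_{j+1}\to\xi} \frac{\ConfBlockFun_\alpha(x_1,\ldots,x_{2N})}{(x_{j+1}-x_j)^{-2h}}
= \frac{1}{B^{N-1}}\, \sF\bigl[\hat{\pi}_j(\Coblobastwodim_\alpha)\bigr](x_1,\ldots,x_{j-1},x_{j+2},\ldots,x_{2N}).
\end{align*}
Since $\hat{\pi}_j(\Coblobastwodim_\alpha) \in \HWsp_{2(N-1)}^{(0)}$, the right-hand side is (up to the factor) a function of $2(N-1)$ variables, and $B^{-(N-1)} \sF[\,\cdot\,]$ on $\HWsp_{2(N-1)}^{(0)}$ is exactly the definition of $\ConfBlockFun$ at level $N-1$. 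I now substitute the three cases of Proposition~\ref{prop: conformal block vectors}: when $\slopeat{j} \in \alpha$, the vector $\hat{\pi}_j(\Coblobastwodim_\alpha)$ vanishes and the limit is $0$; when $\upwedgeat{j} \in \alpha$, it equals $\Coblobastwodim_{\alpha\removeupwedge{j}}$ and the limit is $\ConfBlockFun_{\alpha\removeupwedge{j}}$; when $\downwedgeat{j} \in \alpha$, the extra scalar $-\qnum{\alpha(j)+1}/\qnum{\alpha(j)+2}$ factors out by linearity of $\sF$, yielding precisely the third case of \eqref{eq: ASY for conformal blocks}.

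Two small remarks close the argument. Linearity of $\sF$ (so that $\sF[0] = 0$, needed in the slope case) is part of Proposition~\ref{prop: SCCG correspondence map}. For the base case $N=0$, the space $\HWsp_0^{(0)}$ is one-dimensional, $\Sol_0$ consists of constants, and $B^0 = 1$; identifying $\Coblobastwodim_\emptywalk$ with $1 \in \bC$ and $\sF[1]$ with the constant function $1$ gives $\ConfBlockFun_\emptywalk = 1$, which is the initial condition implicit in the recursive asymptotics. The only conceptual content beyond these Propositions is the choice of normalization $B^{-N}$, which is forced by the requirement that the scalar factor at each asymptotic step be exactly $1$ or $-\qnum{\alpha(j)+1}/\qnum{\alpha(j)+2}$ and not $B$ times those. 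There is no genuine obstacle; the argument is a one-step induction already carried out by the combination of the two propositions.
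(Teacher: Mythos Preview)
Your proof is correct and follows exactly the paper's approach: the paper's proof is the single sentence ``This follows by combining Propositions~\ref{prop: conformal block vectors} and~\ref{prop: SCCG correspondence map},'' and your argument is a careful unpacking of precisely that combination, including the role of the $B^{-N}$ normalization in making the asymptotic recursion close up.
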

\begin{proof}
This follows by combining Propositions~\ref{prop: conformal block vectors}
and~\ref{prop: SCCG correspondence map}.
\end{proof}

\subsection{\textbf{Proof of Proposition~\ref{prop: conformal block vectors}}}

The rest of this section constitutes the proof of Proposition~\ref{prop: conformal block vectors},
divided in four parts: uniqueness, construction, linear independence, and verification of
projection properties. 
The uniqueness is routine by considering the corresponding homogeneous problem.
The explicit construction of the vectors $\Coblobastwodim_\alpha$ is the essence of the proof.
In the construction, each Dyck path $\alpha \in \DP_N$
specifies a sequence of representations of the quantum group,
and we recursively assemble the vector $\Coblobastwodim_\alpha$ proceeding along this sequence.
Linear independence is transparent in the construction. Finally, for the projection properties
we just have to inspect a number of cases explicitly.



\subsubsection{\textbf{Uniqueness}}

Uniqueness of the collection
$( \Coblobastwodim_\alpha )_{\alpha \in \DP}$ 
of vectors satisfying the projection properties~\eqref{eq: coblo walk projection conditions}
follows from arguments exploited in similar contexts in
the articles~\cite{KP-pure_partition_functions_of_multiple_SLEs,
P-basis_for_solutions_of_BSA_PDEs}.
The crucial observation is the following lemma about the homogeneous problem.

\begin{lem}
\label{lem: all projections vanish}
If a vector $v \in \HWsp_{2N}^{(0)}$ 
satisfies the property
$\hat{\pi}_{j}^{(1)}(v) = 0$ for all $j \in \set{1,\ldots,2N-1}$, then $v = 0$.  
\end{lem}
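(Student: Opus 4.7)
The approach is to promote the single-level hypothesis $\hat{\pi}_j(v) = 0$ into vanishing of every iterated ``complete cap'' of $v$, and then to conclude by non-degeneracy of the standard link-state pairing on $\HWsp_{2N}^{(0)}$.

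For the base case $N=1$, the space $\HWsp_2^{(0)}$ is the one-dimensional span of $\Sbas$, and $\hat{\pi}_1(\Sbas) = 1 \in \HWsp_0^{(0)} \cong \bC$, so the hypothesis directly forces $v = 0$.

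For $N \geq 2$, I would associate to each Dyck path $\alpha \in \DP_N$ a complete cap functional $\hat{\Pi}_\alpha \colon \HWsp_{2N}^{(0)} \to \bC$, defined recursively as follows. Any nonempty Dyck path has at least one up-wedge position $j$ with $\upwedgeat{j} \in \alpha$, corresponding to an innermost adjacent link in the associated link pattern. Setting
\[
\hat{\Pi}_\alpha \;:=\; \hat{\Pi}_{\alpha \removeupwedge{j}} \,\circ\, \hat{\pi}_j ,
\qquad
\hat{\Pi}_\emptywalk \;:=\; \id_\bC ,
\]
one checks independence of the choice of $j$: two distinct up-wedges in a single Dyck path are necessarily separated by at least two positions (a neighboring pair $\upwedgeat{j}, \upwedgeat{j+1}$ would require $\alpha(j+1)$ to be simultaneously smaller and larger than $\alpha(j)$), so the corresponding projections $\hat{\pi}_j$ act on disjoint tensor components and commute up to the natural index shift. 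Since the hypothesis $\hat{\pi}_j(v) = 0$ holds for \emph{every} $j$, iteration of the recipe immediately yields $\hat{\Pi}_\alpha(v) = 0$ for every $\alpha \in \DP_N$.

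I would then invoke non-degeneracy of the link-state pairing. The space $\HWsp_{2N}^{(0)}$ carries a standard basis of link-state vectors $(\Sbas_\alpha)_{\alpha \in \DP_N}$ built by nested singlet insertions following the nesting structure encoded by $\alpha$. The Gram matrix
\[
G_{\alpha, \beta} \;:=\; \hat{\Pi}_\alpha(\Sbas_\beta) , \qquad \alpha, \beta \in \DP_N ,
\]
is the meander matrix of the Temperley--Lieb algebra, whose determinant has a classical product formula in terms of $q$-integers and is nonzero under the genericity assumption $\kappa \in (0,8) \setminus \bQ$. Writing $v = \sum_\alpha c_\alpha \Sbas_\alpha$, the relations $\hat{\Pi}_\alpha(v) = 0$ become the linear system $G c = 0$, forcing $c = 0$ and hence $v = 0$. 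The main obstacle is exactly this last input: non-degeneracy of the meander Gram matrix at generic $q$ is classical but non-trivial, and is precisely what underlies the genericity assumption $\kappa \notin \bQ$ used throughout the paper.
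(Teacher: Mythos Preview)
The paper does not supply a self-contained proof here; it simply cites an external reference ([KP16], Corollary~2.5). Your argument is correct and is the standard Temperley--Lieb route that underlies such results: the link-state vectors $\Sbas_\alpha$ span $\HWsp_{2N}^{(0)}$, the complete-cap functionals $\hat{\Pi}_\alpha$ span its dual, and the pairing matrix is (up to diagonal rescaling) the meander Gram matrix, whose determinant is a known product of $q$-integers and hence nonzero when $q$ is not a root of unity. The reduction step --- that $\hat{\Pi}_\alpha(v)=0$ follows immediately from $\hat{\pi}_j(v)=0$ because the recursion begins with some $\hat{\pi}_j$ --- is correct and transparent. One minor remark: the well-definedness of $\hat{\Pi}_\alpha$ (independence of the chosen order of up-wedge removals) is not strictly needed for the argument, since any fixed ordering for each $\alpha$ already yields $\Catalan_N$ functionals whose pairing with the $\Sbas_\beta$ is invertible.
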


\begin{proof}
See, e.g.,~\cite[Corollary~2.5]{KP-pure_partition_functions_of_multiple_SLEs}. 
\end{proof}

As a corollary, the solution space of the recursive projection
properties~\eqref{eq: coblo walk projection conditions} is one-dimensional,
with initial condition $u_\emptywalk \in \Wd_2^{\tens 0} \isom \bC$ determining
the solution.
\begin{cor}\label{cor: uniqueness}
Let $( \Coblobastwodim_\alpha )_{\alpha \in \DP}$ and 
$( \Coblobastwodim'_\alpha )_{\alpha \in \DP}$ 
be two collections of vectors 
$\Coblobastwodim_\alpha,\Coblobastwodim'_\alpha \in \HWsp_{2N}^{(0)}$
satisfying the projection properties~\eqref{eq: coblo walk projection conditions}, 
and having same initial condition $\Coblobastwodim'_\emptywalk = \Coblobastwodim_\emptywalk$. 
Then we have
\begin{align*}
\Coblobastwodim'_\alpha = \Coblobastwodim_\alpha\qquad\text{for all }\alpha \in \DP.
\end{align*}
\end{cor}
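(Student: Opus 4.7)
The plan is to reduce the statement to the homogeneous version of the projection recursion and then invoke Lemma~\ref{lem: all projections vanish} inductively on $N$. For each $\alpha \in \DP_N$, set $w_\alpha := \Coblobastwodim_\alpha - \Coblobastwodim'_\alpha \in \HWsp_{2N}^{(0)}$. Since each projection $\hat{\pi}_j$ is linear and both collections satisfy~\eqref{eq: coblo walk projection conditions}, the vectors $w_\alpha$ satisfy the same recursion: $\hat{\pi}_j(w_\alpha)$ equals $0$, $w_{\alpha\removeupwedge{j}}$, or $-\tfrac{\qnum{\alpha(j)+1}}{\qnum{\alpha(j)+2}}\, w_{\alpha\removedownwedge{j}}$ in the slope, up-wedge, and down-wedge cases, respectively. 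The goal is thus to show that $w_\alpha = 0$ for every $\alpha \in \DP$.

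I would proceed by induction on $N$. The base case $N = 0$ is immediate from the hypothesis $\Coblobastwodim'_\emptywalk = \Coblobastwodim_\emptywalk$, which gives $w_\emptywalk = 0$. For the inductive step, assume that $w_\beta = 0$ for every $\beta \in \DP_{N-1}$, and fix any $\alpha \in \DP_N$. For every $j \in \set{1, \ldots, 2N-1}$, the two consecutive steps of $\alpha$ at position $j$ fall into exactly one of the three cases $\slopeat{j} \in \alpha$, $\upwedgeat{j} \in \alpha$, or $\downwedgeat{j} \in \alpha$; in each case the recursion above together with the induction hypothesis on $\DP_{N-1}$ forces $\hat{\pi}_j(w_\alpha) = 0$.

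Once every projection $\hat{\pi}_j(w_\alpha)$, $j \in \set{1,\ldots,2N-1}$, has been shown to vanish, Lemma~\ref{lem: all projections vanish} (applied to $w_\alpha \in \HWsp_{2N}^{(0)}$) yields $w_\alpha = 0$, closing the induction.

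There is no substantial obstacle here: the content of the argument lies entirely in Lemma~\ref{lem: all projections vanish}, which ensures that a highest-weight-zero vector with no $\Wd_1$-component at any adjacent pair of tensor factors must be zero. What the proof contributes is merely the observation that the recursive projection identities, being linear and inhomogeneous only through the lower-rank data that vanishes by induction, propagate uniqueness from length $2(N-1)$ Dyck paths to length $2N$ Dyck paths uniformly in $j$, regardless of the slope/wedge type at $j$.
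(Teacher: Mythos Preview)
Your proof is correct and follows essentially the same approach as the paper: induct on $N$, form the difference $w_\alpha = \Coblobastwodim_\alpha - \Coblobastwodim'_\alpha$, use the induction hypothesis to see that all projections $\hat{\pi}_j(w_\alpha)$ vanish, and then apply Lemma~\ref{lem: all projections vanish}. The paper's version is more terse, but the argument is identical.
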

\begin{proof}
Let $N \geq 1$ and suppose the condition
$\Coblobastwodim'_\beta = \Coblobastwodim_\beta$ holds 
for all $\beta \in \DP_{N-1}$.
Then, for any $\alpha \in \DP_N$, 
the difference $v = \Coblobastwodim'_\alpha - \Coblobastwodim_\alpha$
satisfies $\hat{\pi}_j(v) = 0$ for all $j \in \set{1,\ldots,2N-1}$, 
so $v = 0$ by Lemma~\ref{lem: all projections vanish}. 
The assertion follows by induction on $N$.
\end{proof}

\subsubsection{\textbf{Construction}}

We now construct the vectors $\Coblobastwodim_\alpha$ of Proposition~\ref{prop: conformal block vectors} and show that they lie in the correct subspaces $\HWsp_{2N}^{(0)}$.
In the intermediate steps of the construction,
we encounter vectors in the highest weight vector 
spaces
\begin{align} \label{eq: highest vector space}
\HWsp_{n}^{(\nodef)} = \set{ v \in \Wd_{2}^{\tens n} \; \Big| \; E.v=0 ,\;K.v = q^{\nodef}v } . 
\end{align}
These spaces consist of generators of the $\Wd_{d}$-isotypic components in the tensor
product~\eqref{eq: tensor power of two dimensionals} with $d = \nodef + 1$:
for any non-zero $v \in \HWsp_{n}^{(\nodef)}$, the collection $(F^k.v)_{k=0}^s$ 
obtained from $v$ by the action of the generator $F$ spans a subrepresentation 
isomorphic to $\Wd_{d}$ in $\Wd_{2}^{\tens n}$.
For each $d$, the dimension of the linear space 
\eqref{eq: highest vector space} equals $m_d^{(n)}$.
When $s \neq 0$,
the spaces~\eqref{eq: highest vector space} themselves are not 
representations of $\Uqsltwo$. 

For $k=1,2,\ldots,2N$,
we will first construct vectors $u_\walk^{(k)} \in \HWsp_{k}^{(\walk(k))}$, 
which can be thought of as being indexed by the first $k$ steps of the walk $\walk$.
From these vectors we will then construct the vectors $\Coblobastwodim_\alpha$ --- see Equation~\eqref{eq: coblobas vectors} below.

Let $u_\walk^{(0)} := 1 \in \bC \isom \Wd_2^{\tens 0}$. 
Define recursively
$u_\walk^{(k+1)} \in \Wd_2^{\tens (k+1)}$ in terms of $u_\walk^{(k)} \in \Wd_2^{\tens k}$ by
\begin{align}\label{eq: coblobas vector recursive construction}
u_\walk^{(k+1)} := \; & \begin{cases}
\Wbas_0 \tens u_\walk^{(k)} & \text{if }\walk(k+1) = \walk(k)+1 \\
\frac{1}{q-q^{-1}}\left( \Wbas_1 \tens u_\walk^{(k)} 
- \frac{q^{\walk(k)}}{\qnum{\walk(k)}}\,\Wbas_0 \tens F.u_\walk^{(k)} \right) 
& \text{if } \walk(k+1) = \walk(k)-1. \end{cases}
\end{align}

\begin{lem}\label{lem: recursively defined vectors}
For $k=0,1,\ldots,2N$,
the vectors $u_\walk^{(k)} \in \Wd_2^{\tens k}$
satisfy $u_\walk^{(k)} \in \HWsp_{k}^{(\walk(k))}$, that is, we have
\begin{align}\label{eq: hwv properties of u} 
E.u_\walk^{(k)} = 0 \qquad \text{and} \qquad K.u_\walk^{(k)} 
= q^{\walk(k)}\,u_\walk^{(k)}.
\end{align}
\end{lem}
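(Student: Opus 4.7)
The plan is to prove the lemma by induction on $k$, since the vectors $u_\walk^{(k)}$ are defined recursively in $k$. The base case $k = 0$ is immediate: $u_\walk^{(0)} = 1 \in \bC \isom \Wd_2^{\tens 0}$ is the trivial representation, and $\walk(0) = 0$, so both conditions $E.1 = 0$ and $K.1 = q^0 \cdot 1 = 1$ hold.

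For the inductive step, assuming the claim for $k$, I would handle separately the two cases in the recursive definition~\eqref{eq: coblobas vector recursive construction}. In both cases, the verification is a direct computation using the coproduct~\eqref{eq: coproduct} and the algebra relations~\eqref{eq: quantum group relations}. In the up-step case $\walk(k+1) = \walk(k)+1$, we have $u_\walk^{(k+1)} = \Wbas_0 \tens u_\walk^{(k)}$; computing $K$-eigenvalue via $\Hcp(K) = K \tens K$ gives the factor $q \cdot q^{\walk(k)} = q^{\walk(k)+1}$, and computing $E.(\Wbas_0 \tens u_\walk^{(k)}) = E.\Wbas_0 \tens K.u_\walk^{(k)} + \Wbas_0 \tens E.u_\walk^{(k)}$ gives zero since $E.\Wbas_0 = 0$ and $E.u_\walk^{(k)} = 0$ by the inductive hypothesis.

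The down-step case $\walk(k+1) = \walk(k)-1$ is the substantive computation. For the $K$-action, I would use $KF = q^{-2}FK$ to get $K.(F.u_\walk^{(k)}) = q^{\walk(k)-2} \, F.u_\walk^{(k)}$, and then verify that both terms in the definition of $u_\walk^{(k+1)}$ have the same eigenvalue $q^{\walk(k)-1} = q^{\walk(k+1)}$. For $E.u_\walk^{(k+1)}$, the key input is the commutator relation $EF - FE = (K - K^{-1})/(q - q^{-1})$, applied to $u_\walk^{(k)}$: this gives $E.F.u_\walk^{(k)} = \qnum{\walk(k)} \, u_\walk^{(k)}$ since $E.u_\walk^{(k)} = 0$ by induction. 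The specific coefficient $-q^{\walk(k)}/\qnum{\walk(k)}$ appearing in~\eqref{eq: coblobas vector recursive construction} has been tuned precisely so that the two contributions to $E.u_\walk^{(k+1)}$ cancel.

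The main obstacle to watch for is the division by $\qnum{\walk(k)}$ in the down-step: this is only well-defined when $\walk(k) \neq 0$ and $\qnum{\walk(k)} \neq 0$. The first holds because a down-step from $\walk(k)$ to $\walk(k+1) = \walk(k)-1 \geq 0$ forces $\walk(k) \geq 1$; the second holds because $q = e^{\ii \pi 4/\kappa}$ is not a root of unity under the assumption $\kappa \in (0,8) \setminus \bQ$, so the $q$-integers $\qnum{n}$ are non-zero for all $n \in \Zpos$. Beyond this, the proof is purely mechanical.
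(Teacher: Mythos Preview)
Your proposal is correct and follows essentially the same approach as the paper's proof: induction on $k$, handling the up-step and down-step cases separately via direct computation with the coproduct~\eqref{eq: coproduct} and the relations~\eqref{eq: quantum group relations}, with the commutator $EF - FE = (K-K^{-1})/(q-q^{-1})$ being the key ingredient in the down-step case. Your additional remark about the well-definedness of the division by $\qnum{\walk(k)}$ is a useful observation that the paper leaves implicit.
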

\begin{proof}
We prove the assertion by induction on $k$ relying on a direct calculation. 
The base case $k=0$ is clear. Assuming that the claim holds for $u_\walk^{(k)}$,
we verify it for $u_\walk^{(k+1)}$.
Recall that the actions of $E$ and $K$ on $\Wd_2 \tens \Wd_2^{\tens k}$ are given by the 
coproduct~\eqref{eq: coproduct}. We 
use the identities
\begin{align*}
E.\Wbas_0 = 0, \qquad E.\Wbas_1 = \Wbas_0, \qquad
K.\Wbas_0 = q\,\Wbas_0, \qquad K.\Wbas_1 = q^{-1}\,\Wbas_1, \qquad
E.u_\walk^{(k)} = 0, \qquad K.u_\walk^{(k)} = q^{\walk(k)}\,u_\walk^{(k)}.
\end{align*}
If $\walk(k+1) = \walk(k)+1$, we have 
$u_\walk^{(k+1)} = \Wbas_0 \tens u_\walk^{(k)}$ and we easily calculate
\begin{align*}
E.u_\walk^{(k+1)} = \; & E.\Wbas_0 \tens K.u_\walk^{(k)}
 + 1.\Wbas_0 \tens E.u_\walk^{(k)} = 0 \\
K.u_\walk^{(k+1)} = \; & K.\Wbas_0 \tens K.u_\walk^{(k)} 
= q^{1+\walk(k)} \, u_\walk^{(k+1)} = q^{\walk(k+1)}\,u_\walk^{(k+1)}.
\end{align*}

If $\walk(k+1) = \walk(k)-1$, 
then we have
$u_\walk^{(k+1)} 
= \frac{1}{q-q^{-1}}\left( \Wbas_1 \tens u_\walk^{(k)} 
- \frac{q^{\walk(k)}}{\qnum{\walk(k)}}\,\Wbas_0 \tens F.u_\walk^{(k)} \right)$, 
and we similarly~get
\begin{align*}
E.u_\walk^{(k+1)} = \; & \frac{1}{q-q^{-1}} \left(
E.\Wbas_1 \tens K.u_\walk^{(k)} 
- q^{\walk(k)}\, \frac{E.\Wbas_0 \tens KF.u_\walk^{(k)}}{\qnum{\walk(k)}}
+ 1.\Wbas_1 \tens E.u_\walk^{(k)} 
- q^{\walk(k)}\,
\frac{1.\Wbas_0 \tens EF.u_\walk^{(k)}}{\qnum{\walk(k)}} \right) \\
= \; & \frac{1}{q-q^{-1}}\left(
E.\Wbas_1 \tens K.u_\walk^{(k)} 
- \frac{q^{\walk(k)}}{\qnum{\walk(k)}}\,\Wbas_0 \tens 
\frac{(K-K^{-1}).u_\walk^{(k)}}{q-q^{-1}} \right) \\
= \; & \frac{1}{q-q^{-1}}\left(
q^{\walk(k)}\, \Wbas_0 \tens u_\walk^{(k)} 
- \frac{q^{\walk(k)}}{\qnum{\walk(k)}} \,\Wbas_0 \tens 
\frac{q^{\walk(k)}-q^{-\walk(k)}}{q-q^{-1}} \, u_\walk^{(k)} \right) \\
= \; & \frac{q^{\walk(k)}}{q-q^{-1}}\left(
\Wbas_0 \tens u_\walk^{(k)} 
- \frac{\qnum{\walk(k)}}{\qnum{\walk(k)}} \,\Wbas_0 \tens u_\walk^{(k)} \right) 
= 0,
\end{align*}
where we also used the commutation relation 
$EF-FE = \frac{1}{q-q^{-1}}\left(K-K^{-1}\right)$ from~\eqref{eq: quantum group relations}.

Finally, using the commutation relation $KF = q^{-2} FK$  from~\eqref{eq: quantum group relations}, we get
(still with $\walk(k+1) = \walk(k)-1$)
\begin{align*}
K.u_\walk^{(k+1)} = \; & \frac{1}{q-q^{-1}}\left(
K.\Wbas_1 \tens K.u_\walk^{(k)} 
- q^{\walk(k)}\, 
\frac{K.\Wbas_0 \tens KF.u_\walk^{(k)}}{\qnum{\walk(k)}} \right) \\
= \; & \frac{1}{q-q^{-1}}\left(
q^{-1+\walk(k)}\, \Wbas_1 \tens u_\walk^{(k)} 
- q^{1-2+2\walk(k)} \, \frac{\Wbas_0 \tens F.u_\walk^{(k)}}{\qnum{\walk(k)}} 
\right) \\
= \; & q^{\walk(k)-1}\,u_\walk^{(k+1)} = q^{\walk(k+1)}\,u_\walk^{(k+1)}.
\end{align*}
This concludes the proof.
\end{proof}

The vectors $\Coblobastwodim_\alpha$ corresponding to 
the conformal block functions $\ConfBlockFun_\alpha$
are obtained by taking the last
of the recursively defined vectors above, $u_\walk^{(2N)}$, and normalizing it appropriately.
Specifically, for 
$\alpha \in \DP_N$, we set
\begin{align}\label{eq: coblobas vectors}
\Coblobastwodim_\alpha 
:= \qnum{2}^N \NormalizationConstant_\alpha \times u_{\alpha}^{(2N)},
\qquad \text{ where } \qquad
\NormalizationConstant_\alpha:=
\Big( \prod_{\upwedgeat{i} \in \alpha} \frac{1}{\qnum{\alpha(i)+1}} \Big)
\Big( \prod_{\downwedgeat{i} \in \alpha} \qnum{\alpha(i)+1} \Big) .
\end{align}
We finish this subsection by noting that these vectors indeed belong
to the trivial subrepresentation~\eqref{eq: trivial submodule}.

\begin{cor} \label{cor: coblo walk hwv properties}
We have $\Coblobastwodim_\alpha \in \HWsp_{2N}^{(0)}$ 
for all $\alpha \in \DP_N$.
\end{cor}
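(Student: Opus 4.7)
The proof is essentially immediate from the preceding lemma, so my plan is just to unpack the definitions and invoke it.

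The plan is to apply Lemma~\ref{lem: recursively defined vectors} with $k = 2N$. That lemma guarantees that $u_\alpha^{(k)} \in \HWsp_{k}^{(\alpha(k))}$ for every $k \in \{0, 1, \ldots, 2N\}$, i.e.\ that $E.u_\alpha^{(k)} = 0$ and $K.u_\alpha^{(k)} = q^{\alpha(k)}\, u_\alpha^{(k)}$. Specialising to the endpoint $k = 2N$, I would then use the defining property $\alpha(2N) = 0$ of a Dyck path $\alpha \in \DP_N$ to conclude that
\begin{align*}
E.u_\alpha^{(2N)} = 0, \qquad K.u_\alpha^{(2N)} = q^{0}\, u_\alpha^{(2N)} = u_\alpha^{(2N)} ,
\end{align*}
so that $u_\alpha^{(2N)} \in \HWsp_{2N}^{(0)}$ by the definition~\eqref{eq: trivial submodule} of this subspace.

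The vector $\Coblobastwodim_\alpha$ defined in~\eqref{eq: coblobas vectors} differs from $u_\alpha^{(2N)}$ only by the scalar factor $\qnum{2}^N \NormalizationConstant_\alpha \in \bC$, and $\HWsp_{2N}^{(0)}$ is a linear subspace of $\Wd_2^{\tens 2N}$, so it is closed under scalar multiplication. Therefore $\Coblobastwodim_\alpha \in \HWsp_{2N}^{(0)}$ as well. There is no real obstacle here; the content of the corollary is entirely absorbed in Lemma~\ref{lem: recursively defined vectors}, and the only thing to check beyond that lemma is the identity $\alpha(2N) = 0$, which is built into the definition of a Dyck path in $\DP_N$.
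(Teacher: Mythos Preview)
Your proof is correct and follows exactly the same approach as the paper: apply Lemma~\ref{lem: recursively defined vectors} at $k = 2N$, use $\alpha(2N) = 0$, and observe that the scalar factor is harmless. The paper's proof is just a one-line reference to~\eqref{eq: hwv properties of u}, but the content is identical to what you wrote.
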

\begin{proof}
This follows immediately from the properties
\eqref{eq: hwv properties of u} of $u_{\alpha}^{(k)}$ with $k = 2N$.
\end{proof}

\subsubsection{\label{sss: linear independence}\textbf{Linear independence}}

We now quickly verify the linear independence of the
vectors $\Coblobastwodim_\alpha$ constructed in~\eqref{eq: coblobas vector recursive construction}
and~\eqref{eq: coblobas vectors}.
Since we have $\dmn \HWsp_{2N}^{(0)} = \Catalan_N = \# \DP_N$,
linear independence also implies that the collection
$(\Coblobastwodim_\alpha)_{\alpha \in \DP_N}$ is a basis of $\HWsp_{2N}^{(0)}$.

By the recursive construction~\eqref{eq: coblobas vector recursive construction},
the first $k$ steps of $\alpha$ determine a vector $u_\alpha^{(k)} \in \Wd_2^{\tens k}$.
Inductively on $k$, it is clear that all different initial segments of $k$ steps
define linearly independent vectors. The linear independence of
$(\Coblobastwodim_\alpha)_{\alpha \in \DP_N}$ follows from the case $k=2N$.

\subsubsection{\textbf{Projection properties}}

To prove the projection properties~\eqref{eq: coblo walk projection conditions}
for the vectors $\Coblobastwodim_\alpha$ 
constructed in~\eqref{eq: coblobas vector recursive construction}
and~\eqref{eq: coblobas vectors},
we use a recursion property of the normalization coefficients
$\NormalizationConstant_\alpha$. 

\begin{lem}\label{lem: recursion for normalization constants}
The coefficients $\NormalizationConstant_\alpha$ satisfy the 
following recursion: for any $j$, we have
\begin{align}\label{eq: recursion for coblobas normalization constants}
\NormalizationConstant_\alpha= \begin{cases}
\frac{\qnum{\alpha(j)}}{\qnum{\alpha(j)+1}} \times
 \NormalizationConstant_{\alpha \removeupwedge{j}} & \text{if } \upwedgeat{j} \in \alpha \\ 
\frac{\qnum{\alpha(j)+1}}{\qnum{\alpha(j)+2}}\times
 \NormalizationConstant_{\alpha \removedownwedge{j}} & \text{if } \downwedgeat{j} \in \alpha . 
\end{cases}
\end{align}
\end{lem}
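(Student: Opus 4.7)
The plan is to verify the recursion by a direct computation comparing the product formula for $\NormalizationConstant_\alpha$ with that for $\NormalizationConstant_{\hat{\alpha}}$, where $\hat{\alpha} = \alpha \removewedge{j}$, exploiting the fact that the two paths agree outside a bounded window around $j$.

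First I would observe that, by the explicit definition $\hat{\alpha}(i) = \alpha(i)$ for $i < j$ and $\hat{\alpha}(i) = \alpha(i+2)$ for $i \geq j$, the wedge structure at any index $i \in \set{1, \ldots, j-2}$ of $\alpha$ matches that at the same index of $\hat{\alpha}$, and the wedge structure at any index $i \in \set{j+2, \ldots, 2N-1}$ of $\alpha$ matches that at index $i-2$ of $\hat{\alpha}$, with identical heights $\alpha(i) = \hat{\alpha}(i)$ or $\alpha(i) = \hat{\alpha}(i-2)$ respectively. All corresponding factors in the products defining $\NormalizationConstant_\alpha$ and $\NormalizationConstant_{\hat{\alpha}}$ therefore cancel, and the problem reduces to proving
\begin{align*}
\frac{\NormalizationConstant_\alpha}{\NormalizationConstant_{\hat{\alpha}}}
= \frac{W_\alpha(j-1) \, W_\alpha(j) \, W_\alpha(j+1)}{W_{\hat{\alpha}}(j-1)},
\end{align*}
where $W_\gamma(i)$ denotes the single-index factor contributed by $i$ to $\NormalizationConstant_\gamma$ ($\qnum{\gamma(i)+1}^{-1}$ at up-wedges, $\qnum{\gamma(i)+1}$ at down-wedges, and $1$ at slopes).

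Next I would set $a = \alpha(j\pm 1)$ so $\alpha(j) = a+1$ in the up-wedge case and $\alpha(j) = a-1$ in the down-wedge case, and branch according to the four possibilities $\alpha(j-2), \alpha(j+2) \in \set{a-1, a+1}$ permitted by the Dyck path constraint. In each of the resulting eight sub-cases, both the $\alpha$-indices $j-1, j+1$ and the $\hat{\alpha}$-index $j-1$ are determined (wedge or slope, and which type), so $W_\alpha(j-1), W_\alpha(j+1), W_{\hat{\alpha}}(j-1)$ are explicit monomials in $\qnum{a}, \qnum{a+1}$. A short tabulation shows that in every sub-case these three factors cancel among themselves, and the only surviving contribution is $W_\alpha(j)$ divided by one $q$-integer: $\qnum{a+1}/\qnum{a+2}$ when $\upwedgeat{j} \in \alpha$, giving $\qnum{\alpha(j)}/\qnum{\alpha(j)+1}$, and $\qnum{a}/\qnum{a+1}$ when $\downwedgeat{j} \in \alpha$, giving $\qnum{\alpha(j)+1}/\qnum{\alpha(j)+2}$.

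The only point requiring slight care is the boundary case $j = 1$ or $j = 2N-1$, where one of $\alpha(j-2), \alpha(j+2)$ does not exist; in that situation the corresponding side simply contributes nothing to either numerator or denominator and the reduced identity holds trivially by the same arithmetic. I do not anticipate any real obstacle here: once the cancellation of faraway factors is established, the lemma becomes an eight-case bookkeeping exercise with $q$-integers.
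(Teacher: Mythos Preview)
Your approach is correct and complete: the far-away factors do cancel as you claim, and the residual ratio $W_\alpha(j-1)\,W_\alpha(j)\,W_\alpha(j+1)/W_{\hat\alpha}(j-1)$ evaluates to the asserted $q$-integer ratio in every sub-case (your phrase ``divided by one $q$-integer'' is slightly imprecise --- in the up-wedge case the three side factors combine to $\qnum{a+1}$, which \emph{multiplies} $W_\alpha(j)=1/\qnum{a+2}$ --- but the arithmetic and the final answers are right).

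The paper takes a different and shorter route that avoids the case analysis entirely. It rewrites the wedge-indexed product as a \emph{step-indexed} product,
\[
\NormalizationConstant_\alpha \;=\; \prod_{i=1}^{2N} \frac{\sqrt{\qnum{\min\{\alpha(i-1),\alpha(i)\}+1}}}{\sqrt{\qnum{\max\{\alpha(i-1),\alpha(i)\}+1}}},
\]
which one checks by grouping the two steps adjacent to each vertex $i$ and observing that their combined contribution to $\qnum{\alpha(i)+1}$ is $1/\qnum{\alpha(i)+1}$, $\qnum{\alpha(i)+1}$, or $1$ according to whether $i$ is an up-wedge, a down-wedge, or a slope (the endpoints contribute $\qnum{1}=1$). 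In this form, removing the wedge at $j$ deletes exactly the two steps $(j-1,j)$ and $(j,j+1)$, whose joint contribution is immediately the desired factor, with no branching on $\alpha(j\pm 2)$ required. The trade-off: your argument is entirely mechanical once set up, whereas the paper's requires the small insight of passing from a vertex product to an edge product, but then dispatches the lemma in one line.
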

\begin{proof}
Observe that the coefficients in~\eqref{eq: coblobas vectors} can be written in the form
\begin{align*}
\prod_{i=1}^{2N} \frac{\sqrt{\qnum{\min\{\alpha(i-1),\alpha(i)\} + 1}}}{\sqrt{\qnum{\max\{\alpha(i-1),\alpha(i)\} + 1}}}
= \Big( \prod_{\upwedgeat{i} \in \alpha} \frac{1}{\qnum{\alpha(i)+1}} \Big)
\Big( \prod_{\downwedgeat{i} \in \alpha} \qnum{\alpha(i)+1} \Big) = \NormalizationConstant_\alpha.
\end{align*}
The expression on the left clearly satisfies 
the recursion~\eqref{eq: recursion for coblobas normalization constants}.
\end{proof}

We 
also make use of the following explicit formulas for the projection $\hat{\pi}$ defined in Equation~\eqref{eq: projection to singlet}.

\begin{lem}
\label{lem: projection formulas}
With $\Sbas \in \Wd_1$ defined in~\eqref{eq: singlet basis vector},
we have $\pi(v)=\hat{\pi}(v) \, \Sbas$ for any $v\in\Wd_{2}\tens\Wd_{2}$, and
\begin{align*}
&\hat{\pi}(\Wbas_{0}\tens\Wbas_{0})=0,\qquad\qquad\quad\qquad\qquad\hat{\pi}(\Wbas_{1}\tens\Wbas_{1})=0,\\
&\hat{\pi}(\Wbas_{0}\tens\Wbas_{1})=\frac{q^{-1}-q}{\qnum 2},\qquad\qquad\qquad\hat{\pi}(\Wbas_{1}\tens\Wbas_{0})=\frac{1-q^{-2}}{\qnum 2}.
\end{align*}
\end{lem}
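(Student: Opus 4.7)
The plan is a direct computation in the four-dimensional space $\Wd_2 \tens \Wd_2$, invoking only the explicit decomposition $\Wd_2 \tens \Wd_2 \isom \Wd_1 \oplus \Wd_3$ given just before the statement. First I would observe that the assertion $\pi(v) = \hat{\pi}(v)\,\Sbas$ is immediate from the definitions: by~\eqref{eq: projection to singlet}, $\hat{\pi}$ is obtained from $\pi$ by the identification $\Sbas \mapsto 1$ of the one-dimensional invariant subspace $\Wd_1 \subset \Wd_2 \tens \Wd_2$ with $\bC$. Since $\pi$ lands in $\bC \cdot \Sbas$, the scalar it produces against $\Sbas$ is precisely $\hat{\pi}(v)$.

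Next I would compute the four values of $\hat{\pi}$ on the product basis by inverting the change of basis from $\set{\Wbas_0 \tens \Wbas_0,\,\Wbas_0 \tens \Wbas_1,\,\Wbas_1 \tens \Wbas_0,\,\Wbas_1 \tens \Wbas_1}$ to $\set{\Sbas,\TRbas_+,\TRbas_0,\TRbas_-}$ and reading off the coefficient of $\Sbas$. Two of the cases are free: from $\TRbas_+ = \Wbas_0 \tens \Wbas_0$ and $\TRbas_- = \qnum{2}\,\Wbas_1 \tens \Wbas_1$ we see that $\Wbas_0 \tens \Wbas_0$ and $\Wbas_1 \tens \Wbas_1$ lie entirely in $\Wd_3$, so $\hat{\pi}$ annihilates them.

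The remaining two values come from solving the $2 \times 2$ linear system
\begin{align*}
\TRbas_0 &= q^{-1}\,\Wbas_0 \tens \Wbas_1 + \Wbas_1 \tens \Wbas_0, \\
(q-q^{-1})\,\Sbas &= \Wbas_1 \tens \Wbas_0 - q\,\Wbas_0 \tens \Wbas_1
\end{align*}
for the vectors $\Wbas_0 \tens \Wbas_1$ and $\Wbas_1 \tens \Wbas_0$ in terms of $\TRbas_0$ and $\Sbas$. Subtracting gives $\TRbas_0 - (q-q^{-1})\,\Sbas = (q^{-1}+q)\,\Wbas_0 \tens \Wbas_1 = \qnum{2}\,\Wbas_0 \tens \Wbas_1$, from which the coefficient of $\Sbas$ in $\Wbas_0 \tens \Wbas_1$ is $-(q-q^{-1})/\qnum{2} = (q^{-1}-q)/\qnum{2}$. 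Substituting back yields the coefficient of $\Sbas$ in $\Wbas_1 \tens \Wbas_0$ as $(1-q^{-2})/\qnum{2}$, matching the claimed formulas.

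There is essentially no conceptual obstacle here; the only risk is an arithmetic slip in keeping track of factors of $q$ and $q^{-1}$ when inverting the $2 \times 2$ system, so I would double-check by verifying that the decomposition $v = \hat{\pi}(v)\,\Sbas + (\text{component in } \Wd_3)$ reproduces each product basis vector correctly.
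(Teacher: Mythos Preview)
Your computation is correct and complete: the change-of-basis argument recovers all four values of $\hat{\pi}$ exactly as stated. The paper itself does not supply a proof here but simply cites \cite[Lemma~2.3]{KP-pure_partition_functions_of_multiple_SLEs}, so your direct verification is in fact more self-contained than what the paper provides.
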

\begin{proof}
See, e.g.,~\cite[Lemma~2.3]{KP-pure_partition_functions_of_multiple_SLEs}.
\end{proof}


\begin{prop}\label{prop: coblo walk projection properties}
The vectors $( \Coblobastwodim_\alpha )_{\alpha \in \DP}$, 
defined in~\eqref{eq: coblobas vectors}, 
satisfy the projection properties~\eqref{eq: coblo walk projection conditions}.
\end{prop}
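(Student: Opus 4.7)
The plan is to reduce the computation of $\hat{\pi}_j(\Coblobastwodim_\alpha)$ to a two-factor projection at the intermediate stage $u_\alpha^{(j+1)}$ of the recursive construction~\eqref{eq: coblobas vector recursive construction}, and then propagate through the remaining recursion. Specifically, $u_\alpha^{(2N)}$ is built from $u_\alpha^{(j+1)}$ by the steps $j+2, \ldots, 2N$, which only add $\Wbas_0$ or $\Wbas_1$ tensor factors on the left and apply $F$ to the earlier-built vector. Meanwhile, $\hat{\pi}_j$ acts on the tensor positions corresponding to steps $j$ and $j+1$ of the walk, which are the leftmost two factors of the $u_\alpha^{(j+1)}$-portion of $u_\alpha^{(2N)}$. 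So it suffices to understand $\hat{\pi}$ applied to these two factors of $u_\alpha^{(j+1)}$, and then to verify that the remaining recursive construction carries the projected vector to $u_{\alpha \removewedge{j}}^{(2N-2)}$.

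The propagation step relies on $\Uqsltwo$-equivariance of the singlet projection $\pi\colon \Wd_{2}\tens\Wd_{2} \to \Wd_1 \isom \bC$; since $\epsilon(F)=0$, this equivariance implies that $\hat{\pi}$ on the leftmost two factors commutes with the action of $F$ on the full tensor product (and trivially with prepending $\Wbas_0$ or $\Wbas_1$ further to the left). Consequently, each recursive step~\eqref{eq: coblobas vector recursive construction} applied for $k \geq j+2$ commutes with the two-factor projection. Moreover, the height-dependent coefficient $q^{\alpha(k-1)}/\qnum{\alpha(k-1)}$ appearing at step $k$ for $\alpha$ coincides with the corresponding coefficient $q^{\hat{\alpha}(k-3)}/\qnum{\hat{\alpha}(k-3)}$ of the recursion for $\hat{\alpha} = \alpha \removewedge{j}$, since $\hat{\alpha}(k-3) = \alpha(k-1)$ for all $k \geq j+2$ (including the edge case $k = j+2$, which reduces to $\alpha(j-1) = \alpha(j+1)$, an identity that holds precisely because we removed a wedge at $j$).

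The remaining ingredient is a direct case analysis of $\hat{\pi}$ on the leftmost two factors of $u_\alpha^{(j+1)}$, depending on the type of position $j$. Write $\lambda = \alpha(j-1)$. Expanding $u_\alpha^{(j+1)}$ via~\eqref{eq: coblobas vector recursive construction} and the coproduct~\eqref{eq: coproduct}, and applying Lemma~\ref{lem: projection formulas}, the two slope cases collapse to zero (with a short $q$-integer cancellation needed in the down-slope subcase). In the up-wedge case, the projection yields $\tfrac{\qnum{\lambda+2}}{\qnum{2}\qnum{\lambda+1}}\,u_\alpha^{(j-1)}$, after simplifying the identity $q^{-1}\qnum{\lambda+1} + q^{\lambda+1} = \qnum{\lambda+2}$; in the down-wedge case, only the $\Wbas_0 \tens \Wbas_1$ contribution survives, giving $-\tfrac{1}{\qnum{2}}\,u_\alpha^{(j-1)}$. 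Combining with the normalization prefactor $\qnum{2}^N \NormalizationConstant_\alpha$ from~\eqref{eq: coblobas vectors} and the recursion of Lemma~\ref{lem: recursion for normalization constants} for $\NormalizationConstant_\alpha$, the $\qnum{2}$ and height factors align to produce exactly $\Coblobastwodim_{\alpha \removeupwedge{j}}$ in the up-wedge case and $-\tfrac{\qnum{\alpha(j)+1}}{\qnum{\alpha(j)+2}}\,\Coblobastwodim_{\alpha \removedownwedge{j}}$ in the down-wedge case, as required by~\eqref{eq: coblo walk projection conditions}.

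The main obstacle is the propagation step: one must justify that the two-factor singlet projection truly commutes with the entire tail of the recursive construction and that the height coefficients for the reduced walk agree with those for the original walk past position $j$. This is conceptual rather than computational, relying on $\Uqsltwo$-equivariance of $\pi$ together with careful tracking of tensor-position conventions and of the height shift induced by wedge removal. Once this commutation is in hand, the remaining case analysis at the stage $u_\alpha^{(j+1)}$ is routine.
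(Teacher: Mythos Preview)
Your proposal is correct and follows essentially the same approach as the paper: both reduce to computing $\hat{\pi}_j$ at the intermediate stage $u_\alpha^{(j+1)}$ by a four-case analysis (up-slope, down-slope, up-wedge, down-wedge) using Lemma~\ref{lem: projection formulas}, and then combine with the normalization recursion of Lemma~\ref{lem: recursion for normalization constants}. The one place where you are more explicit than the paper is the propagation step: the paper simply remarks that ``the projection $\pi_j$ acts locally on the $j$:th and $(j+1)$:st tensor components'' and proceeds to write $\hat{\pi}_j(u_\alpha^{(2N)})$ directly as a multiple of $u_{\alpha\removewedge{j}}^{(2N-2)}$, whereas you justify this commutation via the $\Uqsltwo$-equivariance of $\hat{\pi}$ (so that $\hat{\pi}_j$ intertwines the $F$-actions on $\Wd_2^{\tens k}$ and $\Wd_2^{\tens(k-2)}$) together with the observation that wedge removal preserves heights past position $j$.
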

\begin{proof}
Fix $j\in\set{1,\ldots,2N-1}$. As the projection $\pi_j$ acts locally
on the $j$:th and $(j+1)$:st tensor components, the value of 
$\hat{\pi}_j(\Coblobastwodim_\alpha)$ can be calculated using 
the explicit construction~\eqref{eq: coblobas vector recursive construction} 
and the 
recursion~\eqref{eq: recursion for coblobas normalization constants} of 
Lemma~\ref{lem: recursion for normalization constants} for the normalization 
constants appearing in the definition~\eqref{eq: coblobas vectors} 
of~$\Coblobastwodim_\alpha$.
We treat separately each possible local shape of a Dyck path $\alpha$
at $j$, i.e., the cases depicted in Figure~\ref{fig: wedges and slopes}.

Suppose first that $\alpha$ contains a slope 
at $j$, i.e., $\slopeat{j} \in \alpha$.
We need to show that in this case, we have 
$\hat{\pi}_j(\Coblobastwodim_\alpha) = 0$, or, equivalently, that
$\hat{\pi}_j(u_{\alpha}^{(j+1)}) = 0$. 
Depending whether the slope is an up-slope or a down-slope,
we study the two cases in~\eqref{eq: coblobas vector recursive construction}.

In the easiest case of an up-slope, that is, when we have $\alpha(j) = \alpha(j-1)+1$ and 
$\alpha(j+1) = \alpha(j)+1$, the tensor components $j$ and $j+1$ in 
$u_{\alpha}^{(j+1)}$ (counting from the right)
are proportional to $\Wbas_0 \tens \Wbas_0$,
and $\hat{\pi}_j$ thus annihilates the vector $u_{\alpha}^{(j+1)}$ 
by Lemma~\ref{lem: projection formulas}(a). 
Equations~\eqref{eq: coblobas vector recursive construction} 
and~\eqref{eq: coblobas vectors} then show that we also have
$\hat{\pi}_j(\Coblobastwodim_\alpha) = 0$, as asserted in~\eqref{eq: coblo walk projection conditions}.

In the case of a down-slope, that is, when we have $\alpha(j) = \alpha(j-1)-1$ and 
$\alpha(j+1) = \alpha(j)-1$, the tensor components $j$ and $j+1$ in 
$u_{\alpha}^{(j+1)}$ have several terms. To perform the calculations,
it is convenient to first write down the action of $F$ on 
$u_{\alpha}^{(j)}$.
The action is given by the coproduct~\eqref{eq: coproduct} as follows:
\begin{align*}
(q-q^{-1}) \, F.u_\alpha^{(j)} \; & = 
F. \left( \Wbas_1 \tens u_\alpha^{(j-1)} 
- \frac{q^{\alpha(j-1)}}{\qnum{\alpha(j-1)}}\,\Wbas_0 \tens F.u_\alpha^{(j-1)} \right) \\
\; & = F.\Wbas_1 \tens 1.u_\alpha^{(j-1)} 
+ K^{-1}.\Wbas_1 \tens F.u_\alpha^{(j-1)}
- q^{\alpha(j-1)}\, \frac{F.\Wbas_0 \tens F.u_\alpha^{(j-1)}
- K^{-1}.\Wbas_0 \tens F^2.u_\alpha^{(j-1)}}{\qnum{\alpha(j-1)}} \\
\; & = q\, \Wbas_1 \tens F.u_\alpha^{(j-1)} 
- \frac{q^{\alpha(j-1)}}{\qnum{\alpha(j-1)}}\left( \Wbas_1 \tens F.u_\alpha^{(j-1)}
- q^{-1} \, \Wbas_0 \tens F^2.u_\alpha^{(j-1)} \right) \\
\; & = \left( q - \frac{q^{\alpha(j-1)}}{\qnum{\alpha(j-1)}} \right) 
\Wbas_1 \tens F.u_\alpha^{(j-1)}
- \frac{q^{\alpha(j-1)-1}}{\qnum{\alpha(j-1)}}\, \Wbas_0 \tens F^2.u_\alpha^{(j-1)},
\end{align*}
where we used the identities $F.\Wbas_1 = 0$,
$F.\Wbas_0 = \Wbas_1$, $K^{-1}.\Wbas_1 = q\,\Wbas_1$, 
and $K^{-1}.\Wbas_0 = q^{-1}\,\Wbas_0$.
The vector $u_\alpha^{(j+1)}$ now reads
\begin{align*}
u_\alpha^{(j+1)} \propto \; &
\Wbas_1 \tens u_\alpha^{(j)} 
- \frac{q^{\alpha(j)}}{\qnum{\alpha(j)}}\,\Wbas_0 \tens F.u_\alpha^{(j)} \\
\propto \; &  \Wbas_1 \tens \left(
\Wbas_1 \tens u_\alpha^{(j-1)} 
- \frac{q^{\alpha(j-1)}}{\qnum{\alpha(j-1)}}\,\Wbas_0 \tens F.u_\alpha^{(j-1)} \right)  \\
\; & - \frac{q^{\alpha(j)}}{\qnum{\alpha(j)}}\,\Wbas_0 \tens \left(
\left( q - \frac{q^{\alpha(j-1)}}{\qnum{\alpha(j-1)}} \right) 
\Wbas_1 \tens F.u_\alpha^{(j-1)}
- \frac{q^{\alpha(j-1)-1}}{\qnum{\alpha(j-1)}}\, \Wbas_0 \tens F^2.u_\alpha^{(j-1)} \right).
\end{align*}

Using Lemma~\ref{lem: projection formulas}(a), the down-step
$\alpha(j) = \alpha(j-1)-1$, and the geometric sum expansion
of the $q$-integers $\qnum{n} = q^{n-1} + q^{n-3} + \cdots + q^{3-n} + q^{1-n}$, 
we verify that
\begin{align*}
\hat{\pi}_j(u_\alpha^{(j+1)})
\propto \; &  
\left(
- \frac{q^{\alpha(j-1)}}{\qnum{\alpha(j-1)}}\,
\hat{\pi} ( \Wbas_1 \tens \Wbas_0 )
- \frac{q^{\alpha(j)+1} - \frac{q^{\alpha(j)+\alpha(j-1)}}{\qnum{\alpha(j-1)}}}{\qnum{\alpha(j)}}\,
\hat{\pi} ( \Wbas_0 \tens \Wbas_1 ) 
\right) \tens F.u_\alpha^{(j-1)} \\
= \; & \left(
- \frac{q^{\alpha(j)+1}}{\qnum{\alpha(j-1)}}\,
\frac{1-q^{-2}}{\qnum 2}
- \frac{q^{\alpha(j)+1} - \frac{q^{2\alpha(j)+1}}{\qnum{\alpha(j-1)}}}{\qnum{\alpha(j)}}\, \frac{q^{-1}-q}{\qnum 2} \right) \tens F.u_\alpha^{(j-1)} \\
= \; & \frac{q^{\alpha(j)+1}(q-q^{-1})}{\qnum 2 \qnum{\alpha(j-1)} \qnum{\alpha(j)}} \times 
\left( - q^{-1}\,\qnum{\alpha(j)}
+ \qnum{\alpha(j)+1} - q^{\alpha(j)} \right) \tens F.u_\alpha^{(j-1)}  \\
= \; & 0.
\end{align*}
It thus follows by 
Equations~\eqref{eq: coblobas vector recursive construction} 
and~\eqref{eq: coblobas vectors} 
that the asserted property $\hat{\pi}_j(\Coblobastwodim_\alpha) = 0$ holds
also with $\alpha$ having an down-slope at $j$.

Suppose then that $\alpha$ contains an up-wedge 
at $j$, i.e., $\upwedgeat{j} \in \alpha$.
We need to show that in this case, we have 
$\hat{\pi}_j(\Coblobastwodim_\alpha) 
= \Coblobastwodim_{\alpha \removeupwedge{j}}$. 
Now $\alpha(j) = \alpha(j-1)+1$ and $\alpha(j+1) = \alpha(j)-1$
and the vector $u_\alpha^{(j+1)}$ reads
\begin{align*}
u_\alpha^{(j+1)} 
= \; & \frac{1}{q-q^{-1}}\left( 
\Wbas_1 \tens (\Wbas_0 \tens u_\alpha^{(j-1)})
- \frac{q^{\alpha(j)}}{\qnum{\alpha(j)}}\,
\Wbas_0 \tens F.(\Wbas_0 \tens u_\alpha^{(j-1)}) \right) \\
= \; & \frac{1}{q-q^{-1}}\left(
\Wbas_1 \tens (\Wbas_0 \tens u_\alpha^{(j-1)})
- \frac{q^{\alpha(j)}}{\qnum{\alpha(j)}}\,\Wbas_0 \tens 
(F.\Wbas_0 \tens 1.u_\alpha^{(j-1)} + K^{-1}.\Wbas_0 \tens F.u_\alpha^{(j-1)}) \right) \\
= \; & \frac{1}{q-q^{-1}}\left(
\Wbas_1 \tens (\Wbas_0 \tens u_\alpha^{(j-1)}) 
- \frac{q^{\alpha(j)}}{\qnum{\alpha(j)}}\,\Wbas_0 \tens (\Wbas_1 \tens u_\alpha^{(j-1)} + q^{-1}\, \Wbas_0 \tens F.u_\alpha^{(j-1)}) \right).
\end{align*}

Applying the projection $\hat{\pi}_j$ on both sides and using
Lemma~\ref{lem: projection formulas}(a), we obtain
\begin{align*}
\hat{\pi}_j ( u_\alpha^{(j+1)} )
= \; & \frac{1}{q-q^{-1}}\left(
\hat{\pi} ( \Wbas_1 \tens \Wbas_0)
- \frac{q^{\alpha(j)}}{\qnum{\alpha(j)}}\, 
\hat{\pi} ( \Wbas_0 \tens \Wbas_1) \right) \tens u_\alpha^{(j-1)} \\
= \; & \frac{1}{q-q^{-1}} \left(
\frac{1-q^{-2}}{\qnum 2}
- \frac{q^{\alpha(j)}}{\qnum{\alpha(j)}}\,\frac{q^{-1}-q}{\qnum 2}
\right) \times u_\alpha^{(j-1)} \\
= \; & \frac{1}{\qnum 2 \qnum{\alpha(j)}} 
\left( q^{-1}\,\qnum{\alpha(j)} + q^{\alpha(j)} \right)
\times u_\alpha^{(j-1)}.
\end{align*}
Using again the geometric sum expansion
of the $q$-integers, 
we simplify
the multiplicative factor by
$q^{-1}\,\qnum{\alpha(j)} + q^{\alpha(j)} =  \qnum{\alpha(j)+1}$,
which yields
\begin{align*}
\hat{\pi}_j ( u_\alpha^{(j+1)} )
= \; & \frac{\qnum{\alpha(j)+1}}{\qnum 2 \qnum{\alpha(j)}} \times u_\alpha^{(j-1)} .
\end{align*}
By Equations~\eqref{eq: coblobas vector recursive construction} 
and~\eqref{eq: coblobas vectors} and the 
recursion~\eqref{eq: recursion for coblobas normalization constants}, the 
asserted property~\eqref{eq: coblo walk projection conditions} follows:
\begin{align*}
\hat{\pi}_j ( \Coblobastwodim_\alpha ) 
= \; & \qnum{2}^N \NormalizationConstant_\alpha \times 
\hat{\pi}_j ( u_{\alpha}^{(2N)} ) \\
= \; & \qnum{2}^N \frac{\qnum{\alpha(j)}}{\qnum{\alpha(j)+1}} \times
\NormalizationConstant_{\alpha\removeupwedge{j}} \times 
\frac{\qnum{\alpha(j)+1}}{\qnum 2 \qnum{\alpha(j)}}
\times u_{\alpha \removeupwedge{j}}^{(2N-2)} \\
= \; & \qnum{2}^{N-1} \NormalizationConstant_{\alpha\removeupwedge{j}} \times
u_{\alpha \removeupwedge{j}}^{(2N-2)} \\
= \; & \Coblobastwodim_{\alpha \removeupwedge{j}}.
\end{align*}

Finally, suppose that $\alpha$ contains a down-wedge 
at $j$, i.e., $\downwedgeat{j} \in \alpha$.
We need to show that in this case, we have 
$\hat{\pi}_j(\Coblobastwodim_\alpha) =
-\frac{\qnum{\alpha(j)+1}}{\qnum{\alpha(j)+2}} \times \Coblobastwodim_{\alpha\removedownwedge{j}}$. 
Now $\alpha(j) = \alpha(j-1)-1$ and $\alpha(j+1) = \alpha(j)+1$
and $u_\alpha^{(j+1)}$ reads
\begin{align*}
u_\alpha^{(j+1)} 
= \; & \frac{1}{q-q^{-1}} \left( 
\Wbas_0 \tens (\Wbas_1 \tens u_\alpha^{(j-1)})
- \frac{q^{\alpha(j-1)}}{\qnum{\alpha(j-1)}}\,\Wbas_0 \tens (\Wbas_0 \tens F.u_\alpha^{(j-1)}) 
\right).
\end{align*}
Applying the projection $\hat{\pi}_j$ on both sides and using
Lemma~\ref{lem: projection formulas}(a), we obtain
\begin{align*}
\hat{\pi}_j ( u_\alpha^{(j+1)} )
= \; & \frac{1}{q-q^{-1}} \left( 
\hat{\pi} (\Wbas_0 \tens \Wbas_1) \right) \tens u_\alpha^{(j-1)} 
= \frac{1}{q-q^{-1}} \frac{q^{-1}-q}{\qnum 2} \times u_\alpha^{(j-1)}
= - \frac{1}{\qnum 2} \times u_\alpha^{(j-1)},
\end{align*}
and again, by Equations~\eqref{eq: coblobas vector recursive construction} 
and~\eqref{eq: coblobas vectors} 
and the 
recursion~\eqref{eq: recursion for coblobas normalization constants}, the asserted
property~\eqref{eq: coblo walk projection conditions} follows:
\begin{align*}
\hat{\pi}_j ( \Coblobastwodim_\alpha ) 
= \; & \qnum{2}^N \NormalizationConstant_\alpha \times 
\hat{\pi}_j ( u_{\alpha}^{(2N)} ) \\
= \; & \qnum{2}^N \frac{\qnum{\alpha(j)+1}}{\qnum{\alpha(j)+2}}  \times
\NormalizationConstant_{\alpha\removedownwedge{j}} \times 
- \frac{1}{\qnum 2} 
\times u_{\alpha \removedownwedge{j}}^{(2N-2)} \\
= \; & - \frac{\qnum{\alpha(j)+1}}{\qnum{\alpha(j)+2}} 
\qnum{2}^{N-1} \NormalizationConstant_{\alpha\removedownwedge{j}} \times 
u_{\alpha \removedownwedge{j}}^{(2N-2)} \\
= \; & - \frac{\qnum{\alpha(j)+1}}{\qnum{\alpha(j)+2}} 
\times \Coblobastwodim_{\alpha \removedownwedge{j}}.
\end{align*}
This concludes the proof.
\end{proof}

\subsubsection{\textbf{Proof of Proposition~\ref{prop: conformal block vectors}}}

The vectors $(\Coblobastwodim_\alpha)_{\alpha \in \DP}$ constructed
in~\eqref{eq: coblobas vector recursive construction} and~\eqref{eq: coblobas vectors}
lie in the space $\HWsp_{2N}^{(0)}$ by Corollary~\ref{cor: coblo walk hwv properties}
and satisfy the projection properties by Proposition~\ref{prop: coblo walk projection properties}.
Such a collection is unique by Corollary~\ref{cor: uniqueness}.
In Section~\ref{sss: linear independence} we verified that
$(\Coblobastwodim_\alpha)_{\alpha \in \DP_N}$ forms a basis of $\HWsp_{2N}^{(0)}$.
$\hfill \qed$

\bibliographystyle{annotate}

\newcommand{\etalchar}[1]{$^{#1}$}

\end{document}